\newcommand*\rel@kern[1]{\kern#1\dimexpr\macc@kerna}
\newcommand*\widebar[1]{%
  \begingroup
  \def\mathaccent##1##2{%
    \rel@kern{0.8}%
    \overline{\rel@kern{-0.8}\macc@nucleus\rel@kern{0.2}}%
    \rel@kern{-0.2}%
  }%
  \macc@depth\@ne
  \let\math@bgroup\@empty \let\math@egroup\macc@set@skewchar
  \mathsurround\z@ \frozen@everymath{\mathgroup\macc@group\relax}%
  \macc@set@skewchar\relax
  \let\mathaccentV\macc@nested@a
  \macc@nested@a\relax111{#1}%
  \endgroup
}
\theoremstyle{plain}
\newtheorem{theorem}{Theorem}[section]
\newtheorem{corollary}[theorem]{Corollary}
\newtheorem{lemma}[theorem]{Lemma}
\newtheorem{claim}[theorem]{Claim}
\newtheorem{conjecture}[theorem]{Conjecture}
\newtheorem*{lemmaNoNum}{Lemma}
\newtheorem{definition}[theorem]{Definition}
\newtheorem{example}[theorem]{Example}
\theoremstyle{remark}
\theoremstyle{plain}
\def\R{{\mathbb{R}}}
\def\N{{\mathbb{N}}}
\renewcommand{\Pr}{\mathop{\bf Pr\/}}
\newcommand{\E}{\mathop{\bf E\/}}
\def\var{{\mathop{\bf Var\/}}}
\def\pmone{{\{\pm1\}}}
\def\sgn{\mathrm{sgn}}
\def\poly{{\mathrm{poly}}}
\renewcommand{\l}{\ell}
\newcommand{\remove}[1]{}
\newcommand{\mathify}[1]{\ifmmode{#1}\else\mbox{$#1$}\fi}
\newcommand{\complexityclass}[1]{{\bf{#1}}\xspace}
\newcommand{\BQP}{\complexityclass{BQP}}
\newcommand{\ACzero}{\complexityclass{AC^0}}
\newcommand{\eps}{\varepsilon}
\newcommand{\NN}{\mathcal{N}}
\newcommand{\cD}{{\mathcal{D}}}
\newcommand{\cDUk}{{\mathcal{D}_{U,k}}}
\newcommand{\cU}{{\mathcal{U}}}
\newcommand{\U}{{U}}
\newcommand{\T}{\mathsf{T}}
\newcommand{\polylog}{\mathrm{polylog}}
\renewcommand{\hat}{\widehat}
\renewcommand{\tilde}{\widetilde}
\renewcommand{\next}{\mathbf{Next}}
\newcommand{\ket}[1]{{\left\vert #1 \right\rangle}}
\newcommand{\ignore}[1]{}
\begin{document}

\title{Towards Optimal Separations between Quantum and Randomized Query Complexities}

\author{Avishay Tal\thanks{\texttt{atal@berkeley.edu}. Department of Electrical Engineering and Computer Sciences, University of California at Berkeley.}}
\maketitle

\begin{abstract}
The query model offers a concrete setting where quantum algorithms are provably superior to randomized algorithms. Beautiful results by Bernstein-Vazirani, Simon,  Aaronson, and others presented partial Boolean functions that can be computed by quantum algorithms making much fewer queries compared to their randomized analogs. To date, separations of $O(1)$ vs. $\sqrt{N}$ between quantum and randomized query complexities remain the state-of-the-art (where $N$ is the input length), leaving open the question of whether $O(1)$ vs. $N^{1/2+\Omega(1)}$ separations are possible?

We answer this question in the affirmative. Our separating problem is a variant of the Aaronson-Ambainis $k$-fold Forrelation problem. We show that our variant:
\begin{enumerate}
\item Can be solved by a quantum algorithm making $2^{O(k)}$ queries to the inputs.
\item Requires at least $\tilde{\Omega}(N^{2(k-1)/(3k-1)})$ queries for any randomized algorithm.
\end{enumerate}
For any constant $\varepsilon>0$, this gives a  $O(1)$  vs. $N^{2/3-\varepsilon}$ separation between the quantum and randomized query complexities of partial Boolean functions. 

Our proof is Fourier analytical and uses new bounds on the Fourier spectrum of classical decision trees, which could be of independent interest. 

Looking forward, we conjecture that the Fourier bounds could be further improved in a precise manner, and show that such conjectured bounds imply optimal $O(1)$ vs. $N^{1-\varepsilon}$ separations between the quantum and randomized query complexities of partial Boolean functions.
\end{abstract}

\thispagestyle{empty}

\newpage

\clearpage 
\thispagestyle{empty}
\tableofcontents
\newpage

\clearpage 

\setcounter{page}{1}
\section{Introduction}
The query model (or black-box model) offers a concrete setting where quantum algorithms are provably superior to their randomized counterparts. 
Many well-known quantum algorithms can be cast in this model, such as  Grover's search~\cite{Grover96}, Deutsch-Jozsa's algorithm~\cite{DJ92}, Bernstein-Vazirani's algorithm~\cite{BV97}, Simon's algorithm~\cite{Simon97}, and Shor's period-finding algorithm (which is a major component in Shor's factoring algorithm \cite{Shor94}).
In the query model, we seek to answer a question about the input by making as few queries to it as possible. For deterministic algorithms, this is also known as the decision tree model, where the decision tree depth equals the number of queries. The randomized and quantum versions of this model are very well-studied with many known connections and separations between the models in different settings (cf. the wonderful survey of \cite{BuhrmanW02} or the more recent work of~\cite{ABK16}).

A beautiful line of work showed that for partial Boolean functions on $N$ variables, the quantum query complexity could be exponentially smaller (or even less) than the randomized query complexity.
Separations of $O(\log N)$ vs.\ $\sqrt{N}$ date back to the work of Simon~\cite{Simon97} and similarly for a problem introduced by Childs et al.~\cite{ChildsCDFGS03}.
In \cite{Aaronson10,AA15}, it was shown that the problem of {\em Forrelation} exhibits a $1$ vs.\ $\tilde{\Omega}(\sqrt{N})$ separation between the quantum and randomized query complexities.

Buhrman et al. \cite{BFNR08} and Aaronson and Ambainis \cite{AA15} asked what are the best possible separations between quantum and randomized query complexities? Aaronson and Ambainis presented this question as a fundamental question whose answer will shed light on the differences between the two models and gave several results towards its answer. 

On the one hand, they ruled out $O(1)$ vs.\ $\Omega(N)$ separations. More precisely, they showed that for any constant $t$, any quantum algorithm that makes $t$ queries can be simulated (up to small error) by a randomized algorithm making  $\tilde{O}(N^{1-1/2t})$ non-adaptive queries.  For $t=1$, this shows that Forrelation is an extremal separation.

On the other hand, towards obtaining optimal $t$ vs.\ $\Omega(N^{1-1/2t})$ separations, they suggested a candidate: the $k$-fold Forrelation problem (to be defined shortly), where $k=2t$. They showed that a quantum algorithm making $\lceil{k/2\rceil} = t$ queries can solve $k$-fold Forrelation. Moreover, they conjectured that any randomized algorithm would require $\Omega(N^{1-1/k}) = \Omega(N^{1-1/2t})$ queries. While Aaronson and Ambainis proved the case $k=2$ in their conjecture, they left all other cases wide open. For $k> 2$, the lower bounds they obtained on $k$-fold Forrelation are of the form $\Omega(\sqrt{N}/(\log N)^{7/2})$.

Aaronson and Ambainis further noted that in all of the above exponential separations, the randomized query complexity did not surpass $\sqrt{N}$. They asked whether separations of $\polylog(N)$ vs.\ $N^{1/2+\Omega(1)}$ are possible?

We answer their question in the affirmative. We revisit their candidate, changing it in a way that would be crucial for our analysis. First, we define the {\bf Rorrelation} of $k$ vectors, with respect to an $N$-by-$N$ orthogonal matrix $\U$.
\begin{definition}
Let $\U\in \R^{N\times N}$ be an orthogonal matrix. The {\bf $k$-fold Rorrelation} of vectors $z^{(1)}, \ldots, z^{(k)}\in \R^N$ with respect to $\U$ is defined as
$$
	\phi_{\U}(z^{(1)}, \ldots, z^{(k)}) = \frac{1}{N}\cdot \sum_{i_1=1}^{n} \ldots \sum_{i_k=1}^{n} z^{(1)}_{i_1} \cdot \U_{i_1,i_2} \cdot z^{(2)}_{i_2} \cdot  \U_{i_2,i_3}  \cdots \U_{i_{k-1},i_k} \cdot z^{(k)}_{i_k}\;.
$$
\end{definition}
One can pick $\U$ to be the $N$-by-$N$ Hadamard matrix, as suggested by Aaronson and Ambainis. 
We, however, pick $\U$ uniformly at random from all $N$-by-$N$ orthogonal matrices.%
\footnote{Aaronson Ambainis called their variant Forrelation, as in the case $k=2$ it measures {\bf correlation} after applying the {\bf F}ourier/Hadamard transform on $z^{(1)}$. We measure correlation after applying a {\bf R}andomly chosen orthogonal matrix, hence the name {\bf Rorrelation}.}
This will play a major role later on, since we rely on properties that hold with high probability for a random orthogonal matrix, but do not hold for the Hadamard matrix (see Def.~\ref{def:good}).

It is not hard to show that the $k$-fold Rorrelation of any vectors $z^{(1)}, \ldots, z^{(k)} \in \{-1,1\}^N$ is at most $1$ in absolute value.
The computational task we consider in this paper, called the {\bf $k$-fold Rorrelation Problem}, asks to distinguish between the following two cases:\begin{description}
\item[YES Instances:]  vectors $z^{(1)}, \ldots, z^{(k)} \in \{-1,1\}^N$ with 
$\phi_{\U}(z^{(1)}, \ldots, z^{(k)})\ge 2^{-k}$, and  
\item[NO Instances:] vectors $z^{(1)}, \ldots, z^{(k)} \in \{-1,1\}^N$ with 
$|\phi_{\U}(z^{(1)}, \ldots, z^{(k)})|\le \frac{1}{2} \cdot 2^{-k}$.
\end{description}
We shall show that while the $k$-fold Rorrelation problem is easy in the quantum query model (for any choice of $\U$), it requires many queries in the classical setting (for most choices of $\U$).
Namely, our main separation will show that:
\begin{enumerate}
	\item For any $N$-by-$N$ orthogonal matrix $U$, there exists a quantum algorithm making at most $2^{O(k)}$ queries that solves the $k$-fold Rorrelation problem (with respect to $U$) with success probability at least $2/3$.
	\item  For most $N$-by-$N$ orthogonal matrices $U$, any randomized algorithm   that solves the $k$-fold Rorrelation problem (with respect to $U$)  with success probability at least $2/3$, must make at least $\Omega(N^{2(k-1)/(3k-1)}/k \log N)$ queries to the inputs.
\end{enumerate}
So far, we left the choice for the value of $k$ to be arbitrary. We think of $k$ as either a fixed constant or a slow-growing function of $N$, in particular $k = o(\log N)$.
By picking $k$ to be a large constant, the above discussion gives a $O(1)$ vs.\ $\Omega(N^{2/3-\eps})$ separation of quantum and randomized query complexities, for any small constant $\eps>0$.
By picking $k=O(\log \log N)$, we get a $O(\log N)$ vs.\ $N^{2/3-o(1)}$ separation of the two measures.

Before explaining more about our techniques and the potential room for improvement, we describe an application of our results to another setting.

\paragraph{Application: Power-$(2+2/3)$ Separations for Total Boolean Functions.}
While for partial Boolean functions (or promise problems) exponential separations are possible between randomized and quantum query complexities, for total functions (i.e., Boolean functions that are defined on the entire domain $\{-1,1\}^N$) the picture is quite different. The seminal work of Beals et al. \cite{BBCMW01} showed (among others) that quantum query complexity and randomized query complexity are at most power-$6$ apart. That is, $R(f) \le O(Q(f)^6)$ for any total Boolean function $f$, where $R(f)$ and $Q(f)$ denote the randomized and quantum query complexities of $f$, respectively.

On the other hand, Grover's search demonstrated that for the OR function $R(f) \ge \Omega(Q(f)^2)$ \cite{Grover96}, and this is tight~\cite{BBBV97}.
Two decades later, Aaronson, Ben-David, Kothari~\cite{ABK16} exhibited the first super-quadratic separations between $Q(f)$ and $R(f)$ for total functions, surprisingly improving Grover's separation that was believed to be optimal.
Their work presented a power-2.5 separation based on the ``cheat-sheet'' technique applied to $2$-fold Forrelation.
More generally, they showed that any $N^{o(1)}$ vs.\ $N^{c-o(1)}$ separation between the quantum and randomized query complexities of partial functions, implies a power-$(2+c)$ separation for total Boolean functions.
Plugging in our result, yields a power-$(2+2/3)$ separation for total Boolean functions.
In other words, the transformation of \cite{ABK16} applied to $k$-fold Rorrelation yields a total function  $f_{\mathrm{CS}}$ such that $R(f_{\mathrm{CS}}) \ge Q(f_{\mathrm{CS}})^{2+\frac{2}{3}-o(1)}$.

\subsection{Our Techniques}

\subsection*{Quantum Query Complexity of the $k$-fold Rorrelation Problem.}
A simple adaptation of the algorithm suggested by Aaronson and Ambainis \cite{AA15} shows the existence of a $\lceil{k/2\rceil}$-query quantum algorithm on inputs $z^{(1)}, \ldots, z^{(k)} \in \{-1,1\}^N$, whose acceptance probability equals 
$$\frac{1+\phi_{\U}(z^{(1)}, \ldots, z^{(k)})}{2}$$
This shows that there's a gap of $\frac{1+2^{-k}}{2}$ vs.\ $\frac{1+2^{-(k+1)}}{2}$ between the acceptance probabilities in the YES instances and NO instances.
For $k$ a fixed constant this gives a constant difference between the acceptance probabilities of the YES and NO instances.
If $k = \omega(1)$ or if we insist on getting a $2/3$ vs.\ $1/3$ separation between the acceptance probabilities of YES and NO instances, then one apply simple amplification techniques repeating the quantum algorithm for $2^{O(k)}$ times, and check whether the number of accepting trials exceeds a certain threshold.	

\subsection*{Randomized Query Complexity of the $k$-fold Rorrelation Problem.}
Towards showing that the randomized query complexity of the $k$-fold Rorrelation problem is large, we construct a hard-distribution, and show that it is hard to solve the Rorrelation problem on instances sampled from this distribution.
By Yao's minimax principle, it suffices to show that a deterministic decision tree cannot solve the Rorrelation problem on the hard-distribution with high probability.
We take the hard-distribution to be the convex combination (i.e., average) of two distributions: (1) the uniform distribution on $k$ vectors $z^{(1)}, \ldots, z^{(k)}\in \{-1,1\}^N$, denoted $\cU_k$, and (2) 
a distribution $\cDUk$ that often produces $k$ vectors with large $k$-fold Rorrelation.
On the one hand, we show that $\cU_k$ produces NO instances with very high probability whereas $\cDUk$ produces YES instances with not too small probability.%
\footnote{We believe that $\cDUk$ samples YES instances with very high probability, but since this seems technically involved, and since it is not required to complete the proof, we left this question open.}
On the other hand, we show that any depth-$d$ deterministic decision tree fails to distinguish between $\cU_k$ and $\cDUk$, as long as $d = o(N^{2(k-1)/3(k-1)}/\log N)$. Combining the two facts together, we deduce that the distribution $\frac{1}{2} \cDUk + \frac{1}{2} \cU_k$ is a hard-distribution for depth $d$ decision trees. That is, any depth-$d$ decision tree errs with not too small probability in computing the Rorrelation problem on instances sampled from this distribution. 

We view the construction of a hard-distribution as an important contribution to the project set up by Aaronson and Ambainis. They were able to analyze $2$-fold Forrelation by presenting a hard-distribution for that case, but no candidate hard-distribution was suggested for the case $k>2$ prior to this work. We believe that our distribution is hard even for decision trees of depth $N^{1-1/k}/\polylog(N)$ and pose a conjecture that would imply such a result.

By the above discussion, proving that $\frac{1}{2} \cDUk + \frac{1}{2} \cU_k$ 
is a hard-distribution boils down to showing that:
\begin{enumerate}
	\item $\cU_k$ samples NO-instances with very high probability.
	\item $\cDUk$ samples YES-instances with not too small probability (to be precise, at least $2^{-k}$).
	\item Any deterministic decision tree of depth $d = o(N^{2(k-1)/3(k-1)}/\log N)$  cannot distinguish between inputs sampled from $\cU_k$ and inputs sampled from $\cDUk$. Put differently, the acceptance probability of any such deterministic decision tree, will be the same in both cases, up to an additive small error $o(1/2^k)$.
\end{enumerate}
Item 1 holds regardless of the choice of $\cDUk$, and is simple to prove.
To obtain Item 2, we start by recalling the hard distribution that Aaronson and Ambainis suggested for the case $k=2$.
They first defined a multi-variate Gaussian distribution $G_2$ on $2N$ dimensions, where
the first $N$ variables are simply standard independent Gaussians, and the latter $N$ variables are obtained by applying the Fourier (or Hadamard) transform on the first $N$ variables.
Then, to get a distribution $\cD_2$ over the Boolean domain, they took the signs of these multi-variate Gaussians. They then show that the expected Forrelation value of vectors sampled from $\cD_2$ is at least $(2/\pi)$.

We generalize this hard distribution to $k$-fold Rorrelation, replacing the Hadamard matrix with the orthogonal matrix $U$, and handling arbitrary $k\in \N$ rather then just $k=2$.

\subsection*{The Distributions $G_k$ and $\cDUk$}
Let $N,k\in \N$.
First, we define a continuous distribution $G_k$ over $\R^{kN}$ (in which every coordinate will be either a Gaussian random variable or a product of two independent Gaussian random variables), and then derive from it a discrete distribution over $\{-1,1\}^{kN}$ by taking signs.

The definition of $G_k$ and $\cDUk$ will rely on the $N$-by-$N$ orthogonal matrix $\U$ from the definition of Rorrelation.
For $i=1, \ldots, k-1$ let $X^{(i)} \sim \NN(0,1)^N$ and sample all the vectors $X^{(1)}, \ldots, X^{(k-1)}$ independently.
Denote by $Y^{(1)}, \ldots, Y^{(k-1)}$ the vectors defined by $Y^{(i)} = U^{\T} \cdot X^{(i)}$.
Define $Z^{(1)}, \ldots, Z^{(k)}$ as follows:
\begin{enumerate}
	\item  $Z^{(1)} = X^{(1)}$
	\item For $i=2, \ldots, k-1$, let  $Z^{(i)} = Y^{(i-1)} \odot X^{(i)}$ (where $\odot$ denotes point-wise product of two vectors of length $N$).
	\item $Z^{(k)} = Y^{(k-1)}$.
\end{enumerate}
We denote by $Z = (Z^{(1)}, \ldots, Z^{(k)})$ the concatenation of all the $k$ vectors.  This defines the distribution $G_k$ over $\R^{kN}$. The distribution $\cDUk$ over $\pmone^{kN}$ will simply be the distribution of $\sgn(Z)=(\sgn(Z^{(1)}_1), \sgn(Z^{(1)}_2), \ldots, \sgn(Z^{(k)}_{N}))$.

\subsection*{$\cDUk$ produces vectors with large Rorrelation:}
In section~\ref{sec:cD_k}, we show that $\E_{z\sim \cDUk}[\phi_{\U}(z)] \ge (2/\pi)^{k-1}$. Based on that, a simple Markov's inequality shows that with probability at least $2^{-k}$, $\cDUk$ samples a YES-instance for the Rorrelation problem. 
This will complete Item~2 in the aforementioned scheme, and we are left to prove the third item, which is the core of our argument.

\subsection*{The Core of the Argument: $\cDUk$ is Pseudorandom against Shallow  Decision Trees}

We are left to prove that for any depth $d = o(N^{2(k-1)/3(k-1)}/\log N)$  decision tree $F$, we have
$$
\left|\E_{z\sim \cU_k}[F(z)]-\E_{z\sim \cDUk}[F(z)]\right| \le o(1/2^k)
$$
We call $\left|\E_{z\sim \cU_k}[F(z)]-\E_{z\sim \cDUk}[F(z)]\right|$ the {\bf advantage} of $F$ distinguishing between $\cU_k$ and $\cDUk$.
Intuitively, a small advantage means that $F$ behaves similarly in both cases.
To bound the advantage of $F$, we apply a straight-forward Fourier analytical approach, as follows.
Since $F$ is defined over the Boolean domain, it can be represented as a multilinear polynomial, also known as the Fourier transform.
That is, we may write 
$$
F(z) = \sum_{S \subseteq \{1,\ldots, kN\}} \hat{F}(S) \cdot \prod_{i\in S} z_i.
$$
where $\hat{F}(S)$ are the real-valued Fourier coefficients of $F$. 
Observe that $\E_{z\sim \cU_k} [F(z)] = \hat{F}(\emptyset)$, whereas 
$$\E_{z\sim \cDUk} [F(z)] = \sum_{S \subseteq \{1,\ldots, kN\}} \hat{F}(S) \cdot \E_{z\sim \cDUk} \left[\prod_{i\in S} z_i\right].$$
To make our notation shorter, we denote by $\hat{\cDUk}(S) = \E_{z\sim \cDUk} \left[\prod_{i\in S} z_i\right]$. 
Thus, the advantage of $F$ can be expressed as
\begin{equation*}
	\left|\E_{z\sim \cU_k} [F(z)]- \E_{z\sim \cDUk} [F(z)]\right| = \left|\sum_{S \subseteq \{1,\ldots, kN\}: S\neq \emptyset} \hat{F}(S) \cdot \hat{\cDUk}(S)\right|,
\end{equation*}
which by the triangle inequality is at most
\begin{equation}\label{eq:adv}
	\sum_{S \subseteq \{1,\ldots, kN\}: S\neq \emptyset} 
	\left|\hat{F}(S) \cdot \hat{\cDUk}(S)\right|.
\end{equation}
We bound the sum in Eq.~\eqref{eq:adv}, by accounting for each degree $\ell\in [k N]$,d the contribution of sets of size $\ell$ to the sum.
Namely, 
 $$
\sum_{S \subseteq \{1,\ldots, kN\}: S\neq \emptyset} \left|\hat{F}(S) \cdot \hat{\cDUk}(S)\right| = \sum_{\ell=1}^{kN} \sum_{S:|S|=\ell} \left|\hat{F}(S) \cdot \hat{\cDUk}(S)\right|.
$$
To bound each internal sum $\sum_{S:|S|=\ell} \left|\hat{F}(S) \cdot \hat{\cDUk}(S)\right|$ it suffices to show that:
\begin{enumerate}
	\item 
	For every set $S\subseteq \{1, \ldots, kN\}$ of size $\ell$, the coefficient $|\hat{\cDUk}(S)|$ is sufficiently small.
	\item 	
	The sum $\sum_{S:|S|=\ell} |\hat{F}(S)|$ is not too large.
\end{enumerate}
This suffices to bound $\sum_{S:|S|=\ell} \left|\hat{F}(S) \cdot \hat{\cDUk}(S)\right|$, by the following the simple inequality
$$\sum_{S:|S|=\ell} \left|\hat{F}(S) \cdot \hat{\cDUk}(S)\right| \le  \left( \sum_{S:|S|=\ell}|\hat{F}(S)|\right) \cdot \left(\max_{S:|S|=\ell} {|\hat{\cDUk}(S)|}\right).$$
The proofs of both Parts~1 and~2 in the above plan are technically involved.

\paragraph{Part 1: Moment Bounds on $\cDUk$.}
Part~1 boils down to showing that all moments of the distribution $\cDUk$ are sufficiently small, where the bounds improve as the degree increases.
This is where the properties of random orthogonal matrices play their role. In particular we are able to show the following bound. 
\begin{theorem}
	With high probability over the choice of a uniformly random  orthogonal matrix $U\in \R^{N \times N}$, for all $\ell\in \{1,\ldots, kN\}$ and all sets $S \subseteq \{1, \ldots, kN\}$ of size $\ell$
	it holds that 
	\begin{equation}\label{eq:moment bounds}
			|\hat{\cDUk}(S)| \le \left(\frac{c \cdot \ell \cdot \log N}{N}\right)^{\ell\cdot (1-1/k)/2}
	\end{equation}
	where $c$ is some universal constant.
	Moreover, $\hat{\cDUk}(S)=0$ for all non-empty sets $S$ of size less than $k$.
\end{theorem}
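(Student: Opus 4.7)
The plan is to exploit independence across the blocks $X^{(1)}, \ldots, X^{(k-1)}$ to factor $\hat{\cDUk}(S)$ into $k-1$ two-block Gaussian correlations, each of which will be bounded by a Hermite-analytic estimate that uses the pseudorandomness of $U$. Partition $S = S_1 \sqcup \cdots \sqcup S_k$ by blocks, and observe that $\sgn(Z^{(1)}_i) = \sgn(X^{(1)}_i)$, $\sgn(Z^{(k)}_i) = \sgn((U^\T X^{(k-1)})_i)$, and $\sgn(Z^{(j)}_i) = \sgn((U^\T X^{(j-1)})_i)\sgn(X^{(j)}_i)$ for $2 \le j \le k-1$. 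Grouping factors by which $X^{(j)}$ they depend on and using independence,
\begin{equation*}
\hat{\cDUk}(S) = \prod_{j=1}^{k-1} E_j, \qquad E_j := \E_{X \sim \NN(0,I_N)}\!\left[\prod_{i \in S_j} \sgn(X_i) \prod_{i \in S_{j+1}} \sgn((U^\T X)_i)\right].
\end{equation*}

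Since $\E[\sgn(X_i)] = 0$ and the coordinates of $X$ are independent, $\E[\prod_{i \in T} \sgn(X_i)] = 0$ for any nonempty $T$, and likewise for $U^\T X$ (itself standard Gaussian by orthogonality of $U$). Therefore $E_j = 0$ whenever exactly one of $S_j, S_{j+1}$ is empty. Propagating this across $j = 1, \ldots, k-1$, the ``emptiness pattern'' of $S_1, \ldots, S_k$ must be constant, so for nonempty $S$ every $S_j$ is nonempty; in particular $|S| \ge k$, giving the ``Moreover'' claim.

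For the main bound I will Hermite-expand $\sgn(t) = \sum_{d \text{ odd}} \hat{\sgn}(d) h_d(t)$ in the normalized probabilist's basis (with $|\hat{\sgn}(d)| = O(d^{-3/4})$), and invoke the diagram formula
\begin{equation*}
\E[h_\alpha(X)\, h_\beta(U^\T X)] = \sqrt{\alpha!\,\beta!} \sum_{M} \frac{\prod_{i,j} U_{ij}^{M_{ij}}}{\prod_{i,j} M_{ij}!},
\end{equation*}
summed over nonneg integer matrices $M$ with row sums $\alpha$ and column sums $\beta$ (vanishing unless $|\alpha|=|\beta|$). I will work under a ``good event'' for $U$, which holds with high probability by standard concentration for uniformly random orthogonal matrices: (i) $|U_{ij}| \le C\sqrt{\log N/N}$ uniformly in $i,j$, and (ii) appropriate mixed moment bounds of the form $\sum_i U_{i,a}^{2m} U_{i,b}^{2m} = \tilde O(N^{1-2m})$ for $a \ne b$. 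Under this event, termwise estimation of the diagram sum, combined with the rapid decay of $\hat{\sgn}(d)$, yields a per-factor bound of roughly $|E_j| \lesssim (C\max(|S_j|,|S_{j+1}|)\log N/N)^{(|S_j|+|S_{j+1}|)/4}$, with the dominant contribution coming from the lowest allowed degree $d = \max(|S_j|,|S_{j+1}|)$.

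The hardest part will be combining these $k-1$ per-factor bounds to match the uniform exponent $\ell(1-1/k)/2$. Since $\sum_j (|S_j|+|S_{j+1}|) = 2\ell - |S_1| - |S_k|$ and every $|S_j| \ge 1$, the balanced case $|S_j| \approx \ell/k$ immediately yields the target. The delicate cases are unbalanced --- a single block dominating its neighbors --- where a per-factor bound keyed only on the sum $|S_j|+|S_{j+1}|$ loses the $1/k$ savings. I expect to need a sharper bound that depends on $\min(|S_j|,|S_{j+1}|)$, exploiting that a small neighboring block bottlenecks the degree $d \ge \max(|S_j|,|S_{j+1}|)$ and forces many entries of $U$ into the diagram product; carefully tracking the combinatorial multiplicities of $(\alpha, \beta, M)$ then closes the gap. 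This asymmetric regime is where most of the technical difficulty will lie.
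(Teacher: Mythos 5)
Your factorization $\hat{\cDUk}(S) = \prod_{j=1}^{k-1} E_j$ is exactly the one the paper uses, and your emptiness-propagation argument for the ``Moreover'' clause is sound (the paper phrases it via parities of $|S_j|+|S_{j+1}|$, but the conclusion is the same). The divergence is in how the per-factor quantity $E_j = \tilde{U}(S_j,S_{j+1})$ is estimated, and this is where the proposal has real gaps.

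First, your stated per-factor exponent $(|S_j|+|S_{j+1}|)/4$ is strictly weaker than what is needed in the unbalanced case; the correct exponent is $\tfrac{1}{2}\max\{|S_j|,|S_{j+1}|\}$ (indeed the paper proves $|\tilde{U}(S,T)| \le (50\|U_{S,T}\|)^{\max\{|S|,|T|\}}$). You recognize this but do not close the gap. Second, and more seriously, the Hermite-diagram route with only entrywise bounds $|U_{ij}| \le C\sqrt{\log N/N}$ faces a combinatorial blowup that your good event does not control: already at the lowest degree $d=\max\{|S_j|,|S_{j+1}|\}$ in the balanced case $|S_j|=|S_{j+1}|=m$, the diagram sum is a permanent-type quantity over $m!$ matchings, and termwise estimation gives $m!\,(C\sqrt{\log N/N})^m$, which overshoots the target $(Cm\log N/N)^{m/2}$ by a $\sqrt{m!}$-ish factor. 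One would need a permanent bound such as $|\mathrm{per}| \le \prod_i \|(\text{row}_i)\|_2$, and even then the higher-degree diagrams bring $\sqrt{\alpha!\,\beta!}$ factors and exponentially many compositions $(\alpha,\beta)$, against which the polynomial decay $\hat{\sgn}(d)=O(d^{-3/4})$ is far from obviously sufficient. The paper sidesteps all of this by never decomposing into Hermite modes: it expands the Gaussian density directly, with $\|U_{S,T}\|$ (not entrywise bounds) controlling the off-diagonal part of the covariance, so that the Taylor tail is dominated by $\chi^2$-moments with no diagram counting. This is why the good-matrix condition is an operator-norm condition on all submatrices rather than an entrywise one. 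Finally, even granting the per-factor bound $(c(|S_j|+|S_{j+1}|)\log N/N)^{\max\{|S_j|,|S_{j+1}|\}/2}$, the combination step requires the clean inequality $\sum_{j=1}^{k-1}\max(a_j,a_{j+1}) \ge \tfrac{k-1}{k}\sum_{j=1}^k a_j$ (the paper's Lemma on ``all same''), which you do not isolate; your closing sentence about ``carefully tracking combinatorial multiplicities'' leaves the hardest part unproved.
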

For example for $k=\ell=2$ the theorem shows that any second moment of $\cD_{U,2}$ is at most $\tilde{O}(1/\sqrt{N})$.
For any constant $k$ and $\ell=k$, the theorem shows that the $k$-th moment of $\cDUk$ is at most $\tilde{O}(1/\sqrt{N^{k-1}})$.

We remark that replacing $U$ with the Hadamard matrix, one gets much worse bounds for high moments, 
that would not allow to prove better than $\sqrt{N}$ lower bounds on the decision tree depth using our approach.
Furthermore, we believe our bounds on $|\hat{\cDUk}(S)|$ are tight. 

\paragraph{Part 2: Level-$\ell$ Fourier Bounds on $F$.}
We return to Part~2 above, bounding the sum $\sum_{S:|S|=\ell} |\hat{F}(S) |$ where $F$ is a decision tree of depth $d$. The work of O'Donnell and Servedio \cite{ODonnellServedio:07} obtained a tight $O(\sqrt{d})$ bound for the case $\ell=1$ (which was later extended by Blais, Tan, and Wan \cite{BTW15} to parity decision trees).
This allowed O'Donnell and Servedio to obtain a polynomial-time algorithm for learning monotone decision trees under the uniform distribution.
To the best of our knowledge, the question about higher Fourier levels of decision trees was not explicitly raised in the literature prior to this work.

We denote by $L_{1,\ell}(F) = \sum_{S:|S|=\ell} |\hat{F}(S)|$.
Bounds of the quantity $L_{1,\ell}(F)$ were proved for several well-studied classes of Boolean functions such as bounded-width DNF formulas \cite{Man95b}, bounded depth circuits \cite{Tal17}, read-once branching programs \cite{ReingoldSV13,CHRT18}, functions with low sensitivity \cite{GSTW16}, and low-degree polynomials over finite fields \cite{CHHL18}. Furthermore, it was conjectured in \cite{CHLT19} that stronger classes of Boolean functions,  such as $\ACzero[\oplus]$ circuits, have low $L_{1,\ell}(F)$. Moreover, the work of \cite{CHHL18} showed how to generically construct  pseudorandom generators assuming only bounds on the $L_{1,\ell}(F)$ of functions in the family (or even assuming only bounds on $L_{1,2}(F)$ \cite{CHLT19}). 

The quantity $L_{1,\ell}(F)$ captures the ``sparsity'' of the Fourier spectrum. Intuitively, this is due to the known fact that the sum of squares $\sum_{S:|S|=\ell} |\hat{F}(S)|^2$ is at most $1$. Hence, having the sum of absolute values small, means that most of the Fourier mass sits on a few coefficients.

We prove new bounds on the $L_{1,\ell}(F)$ of any decision tree $F$ of depth $d$.
\begin{theorem}
	Let $F$ be a decision tree on $kN$ input variables of depth $d$.
	Then,
\begin{equation}
	\label{eq:sparsity bounds}
	\forall{\ell}:\sum_{S\subseteq \{1, \ldots, kN\}:|S|=\ell}{|\hat{F}(S)|} \le \sqrt{d^{\ell} \cdot O(\log kN)^{\ell-1}}
\end{equation}
\end{theorem}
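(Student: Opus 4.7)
I plan to prove the bound by induction on the Fourier level $\ell$. The base case $\ell = 1$ is the classical O'Donnell--Servedio theorem $L_{1,1}(F) \le \sqrt{d}$ for any depth-$d$ decision tree $F$, which exactly matches the claim at $\ell = 1$. For the inductive step the goal is to establish the per-level recursion
\[ L_{1,\ell}(F) \le O\!\left(\sqrt{d \log(kN)}\right) \cdot L_{1,\ell-1}(F), \]
which iterated from the base case yields $L_{1,\ell}(F) \le \sqrt{d^\ell \cdot O(\log kN)^{\ell-1}}$.

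The main tool is a random $p$-restriction $\rho$ with parameter $p = \Theta(\log(kN)/d)$: each coordinate is kept ``free'' with probability $p$, otherwise it is fixed to a uniform $\pm 1$. Two key properties are in play:
\begin{itemize}
    \item \emph{Depth shrinkage.} With probability at least $1 - (kN)^{-\Omega(1)}$ the restricted tree $F_\rho$ has depth $d_\rho = O(\log(kN))$: on each of the at most $2^d$ root-to-leaf paths the surviving length is $\Bin(\cdot, p)$ with mean at most $p d = O(\log(kN))$, and Chernoff plus a union bound over paths yields the conclusion.
    \item \emph{Fourier identity.} For each $S \subseteq \{1,\dots,kN\}$, $\E_\rho\big[\hat{F_\rho}(S) \cdot \one[S \subseteq J]\big] = p^{|S|} \hat{F}(S)$, where $J$ is the free set; via Jensen and summation this implies the lifting inequality $L_{1,\ell}(F) \le p^{-\ell} \E_\rho[L_{1,\ell}(F_\rho)]$.
\end{itemize}
Combined naively with the trivial bound $L_{1,\ell}(F_\rho) \le \binom{d_\rho}{\ell}$, the lifting inequality only produces $L_{1,\ell}(F) \lesssim d^\ell$, which is a square root too loose. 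To save the extra $\sqrt{d^\ell}$ factor, I would use the discrete derivative identity $L_{1,\ell}(G) = \frac{1}{\ell} \sum_i L_{1,\ell-1}(\partial_i G)$ together with a Cauchy--Schwarz in place of the triangle inequality, and apply the inductive hypothesis at level $\ell-1$ to the shallow restricted derivative functions $\partial_i F_\rho$ (each of them bounded and computable by a decision tree of depth $O(d_\rho)$). The $\sqrt{d \log(kN)}$ per-level cost emerges by splitting $\sum_{|S|=\ell} |\hat{F}(S)|$ according to the size of $S \cap J$ under the restriction: the main term $|S \cap J| = \ell$ contributes a $p^{-\ell/2}$ (rather than $p^{-\ell}$) dependence once the absolute values are controlled by a second-moment estimate, while the boundary terms $|S \cap J| < \ell$ are swept into lower-order contributions.

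The main obstacle is executing this Cauchy--Schwarz step with the right constants. One must carefully balance (a) the $p^{-\ell/2}$ savings from the second-moment argument, (b) the depth-shrinkage factor $\sqrt{d_\rho} = O(\sqrt{\log(kN)})$ that comes from applying the base case to $F_\rho$, and (c) the inductive hypothesis applied to each restricted derivative $\partial_i F_\rho$. The choice $p = \log(kN)/d$ is precisely what makes these three contributions combine into the claimed per-level factor $\sqrt{d \log(kN)}$; any slack in the Cauchy--Schwarz step translates directly into an additional polynomial factor in $d$ per level that would ruin the final bound.
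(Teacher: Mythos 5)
Your random-restriction route is a genuinely different approach from the paper's, but the key estimate as sketched does not actually produce the claimed savings, and I believe the shortfall is structural rather than a matter of constants.

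Work through the combination of your lifting inequality with the Cauchy--Schwarz step. Fix $p = \Theta(\log(kN)/d)$ and condition on the high-probability event $d_\rho = O(\log kN)$. Since $F_\rho$ has at most $\binom{d_\rho}{\ell}$ nonzero degree-$\ell$ coefficients, Cauchy--Schwarz gives
\[
L_{1,\ell}(F_\rho) \;\le\; \sqrt{\binom{d_\rho}{\ell}\, W_\ell(F_\rho)}, \qquad W_\ell(F_\rho) := \sum_{|S|=\ell}\hat{F_\rho}(S)^2 .
\]
The second-moment estimate you invoke is
\[
\E_\rho\big[W_\ell(F_\rho)\big] \;=\; \sum_{T}\Pr_\rho\big[|T\cap J|=\ell\big]\,\hat{F}(T)^2 \;\le\; \binom{d}{\ell}\,p^\ell ,
\]
using $\sum_T\hat{F}(T)^2\le 1$, $|T|\le d$, and $\Pr[|T\cap J|=\ell]\le\binom{|T|}{\ell}p^\ell$. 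By Jensen, $\E_\rho[\sqrt{W_\ell(F_\rho)}]\le\sqrt{\binom{d}{\ell}p^\ell}$. Plugging into $L_{1,\ell}(F)\le p^{-\ell}\E_\rho[L_{1,\ell}(F_\rho)]$ gives
\[
L_{1,\ell}(F) \;\le\; p^{-\ell}\sqrt{\binom{d_\rho}{\ell}\binom{d}{\ell}\,p^\ell}
\;=\; p^{-\ell/2}\sqrt{\binom{d_\rho}{\ell}\binom{d}{\ell}} .
\]
Substituting $p^{-\ell/2}\approx (d/\log kN)^{\ell/2}$ and $\binom{d_\rho}{\ell}\approx (O(\log kN)/\ell)^\ell$, the product collapses to $\approx (d/\ell)^{\ell/2}\cdot(d/\ell)^{\ell/2}=(d/\ell)^\ell\approx\binom{d}{\ell}$. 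In other words, the improvement from $p^{-\ell}$ to $p^{-\ell/2}$ is exactly cancelled by the $\sqrt{\binom{d}{\ell}p^\ell}$ arising from Parseval: what the second-moment estimate gains, the Cauchy--Schwarz step loses back. The end result is the trivial $\binom{d}{\ell}$ bound of Claim~\ref{claim:second level-l inequality}, not the desired $\sqrt{d^\ell\cdot O(\log kN)^{\ell-1}}$, which is roughly its square root. So the central mechanism of your inductive step does not close.

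Two secondary issues are worth flagging. First, $\partial_i F_\rho$ takes values in $\{-\tfrac12,0,\tfrac12\}$ and is a difference of two decision trees, not a Boolean decision tree, so the inductive hypothesis does not apply to it directly; and even if extended, the sum $\sum_i L_{1,\ell-1}(\partial_i F_\rho)$ runs over every variable read anywhere in the restricted tree, which may number $\mathrm{poly}(kN)$, so the prefactor $1/\ell$ is far from enough. Second, the depth-shrinkage claim is essentially true but the argument ``Chernoff plus a union bound over $2^d$ paths'' is insufficient when $d\gg\log(kN)$; one must additionally charge each path with the probability $2^{-(d-t)}$ that it is consistent with the fixed part of $\rho$, which is what rescues the union bound.

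For contrast, the paper never restricts the tree. It decomposes $\hat{f}(S)$ by the last-queried index of $S$ at each internal vertex (\cref{lemma:decomposition}), proves a refined level-1 inequality with a tight $p\sqrt{\log(1/p)}$ dependence on the acceptance probability (\cref{thm:base}, \cref{cor:5.2,cor:5.3}), inducts on $\ell$ using a relabeling argument together with a binary expansion of the subtree Fourier coefficients, and controls the depth by partitioning $\{0,\dots,d\}$ into $\ell$ intervals. The tight acceptance-probability dependence is the crucial device that makes the induction close, and your sketch has no analogue of it.
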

In particular, for $\ell=1$, we match the tight $O(\sqrt{d})$ bound of \cite{ODonnellServedio:07}. 
Moreover, for constant $\ell$,  
our bounds are nearly tight as exhibited by the composition of the Address and the Majority functions (see Section~\ref{sec:examples}). However, for higher values of $\ell$, our bounds get sloppier and we believe that they can be further improved as follows.
\begin{conjecture}\label{conj}
		Let $F$ be a decision tree on $kN$ input variables of depth $d$.
	Then,
	$$
	\forall{\ell}:\sum_{S\subseteq \{1, \ldots, kN\}:|S|=\ell}{|\hat{F}(S)|} \le \sqrt{\binom{d}{\ell} \cdot O(\log kN)^{\ell-1}}
	$$
\end{conjecture}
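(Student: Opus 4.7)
}
My plan is to bootstrap the O'Donnell--Servedio $\ell=1$ bound $L_{1,1}(F)\le O(\sqrt d)$ into a level-$\ell$ bound by induction on $\ell$, strengthening the argument behind Eq.~\eqref{eq:sparsity bounds} so as to shave the factor of $\sqrt{\ell!}$ separating $d^{\ell}$ from $\binom{d}{\ell}$. The inductive step would start from the derivative identity
\[
\ell\cdot L_{1,\ell}(F) \;=\; \sum_{i=1}^{kN} L_{1,\ell-1}^{[i]}(\partial_i F),
\]
where $\partial_i F(z)=\tfrac12\bigl(F(z^{i\to 1})-F(z^{i\to -1})\bigr)$ and $L_{1,\ell-1}^{[i]}(g):=\sum_{|T|=\ell-1,\,i\notin T}|\hat g(T)|$. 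Each $\partial_i F$ is bounded by $1$ in $L_\infty$, squared-integrates to $\Inf_i(F)$, and is computable by a decision tree of depth $O(d)$, so the inductive hypothesis can in principle be applied to it.

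The central challenge is to avoid a trivial factor of $kN$ when summing over $i$. I would apply a weighted Cauchy--Schwarz
\[
\ell\cdot L_{1,\ell}(F) \;\le\; \Bigl(\sum_{i}\Inf_i(F)\Bigr)^{1/2}\cdot\Bigl(\sum_{i}\frac{L_{1,\ell-1}^{[i]}(\partial_i F)^{2}}{\Inf_i(F)}\Bigr)^{1/2},
\]
together with the total-influence bound $\sum_i\Inf_i(F)\le d$ for depth-$d$ decision trees. The crux is then a \emph{normalised} level-$(\ell-1)$ sparsity bound of the form
\[
L_{1,\ell-1}(\partial_i F)^{2}/\Inf_i(F) \;\lesssim\; \binom{d}{\ell-1}\cdot O(\log kN)^{\ell-2},
\]
valid for every influential $i$. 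Plugging back gives $L_{1,\ell}(F)\lesssim \ell^{-1}\sqrt{d\cdot\binom{d}{\ell-1}\cdot O(\log kN)^{\ell-2}}$, which matches $\sqrt{\binom{d}{\ell}\cdot O(\log kN)^{\ell-1}}$ up to the additional logarithmic factor that one expects to lose in a random-restriction step used to tame the high-influence directions.

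The hard part will be rigorously establishing the normalised inductive hypothesis above: for a generic bounded $g$ the level-$\ell$ $L_1$-sparsity bound does \emph{not} scale with $\|g\|_2$, only with $\|g\|_\infty$, so such a scaling cannot hold in general. The proof must therefore leverage the fact that $\partial_i F$ is not arbitrary but arises from a decision tree---in particular, the normalised function $\partial_i F/\sqrt{\Inf_i(F)}$ is supported on the sub-cube of inputs for which variable $i$ is sensitive, which itself is decided by a shallow sub-tree of $F$, and this structural feature should propagate through the induction. An alternative route I would pursue in parallel is to bypass derivatives and emulate O'Donnell--Servedio at higher levels via a tree-martingale argument: decompose the level-$\ell$ Fourier mass over \emph{ordered} $\ell$-tuples of queries along random root-to-leaf paths, and apply a simultaneous Azuma-type inequality to the $\ell$ martingale coordinates. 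The number of ordered $\ell$-tuples of distinct positions on a single path of length $d$ is exactly $\ell!\binom{d}{\ell}$, and dividing by $\ell!$ to pass from ordered tuples to unordered sets is precisely where the combinatorial $\binom{d}{\ell}$ factor (rather than $d^{\ell}$) should emerge.
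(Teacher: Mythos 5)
}
First, note that the statement you are trying to prove is posed in the paper as a \emph{conjecture}: the paper does not prove it. What the paper does prove (\cref{thm:main-decision-trees}) is the bound $\sum_{|S|=\ell}|\hat f(S)| \le \sqrt{c^{\ell}\binom{d}{\ell}}\cdot p\cdot\prod_{i=0}^{\ell-1}\sqrt{\log(4n^i/p)}$. The binomial coefficient is already there, but the product $\prod_{i=0}^{\ell-1}\sqrt{\log(4n^i/p)}\approx\sqrt{(\ell-1)!\,(\log n)^{\ell-1}}$ contributes an extra $\sqrt{(\ell-1)!}$, so after simplifying one lands at $\sqrt{d^{\ell}O(\log n)^{\ell-1}}$ rather than $\sqrt{\binom{d}{\ell}O(\log n)^{\ell-1}}$. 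The paper explicitly says (Section~\ref{sec:Fourier Growth Decision Trees}) that removing this $(\ell-1)!$-type loss is what the conjecture asks, and that the known ``bootstrapping'' lemma of Reingold--Steinke--Vadhan, which does this for ROBPs, could not be transferred to decision trees. Your proposal should therefore be read as an attempt at an open problem, not as a reconstruction of a proof the paper contains.

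Turning to the proposal itself: the derivative identity $\ell\,L_{1,\ell}(F)=\sum_i L_{1,\ell-1}^{[i]}(\partial_i F)$ and the weighted Cauchy--Schwarz using total influence $\sum_i\Inf_i(F)\le d$ are both sound, and they are in the same spirit as the paper's \cref{lemma:decomposition} (which decomposes $\hat f(S)$ through tree nodes rather than through coordinate derivatives). However, there is an algebraic gap: from a \emph{per-coordinate} bound of the form $L_{1,\ell-1}^{[i]}(\partial_iF)^2/\Inf_i(F)\lesssim\binom{d}{\ell-1}O(\log kN)^{\ell-2}$ you cannot conclude $\sum_i L_{1,\ell-1}^{[i]}(\partial_iF)^2/\Inf_i(F)\lesssim\binom{d}{\ell-1}O(\log kN)^{\ell-2}$; the sum ranges over all relevant coordinates, of which a depth-$d$ tree may have up to $2^d-1$, so a per-$i$ bound blows up the sum by that count. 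You would need to bound the \emph{sum} directly, which is a much stronger statement and is not implied by anything you have.

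Beyond the algebra, the ``normalised inductive hypothesis'' is itself the whole difficulty, and your own discussion already flags this: for general bounded $g$, $L_{1,\ell-1}(g)$ scales with $\|g\|_\infty$, not with $\|g\|_2^2=\Inf_i(F)$, so the scaling you need is false generically. The paper's \cref{thm:main-decision-trees} circumvents this by carrying along the acceptance probability $p$ and keeping all functions $\{0,1\}$-valued (its proof uses a binary expansion of the coefficients $\hat{A_v}(\{j\})$, relabeling, and a layered decomposition of the tree). Your $\partial_iF$ is $\{-1/2,0,1/2\}$-valued, so the paper's statement does not apply directly; and the source of the $\sqrt{(\ell-1)!}$ loss the paper incurs is precisely the $p$-dependence $p\cdot\prod\sqrt{\log(n^i/p)}$ after the layered decomposition splits the acceptance probability among nodes. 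Your martingale alternative is too sketchy to evaluate, but the same obstruction reappears: an Azuma-type argument applied naively to $\ell$ coordinates will again produce $(\log n)$ factors per coordinate, i.e.\ the same $(\ell-1)!$-shaped loss, unless one finds a way to correlate the $\ell$ exponential moment computations — which is exactly the RSV bootstrapping idea the paper says does not transfer.
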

We remark that for non-adaptive decision trees, namely juntas, the conjecture  holds.\footnote{If $F$ is a Boolean $d$-junta, then there are at most $\binom{d}{\ell}$ non-zero Fourier coefficients of degree $\ell$, thus by Cauchy-Schwarz
$
	\sum_{S:|S|=\ell}{|\hat{F}(S)|} \le \sqrt{\binom{d}{\ell} \cdot \sum_{S:|S|=\ell}{|\hat{F}(S)|^2}} \le \sqrt{\binom{d}{\ell}}.$}
Combining the bounds in Eq.~\eqref{eq:moment bounds} with Eq.~\eqref{eq:sparsity bounds} gives 
$$
\left( \sum_{S:|S|=\ell}|\hat{F}(S)|\right) \cdot \left(\max_{S:|S|=\ell} {|\hat{\cDUk}(S)|}\right) \le o(1/2^\ell)$$
for all $\ell \le \sqrt{d/\log n}$ and all $d = o(N^{2(k-1)/3(k-1)}/\log N)$.
For larger degrees $\ell > \sqrt{d/\log n}$, we use a much simpler bound, $L_{1,\ell}(F) \le \binom{d}{\ell}$, to obtain similarly 
$$\left( \sum_{S:|S|=\ell}|\hat{F}(S)|\right) \cdot \left(\max_{S:|S|=\ell} {|\hat{\cDUk}(S)|}\right) \le o(1/2^\ell).$$
Summing over all sets of size at least $k$ we get
$$
\left|\E_{z\sim \cU_k} [F(z)]- \E_{z\sim \cDUk} [F(z)]\right| \le \sum_{\ell=k}^{kN} \left( \sum_{S:|S|=\ell}|\hat{F}(S)|\right) \cdot \left(\max_{S:|S|=\ell} {|\hat{\cDUk}(S)|}\right) \le o(1/2^k),
$$
which completes the proof. 

We remark that assuming Conjecture~\ref{conj}, the same strategy would work for any decision tree of depth at most $N^{1-1/k}/\polylog(N)$.

\subsection{Digest - Degree and Sparsity}
We seek to pinpoint the key differences between the quantum and randomized query models that allowed us to get this separation.

The seminal result of Beals et al.~\cite{BBCMW01} showed that the acceptance probability of any $t$ query quantum algorithm can be expressed as a degree-$2t$ multilinear polynomial. Thus quantum algorithms making few queries can be approximated by low-degree polynomials. 
This is also the case for randomized decision trees, as observed by \cite{NisanSzegedy94}. 
But, if both models are approximated by low-degree polynomials, what is the difference between them?

We suggest sparsity, or more precisely $L_{1,\ell}(\cdot)$, as a measure to separate the two models. This is evident from our proof, which strongly exploits the smallness of $L_{1,\ell}(F)$ for shallow randomized decision trees. 
On the other hand, one can show that the $L_{1,\ell}(\cdot)$ of quantum algorithms making a few queries could be quite large.
As mentioned above, for any quantum algorithm making $t$ queries, there exists a multilinear polynomial $p$ of degree $2t$ capturing the acceptance probability of the algorithm. 
With this in mind, one can analyze the $L_{1,\ell}(\cdot)$ of $p$, i.e., the sum of absolute values of the degree $\ell$ terms in the polynomial $p$. 
Indeed, the polynomial that arises from Aaronson-Ambainis algorithm (Claim~\ref{claim:AA for Rorrelation}) is exactly 
$\frac{1}{2}(1+\phi_{\U}(z^{(1)}, \ldots, z^{(k)}))$.
Observe that for a random orthogonal matrices $U$,  entries in the matrix $U$ are of magnitude roughly $1/\sqrt{N}$, with high probability, and thus the sum of absolute values of the  degree-$k$ coefficients in $\phi_{\U}$ is quite large, $\tilde{\Theta}(N^{(k-1)/2})$. 

This captures the difference between the models. Indeed, to get such large $L_{1,k}$ measure for randomized decision tree, one needs their depth to be at least $\tilde{\Omega}(N^{(1-1/k)})$. 

We remark however that differences in $L_{1,\ell}(\cdot)$ alone are not sufficient to show a difference between the computational abilities of the two models. 
Indeed, two polynomials with very similar values on the entire Boolean domain can get very different $L_{1,\ell}$ norms.
This is why it is non-trivial to find and prove that a computational task demonstrates these differences. 
As we show in this paper, the $k$-Fold Rorrelation problem is such a task.

\subsection{Related Work}
We would like to comment about the relation of this work with our prior joint work with Raz~\cite{RT19}. 
In~\cite{RT19}, a separation between quantum algorithms, making a few queries,  and $\ACzero$ circuits was obtained. This, in turn, implied an oracle separation between $\BQP$ and the Polynomial Hierarchy.  
The question in \cite{RT19} boiled down to whether a distribution similar%
\footnote{A major difference, though, is that when discretizing the Gaussian distribution, \cite{RT19} used randomized rounding whereas we take signs.} to $\cD_2$ is pseudorandom against $\ACzero$ circuits.
While the proof strategy in \cite{RT19} starts similarly to the one laid out here, it takes a sharp turn early on. 
Namely, there, the approach of bounding the contribution of each level separately, simply does not work. 
To overcome this hurdle, one needs to rely on techniques from stochastic calculus, viewing the Gaussian distribution as a result of a random walk that makes many tiny steps, and analyzing each step separately. 
Surprisingly, the result of \cite{RT19} relies only on bounds on the second-level of the Fourier spectrum of $\ACzero$ (i.e., only bounds on $L_{1,2}(F)$).

Here, we exploit delicate bounds on all levels of the Fourier spectrum of depth-$d$ decision trees (i.e., bounds on $L_{1,\ell}(F)$ for all $\ell$, where $F$ is a depth-$d$ decision tree), as well as tight moment bounds on the distribution $\cDUk$.
So far, despite initial attempts, we were unable to exploit techniques from stochastic calculus to analyze $\E_{z\sim \cDUk}[F(z)]$. One difficulty arises from the fact that the continuous distribution $G_k$, which we discretize to get $\cDUk$, involves products of  Gaussians, rather than just Gaussians.
It seems tempting to wonder whether only a bound on the $k$-th Fourier level would suffice to analyze $k$-fold Rorrelation. If so, this would give a completely different proof, with possibly optimal quantitative bounds.

\section{Preliminaries}
For $N\in \N$, we denote by $[N] = \{1,\ldots, N\}$. We denote by $I_N$ the identity matrix of order $N$.
For  $A \in \R^{m\times n}$, we denote by $\|A\|$ the matrix norm given by $\|A\| = \sup_{x\neq 0}\|A x\|_2/\|x\|_2$.

\subsection*{Quantum Query}
A quantum query to an input $z  \in \pmone^{kN}$ performs the diagonal unitary transformation ${U}_z$, defined by
$|i,w \rangle \rightarrow z_i|i,w \rangle,$
where $i \in [kN]$ and $w$ represents the auxiliary workspace that does not participate in the query.
\subsection*{Fourier Representation of Boolean Functions}
Let $f: \pmone^N \to \R$ be a Boolean function on $N$ variables. The Fourier transform of $f$ is the unique multilinear polynomial that agrees with $f$ on $\pmone^N$. Such a polynomial exists and is unique.
We write the Fourier transform as
$$
f(x) = \sum_{S \subseteq [N]} {\hat{f}(S) \cdot \prod_{i\in S} x_i}
$$
where $\hat{f}(S)\in \R$ are the Fourier-coefficients, that could be easily computed from the function $f$ by $\hat{f}(S) = \E_{x\in\pmone^N}[f(x) \cdot \prod_{i\in S} x_i]$.
Parseval's identity shows that $\E_{x\in \pmone^N}[f(x)^2] =\sum_{S \subseteq [N]} {\hat{f}(S)^2}$. For $\ell\in [N]$, we denote by 
$$
L_{1,\ell}(f) \triangleq \sum_{S\subseteq [N]:|S|=\ell} |\hat{f}(S)|.
$$
\subsection*{Moments of Distributions}
Let $D$ be a distribution over $\{-1,1\}^N$. 
For any subset $S \subseteq [N]$ we denote by $$\hat{D}(S) = \E_{x\sim D}\left[\prod_{i\in S}x_i\right].$$

\section{Quantum Algorithm for the $k$-fold Rorrelation Problem}
Aaronson and Ambainis \cite{AA15} presented an algorithm that solves Forrelation (the special case of Rorrelation when $\U$ is the Hadamard matrix) with only $\lceil{k/2\rceil}$ queries.
 It is straightforward to extend their algorithm to solve the Rorrelation problem.
 \begin{claim}\label{claim:AA for Rorrelation}
Let $N$ be a power of $2$. Let $\U$ be an $N$-by-$N$ orthogonal matrix.
 Then, there exists a quantum algorithm making $\lceil{k/2\rceil}$ quantum queries, whose acceptance probability is $\frac{1}{2}  + \frac{1}{2} \phi_{\U}(z^{(1)},\ldots, z^{(k)})$
 where recall that
\[ \phi_{\U}(z^{(1)},\ldots, z^{(k)}) = \frac{1}{N}\cdot \sum_{i_1, \ldots, i_k} z^{(1)}_{i_1} \cdot \\U_{i_1,i_2} \cdot z^{(2)}_{i_2} \cdot  \\U_{i_2,i_3}  \cdots \\U_{i_{k-1},i_k} \cdot z^{(k)}_{i_k}.
\]\end{claim}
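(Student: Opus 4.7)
The plan is to take the quantum algorithm of Aaronson--Ambainis for $k$-fold Forrelation verbatim and simply substitute the Hadamard gate $H$ by the given orthogonal matrix $\U$. Since $N$ is a power of $2$, $\U\in\R^{N\times N}$ with $\U\U^T = I_N$ is, viewed over $\C$, a unitary operator on $\log_2 N$ qubits, and hence a legal ``free'' gate in the query model---implementing it touches no oracle. The AA correctness analysis uses only that $H$ is real and orthogonal---nothing Hadamard-specific---so all amplitude bookkeeping goes through with $\U$ in place of $H$.

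First I would describe the natural $k$-query version. Starting from $|+\rangle \deff N^{-1/2}\sum_{i\in[N]} |i\rangle$, alternately apply the diagonal sign-flip $D_{j}\colon |i\rangle\mapsto z^{(j)}_i |i\rangle$ (one query to $z^{(j)}$) and $\U$, producing
\[
|\psi\rangle \ = \ D_{1}\, \U\, D_{2}\, \U \cdots \U\, D_{k}\,|+\rangle.
\]
A direct expansion in the computational basis, repeatedly using $\langle i|\U|j\rangle = \U_{i,j}$ and the diagonal action of $D_{j}$, gives $\langle +\,|\,\psi\rangle = \phi_{\U}(z^{(1)},\ldots,z^{(k)})$.

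Second, to reduce the query cost to $\lceil k/2\rceil$ and to convert the amplitude into the desired linear acceptance probability, I would invoke the AA ``halving + Hadamard-test'' trick. Set $m = \lfloor k/2\rfloor$, $B = D_1\,\U\,D_2\,\U\cdots\U\,D_m\,\U$, $C = D_{m+1}\,\U\,D_{m+2}\cdots \U\,D_k$, so the amplitude of interest factors as $\langle +|BC|+\rangle = \langle\psi_L|\psi_R\rangle$ with $|\psi_L\rangle \deff B^\dagger|+\rangle$ (requiring $m$ queries, using $\U^T = \U^{-1}$) and $|\psi_R\rangle\deff C|+\rangle$ (requiring $k-m=\lceil k/2\rceil$ queries). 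I would attach a single ancilla qubit in $|+\rangle_{\mathrm{anc}}$ and, over $\lceil k/2\rceil$ rounds, perform $B^\dagger$ and $C$ in superposition: in round $i$, the $b=0$ branch wants to query $z^{(i)}$ while the $b=1$ branch wants to query $z^{(k-i+1)}$. Since these indices are distinct for $1\le i\le m$, a single oracle call on the combined $(j,\cdot)$ register services both branches simultaneously, and the shorter branch idles on any extra round. A Hadamard on the ancilla followed by measurement then accepts with probability $\tfrac{1}{4}\,\bigl\||\psi_L\rangle+|\psi_R\rangle\bigr\|^2 = \tfrac{1}{2}+\tfrac{1}{2}\,\mathrm{Re}\langle\psi_L|\psi_R\rangle = \tfrac{1}{2}+\tfrac{1}{2}\,\phi_\U$ (everything is real by real-orthogonality of $\U$).

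The main obstacle will be being fully precise about the ancilla-controlled ``two-branches-in-one-oracle-call'' construction that realises $\lceil k/2\rceil$ queries rather than $k$: one has to check the index disjointness at every round and verify that a single physical oracle invocation on the $(j,\cdot)$ register correctly answers both branches in superposition. Once that is spelled out, everything else is routine amplitude bookkeeping, and no property of $\U$ beyond $\U\U^T = I_N$ and real-valuedness is invoked, which is precisely why the extension from AA is straightforward.
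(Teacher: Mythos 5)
Your proposal is correct and follows essentially the same route as the paper's proof in Appendix~A: attach a control qubit in $\ket{+}$, run the ``left half'' of the $\phi_\U$ chain (with $\U^\T$) on the $\ket{0}$ branch and the ``right half'' (with $\U$) on the $\ket{1}$ branch, making one shared physical oracle call per round since the two branches address disjoint input blocks, and finish with a swap test/Hadamard test giving acceptance probability $\tfrac12 + \tfrac12\langle a,b\rangle = \tfrac12 + \tfrac12\phi_\U$. The only cosmetic difference is the split point ($m=\lfloor k/2\rfloor$ vs.\ $\lceil k/2\rceil$), which is immaterial, and you are slightly more explicit than the paper about the ``single oracle call services both branches'' bookkeeping, which is a fair point to flag.
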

 For completeness, we give the proof that naturally extends \cite[Prop.~6]{AA15} in Appendix~\ref{app:AA}.
 Note that Claim~\ref{claim:AA for Rorrelation} means that the adaptation of  Aaronson-Ambainis's algorithm accepts YES-instances with probability at least $\frac{1+2^{-k}}{2}$ and accepts NO-instances with probability at most $\frac{1+2^{-(k+1)}}{2}$. 

This can be amplified to a $2/3$ vs.~$1/3$ separation as explained next.
Denote by
 $\eps$ the difference of the two fractions $\frac{1+2^{-k}}{2} - \frac{1+2^{-(k+1)}}{2}$ and by $\alpha$ their average.
 By the standard amplification technique of repeating an algorithm for $m= O(1/\eps^2) = O(4^k)$ times and checking whether the number of successful trials exceeds $m\cdot \alpha$,  we strengthen the separation between the acceptance probabilities of YES and NO instances to $2/3$ vs.\ $1/3$.

\begin{corollary}\label{cor:AA amplified}
Let $N$ be a power of $2$. Let $\U$ be an $N$-by-$N$ orthogonal matrix. Then, there exists a quantum algorithm making $O(k \cdot 4^k)$ quantum queries, that solves $k$-fold Rorrelation problem with respect to $\U$ with success probability at least $2/3$.\end{corollary}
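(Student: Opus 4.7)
The plan is a routine amplification argument on top of Claim~\ref{claim:AA for Rorrelation}, so I will just lay out the bookkeeping.

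First, I would invoke Claim~\ref{claim:AA for Rorrelation} to obtain a single run of the base algorithm $\mathcal{A}$ using $\lceil k/2 \rceil$ queries whose acceptance probability is $\frac{1}{2} + \frac{1}{2}\phi_{\U}(z^{(1)},\ldots,z^{(k)})$. Under the YES promise $\phi_{\U} \ge 2^{-k}$ this is at least $p_{\text{yes}} := \frac{1+2^{-k}}{2}$, and under the NO promise $|\phi_{\U}| \le \frac{1}{2}\cdot 2^{-k}$ it is at most $p_{\text{no}} := \frac{1+2^{-(k+1)}}{2}$. Set the gap $\eps := p_{\text{yes}} - p_{\text{no}} = 2^{-(k+2)}$ and the midpoint $\alpha := \tfrac{1}{2}(p_{\text{yes}}+p_{\text{no}})$.

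Next, I would run $\mathcal{A}$ independently $m$ times on fresh quantum workspace (each run re-queries the oracle on the same fixed input $z$), obtaining i.i.d. $\{0,1\}$ outcomes $X_1,\ldots,X_m$. Let $\bar{X} := \frac{1}{m}\sum_{i=1}^m X_i$ and output YES iff $\bar{X} \ge \alpha$. Each run is still Bernoulli with parameter either $\ge p_{\text{yes}}$ or $\le p_{\text{no}}$, so a standard Hoeffding bound gives
\[
\Pr[\bar{X} < \alpha \mid \text{YES}] \le \exp(-2m(\eps/2)^2), \qquad \Pr[\bar{X} \ge \alpha \mid \text{NO}] \le \exp(-2m(\eps/2)^2).
\]
Choosing $m = \lceil C/\eps^2\rceil = O(4^k)$ for a sufficiently large universal constant $C$ makes each failure probability at most $1/3$. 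The total query count is $m \cdot \lceil k/2\rceil = O(k\cdot 4^k)$, giving the corollary.

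There is no genuine obstacle here; the only sanity check worth flagging is that the $m$ independent runs really do correspond to $m\lceil k/2\rceil$ legitimate quantum queries to the oracle $U_z$, which follows because the runs are executed sequentially on disjoint workspace registers and the input $z$ is unchanged across repetitions. No measurement or intermediate resetting of the oracle is required, so the query-complexity accounting is simply additive.
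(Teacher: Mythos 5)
Your proof is correct and follows exactly the same amplification argument the paper uses: take the $\lceil k/2\rceil$-query base algorithm from Claim~\ref{claim:AA for Rorrelation}, note the gap $\eps = 2^{-(k+2)}$ between acceptance probabilities on YES and NO instances, repeat $m = O(1/\eps^2) = O(4^k)$ times, and threshold the empirical acceptance frequency at the midpoint $\alpha$. The only differences are cosmetic (you spell out the Hoeffding bound and flag the workspace-register accounting, both of which the paper leaves implicit).
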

 
 We remark that while the algorithms mentioned in Claim~\ref{claim:AA for Rorrelation} or Corollary~\ref{cor:AA amplified} make only a few quantum queries, they are not necessarily efficient in terms of running time as they apply an arbitrary orthogonal transformation $\U$ to the quantum register.
It remains an important open problem to show that one can get similar separations for orthogonal matrices $\U$ that can be  implementable efficiently, say by quantum circuits with at most $\polylog(N)$ many gates.
This boils down to showing the existence of efficiently implementable matrices $\U$ that satisfy the pseudorandomness condition in Def.~\ref{def:good}.

\section{The Rorrelation of Vectors Sampled from $\cDUk$~and~$\cU_k$}
In this section, we show that:
\begin{enumerate}
	\item Vectors $z^{(1)}, \ldots, z^{(k)}$ that are sampled from the distribution $\cDUk$ have expected Rorrelation value at least $(2/\pi)^{k-1}$.
	\item Vectors $z^{(1)}, \ldots, z^{(k)}$ that are sampled from the uniform distribution $\cU_k$ have expected Rorrelation value $0$. Furthermore, the variance of their Rorrelation value is $1/N$, thus it is highly concentrated around $0$.
\end{enumerate}
This shows that the algorithm from Claim~\ref{claim:AA for Rorrelation} distinguishes between $\cDUk$ and $\cU_k$ as its acceptance probability differs by at least $\frac{1}{2} \cdot (2/\pi)^{k-1}$ between the two cases. 
Then, in Section~\ref{sec:lower bound} we show that bounded-depth randomized decision trees fail to distinguish between $\cDUk$ and $\cU_k$, for most orthogonal matrices $\U$. This, in turn, leads to the conclusion that bounded-depth randomized decision trees fail to solve the Rorrelation problem, for most orthogonal matrices $\U$.

\subsection{The Rorrelation of Vectors Sampled from $\cDUk$}\label{sec:cD_k}
\begin{claim}
Let $\U$ be an $N$-by-$N$ orthogonal matrix.
Then, 
$$\E_{z\sim \cDUk}[\phi_{\U}(z^{(1)},\ldots, z^{(k)})] \ge (2/\pi)^{k-1}.
	$$
\end{claim}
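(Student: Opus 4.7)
The plan is to compute the expectation term-by-term over the multi-index $(i_1,\ldots,i_k)$, exploiting the independence of the underlying Gaussians $X^{(1)},\ldots,X^{(k-1)}$ together with the classical Sheppard/arcsine formula for the sign correlation of bivariate Gaussians.

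First, I would unwind the product $\prod_{j=1}^k z^{(j)}_{i_j}$ using $\sgn(ab)=\sgn(a)\sgn(b)$ and regroup the factors according to which independent Gaussian vector $X^{(m)}$ they depend on. Writing $z^{(1)}_{i_1}=\sgn(X^{(1)}_{i_1})$, $z^{(j)}_{i_j}=\sgn((U^\T X^{(j-1)})_{i_j})\cdot\sgn(X^{(j)}_{i_j})$ for $2\le j\le k-1$, and $z^{(k)}_{i_k}=\sgn((U^\T X^{(k-1)})_{i_k})$, a clean telescoping shows that for each $m\in\{1,\ldots,k-1\}$ the factor depending on $X^{(m)}$ is precisely $\sgn(X^{(m)}_{i_m})\cdot\sgn((U^\T X^{(m)})_{i_{m+1}})$. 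Since $X^{(1)},\ldots,X^{(k-1)}$ are independent, the expectation factorizes.

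Next, for each $m$ the pair $\bigl(X^{(m)}_{i_m},(U^\T X^{(m)})_{i_{m+1}}\bigr)$ is a mean-zero bivariate Gaussian with unit variances and covariance $\sum_{\ell}U_{\ell,i_{m+1}}\cdot\mathbbm{1}[\ell=i_m]=U_{i_m,i_{m+1}}$. Sheppard's arcsine formula gives $\E[\sgn(G_1)\sgn(G_2)]=\tfrac{2}{\pi}\arcsin(\rho)$ for correlation $\rho$, so
\[
\E\!\left[\prod_{j=1}^{k} z^{(j)}_{i_j}\right]=\left(\frac{2}{\pi}\right)^{k-1}\prod_{m=1}^{k-1}\arcsin(U_{i_m,i_{m+1}}).
\]
Plugging this into the definition of $\phi_\U$ yields
\[
\E_{z\sim\cDUk}[\phi_{\U}(z)]=\frac{1}{N}\left(\frac{2}{\pi}\right)^{k-1}\sum_{i_1,\ldots,i_k}\prod_{m=1}^{k-1}U_{i_m,i_{m+1}}\cdot\arcsin(U_{i_m,i_{m+1}}).
\]

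Finally, the elementary inequality $x\arcsin(x)\ge x^2$ for $x\in[-1,1]$ (both factors share the sign of $x$, and $|\arcsin(x)|\ge|x|$) makes every summand nonnegative and lower bounds each by $\prod_m U_{i_m,i_{m+1}}^2$. Since $\U$ is orthogonal, $\sum_{i_{m+1}}U_{i_m,i_{m+1}}^2=1$, so iteratively collapsing the sums from $i_k$ down to $i_2$ gives $\sum_{i_1,\ldots,i_k}\prod_{m=1}^{k-1}U_{i_m,i_{m+1}}^2=\sum_{i_1}1=N$. The factor of $N$ cancels, leaving the claimed bound $(2/\pi)^{k-1}$. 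I do not expect any real obstacle here; the only delicate point is the bookkeeping that correctly groups sign-factors by the independent Gaussian vector they depend on, and verifying the boundary cases $j=1$, $j=k$ (and the degenerate $k=2$ case where the middle range is empty).
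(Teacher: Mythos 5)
Your proposal is correct and follows essentially the same route as the paper: factorize the expectation over the independent Gaussian vectors, apply Sheppard's formula for $\E[\sgn(G_1)\sgn(G_2)]$, and collapse the multi-index sum via orthogonality. The paper packages the key inequality as $\rho\cdot\E[\sgn(X)\sgn(Y)]\ge\frac{2}{\pi}\rho^2$ (Lemma~\ref{lemma:2.1}, stated via $\arccos$), which is the same as your $x\arcsin(x)\ge x^2$ after the identity $1-\tfrac{2}{\pi}\arccos(\rho)=\tfrac{2}{\pi}\arcsin(\rho)$.
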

\begin{proof}
The expectation of $\phi_{\U}$ on 
the distribution $\cDUk$ is
\begin{align*}&\E_{z\sim \cDUk}[\phi_{\U}(z^{(1)},\ldots, z^{(k)})] =\E_{Z\sim G_k}[\phi_{\U}(\sgn(Z^{(1)}),\ldots, \sgn(Z^{(k)}))] \\&= \frac{1}{N}\cdot \sum_{i_1, \ldots, i_k} \E[\sgn(X^{(1)}_{i_1}) \cdot \U_{i_1,i_2} \cdot \sgn(Y^{(1)}_{i_2}) \cdot \sgn(X^{(2)}_{i_2}) \cdot  \U_{i_2,i_3}  \cdots  \U_{i_{k-1},i_k} \cdot \sgn(Y^{(k-1)}_{i_k})]\\&=
\frac{1}{N}\cdot \sum_{i_1, \ldots, i_k} 
\E[\sgn(X^{(1)}_{i_1}) \cdot \U_{i_1,i_2} \cdot \sgn(Y^{(1)}_{i_2})]  
\cdots \E[\sgn(X^{(k-1)}_{i_{k-1}}) \U_{i_{k-1},i_k} \cdot \sgn(Y^{(k-1)}_{i_k})].
\end{align*}
In the following Lemma~\ref{lemma:2.1}, we show that for any $j\in \{1,\ldots,k-1\}$ and any $i_j\in [N]$ and $i_{j+1}\in [N]$
the expectation of $$\E[\sgn(X_{i_j}^{(j)}) \cdot \U_{i_j, i_{j+1}} \cdot \sgn(Y_{i_{j+1}}^{(j+1)})] \ge \frac{2}{\pi} \cdot \U^2_{i_j,i_{j+1}},$$
relying on the fact that the covariance of $X_{i_j}^{(j)}$ and $Y_{i_{j+1}}^{(j+1)}$ equals $U_{i_j,i_{j+1}}$.
This gives
\begin{align*}\E_{z\sim \cDUk}[\phi_{\U}(z^{(1)},\ldots, z^{(k)})]  &\ge \frac{1}{N}\cdot \sum_{i_1, \ldots, i_k} \U_{i_1,i_2}^2 \cdots  \U_{i_{k-1},i_k}^2  \cdot \left(\frac{2}{\pi} \right)^{k-1}\\&= \frac{1}{N}\cdot N \cdot \left(\frac{2}{\pi} \right)^{k-1}  \tag{since $\U$ is orthogonal}\\&= \left(\frac{2}{\pi} \right)^{k-1} \;.\qedhere
\end{align*}
\end{proof}

\begin{lemma}\label{lemma:2.1}
Let $\rho \in [-1,1]$. Let $(X, Y)$ be two-dimensional multi-variate Gaussian distribution with zero-means and covariance matrix 
$\left(\begin{matrix}
	1&\rho\\
	\rho&1
\end{matrix}\right)$.
Then,
$$\rho \cdot \E[\sgn(X) \cdot \sgn(Y) ] \ge \frac{2}{\pi} \cdot \rho^2.$$
\end{lemma}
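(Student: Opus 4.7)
The plan is to reduce the inequality to a classical fact about the $\sgn$-correlation of correlated Gaussians, and then to a simple one-variable inequality. Specifically, I will invoke Grothendieck's identity (Sheppard's arcsine law): for a mean-zero Gaussian pair $(X,Y)$ with unit variances and correlation $\rho$, one has
\[
\E[\sgn(X)\cdot \sgn(Y)] = \frac{2}{\pi}\arcsin(\rho).
\]
This is a standard computation: condition on the angle of $(X,Y)$ in polar coordinates (after whitening) and observe that $\sgn(X)\neq \sgn(Y)$ exactly on an arc of length proportional to $\arccos(\rho)$. I would state this as a one-line lemma and either cite it or give the short polar-coordinates derivation.

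Given the identity, the claim reduces to showing
\[
\rho \cdot \frac{2}{\pi}\arcsin(\rho)\;\geq\;\frac{2}{\pi}\rho^{2},
\qquad\text{i.e.,}\qquad \rho\cdot \arcsin(\rho)\;\geq\;\rho^{2}\quad\text{for all }\rho\in[-1,1].
\]
The key elementary inequality behind this is: $\arcsin(x)\geq x$ for $x\in[0,1]$. This follows because $\frac{d}{dx}\arcsin(x) = 1/\sqrt{1-x^{2}} \geq 1$ on $[0,1)$, combined with $\arcsin(0)=0$. By oddness of $\arcsin$, we also get $\arcsin(x)\leq x$ for $x\in[-1,0]$.

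Now I split on the sign of $\rho$. For $\rho\geq 0$, multiplying $\arcsin(\rho)\geq \rho$ by the nonnegative quantity $\rho$ gives $\rho\arcsin(\rho)\geq \rho^{2}$. For $\rho<0$, multiplying $\arcsin(\rho)\leq \rho$ by the negative quantity $\rho$ reverses the inequality, again yielding $\rho\arcsin(\rho)\geq \rho^{2}$. In both cases the required bound holds, which finishes the proof after multiplying through by $2/\pi$. There is no real obstacle here: the only substantive ingredient is Grothendieck's identity, which is classical; the rest is a one-line monotonicity argument on $\arcsin$.
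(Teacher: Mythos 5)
Your proposal is correct and essentially matches the paper's argument: the paper also derives $\E[\sgn(X)\sgn(Y)]=1-2\arccos(\rho)/\pi=\frac{2}{\pi}\arcsin(\rho)$ via the same polar/sphere picture and then closes with the sign split on $\rho$, which is just your $\arcsin(x)\geq x$ inequality stated in terms of $\arccos$. No substantive difference.
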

\begin{proof}
Let $u_1 = (1,0)$ and $u_2 = (\rho,\sqrt{1-\rho^2})$.
We can retrieve such a distribution (on $(X,Y)$) by considering two independent standard Gaussians $Z = (Z_1, Z_2)$ and taking $X = \langle{Z,u_1\rangle}$ and $Y = \langle{Z,u_2\rangle}$.
Thus, the probability that $\sgn(X)=\sgn(Y)$ is the same as the probability over a random $Z = (Z_1, Z_2)$ that $\sgn(\langle{Z,u_1\rangle})  = \sgn(\langle{Z,u_2\rangle})$, which is the same if we sample $Z$ according to the uniform distribution on the sphere.
The latter probability is exactly $1-\alpha/\pi$ where $\alpha$ is the angle between $u_1$ and $u_2$.
Thus, the probability is $1-\arccos(\rho)/\pi$, and 
$$
\E[\sgn(X) \cdot \sgn(Y)] = 2\Pr[\sgn(X)=\sgn(Y)]-1 = 1-2\arccos(\rho)/\pi.$$
For $\rho\ge 0$ we have $\E[\sgn(X) \cdot \sgn(Y)]\ge \frac{2}{\pi} \rho$ and for $\rho\le 0$ we get $\E[\sgn(X) \cdot \sgn(Y)]\le \frac{2}{\pi} \rho$. Thus, in both cases,
$\rho \cdot \E[\sgn(X) \cdot \sgn(Y)] \ge \frac{2}{\pi} \rho^2$.
\end{proof}

\subsection{The Rorrelation of Vectors Sampled from $\cU_k$}
We begin with the following simple claim.
\begin{claim}\label{claim:expectation under uniform}Let $\U$ be an $N$-by-$N$ orthogonal matrix. Then,
$$\E_{z\sim \cU_k} \left[\phi_{\U}(z^{(1)},\ldots, z^{(k)})\right] = 
0$$
\end{claim}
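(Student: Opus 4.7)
The plan is to expand the definition of $\phi_{\U}$ and push the expectation inside the sum by linearity. Under $\cU_k$, the $k N$ random variables $\{z^{(j)}_i : j \in [k], i \in [N]\}$ are mutually independent uniform $\pmone$ variables, so each has mean zero.

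More concretely, I would write
\[
\E_{z\sim \cU_k}\!\left[\phi_{\U}(z^{(1)},\ldots,z^{(k)})\right] = \frac{1}{N} \sum_{i_1, \ldots, i_k \in [N]} \U_{i_1,i_2} \U_{i_2,i_3} \cdots \U_{i_{k-1},i_k} \cdot \E\!\left[z^{(1)}_{i_1} z^{(2)}_{i_2} \cdots z^{(k)}_{i_k}\right],
\]
where the matrix entries factor out of the expectation because they are deterministic. For any fixed tuple $(i_1, \ldots, i_k)$, the $k$ factors $z^{(1)}_{i_1}, z^{(2)}_{i_2}, \ldots, z^{(k)}_{i_k}$ come from $k$ different vectors $z^{(1)}, \ldots, z^{(k)}$, hence are independent under $\cU_k$, so the expectation of their product factorizes as $\prod_{j=1}^{k} \E[z^{(j)}_{i_j}] = 0$ since each individual factor has mean zero.

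Therefore every summand vanishes, and the claim follows. There is no real obstacle here: the key observation is simply that in each monomial of $\phi_{\U}$ each of the $k$ input vectors contributes exactly one variable (never two), so no cancellations like $(z^{(j)}_i)^2 = 1$ can occur and the mean-zero property applies term-by-term. (This is also why the next claim about the variance is nontrivial: squaring $\phi_{\U}$ does produce repeated factors and so a genuine computation is required there, but for the mean this subtlety does not arise.)
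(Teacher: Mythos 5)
Your proof is correct and matches the paper's own argument: expand $\phi_{\U}$ by linearity, pull the deterministic matrix entries out of the expectation, and use independence of the $z^{(j)}_{i_j}$ across distinct vectors together with the mean-zero property of each coordinate to conclude every summand vanishes. The observation that each monomial contains at most one coordinate from each vector (so no $(z^{(j)}_i)^2$ collapses) is exactly the implicit point that makes the factorization work.
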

\begin{proof}
	\[
	\E_{z\sim \cU_k} [\phi_{\U}(z^{(1)},\ldots, z^{(k)})] = \frac{1}{N}\cdot \sum_{i_1, \ldots, i_k} \E[z^{(1)}_{i_1}] \cdot \U_{i_1,i_2} \cdot \E[z^{(2)}_{i_2}]\cdot  \U_{i_2,i_3}  \cdots \U_{i_{k-1},i_k} \cdot \E[z^{(k)}_{i_k}] = 0.\qedhere
\]
	\end{proof}

Furthermore, we show that for $z^{(1)}, \ldots, z^{(k)}$ drawn from the uniform distribution, the value of $\phi_{\U}(z^{(1)}, \ldots, z^{(k)})$ is concentrated around $0$.
To show that it suffices to bound the variance of $\phi_{\U}(z^{(1)}, \ldots, z^{(k)})$ under the uniform distribution, as we do next.

\begin{claim}\label{claim:concentration under uniform}
Let $\U$ be an $N$-by-$N$ orthogonal matrix. Then,
$$
\var_{z\sim \cU_k} \left[\phi_{\U}(z^{(1)},\ldots, z^{(k)})\right] = 
\E_{z\sim \cU_k} \left[(\phi_{\U}(z^{(1)},\ldots, z^{(k)}))^2\right] =
1/N.
$$
\end{claim}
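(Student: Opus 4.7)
The plan is to expand $\phi_{\U}(z^{(1)},\ldots,z^{(k)})^2$ as a double sum indexed by two tuples $(i_1,\ldots,i_k)$ and $(j_1,\ldots,j_k)$, take expectations under $\cU_k$, and use independence of the coordinates together with the orthogonality of $\U$ to collapse the sum. Since Claim~\ref{claim:expectation under uniform} already gives $\E_{z\sim\cU_k}[\phi_{\U}(z)]=0$, the variance equals the second moment, so it suffices to compute $\E_{z\sim\cU_k}[\phi_{\U}(z)^2]$.

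First I would write
\[
\phi_{\U}(z)^2 = \frac{1}{N^2}\sum_{i_1,\ldots,i_k}\sum_{j_1,\ldots,j_k} \left(\prod_{t=1}^{k} z^{(t)}_{i_t} z^{(t)}_{j_t}\right)\cdot \U_{i_1,i_2}\U_{j_1,j_2}\cdots \U_{i_{k-1},i_k}\U_{j_{k-1},j_k}.
\]
Under the uniform distribution $\cU_k$ the $k$ vectors are independent, each coordinate is a uniform $\pm 1$, so $\E[z^{(t)}_{i_t} z^{(t)}_{j_t}] = \mathbbm{1}[i_t=j_t]$, and across different $t$'s the expectations factor. Therefore only the diagonal terms with $i_t=j_t$ for every $t$ survive, giving
\[
\E_{z\sim\cU_k}[\phi_{\U}(z)^2] = \frac{1}{N^2}\sum_{i_1,\ldots,i_k} \U_{i_1,i_2}^2\,\U_{i_2,i_3}^2\cdots \U_{i_{k-1},i_k}^2.
\]

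Now I would use orthogonality of $\U$ layer by layer. Because $\U^{\T}\U = I_N$, every row (and column) of $\U$ has unit $\ell_2$ norm, so $\sum_{j}\U_{i,j}^2 = 1$ for each $i$. Summing over $i_k$ gives $1$, then over $i_{k-1}$ gives $1$, and so on down through $i_2$. What remains is $\sum_{i_1} 1 = N$, so the whole expression equals $N/N^2 = 1/N$, as desired. The calculation is entirely routine and there is no real obstacle; the only thing to be slightly careful about is the bookkeeping to ensure that the telescoping of the orthogonality sums really does yield $N$ and not $N^2$ or $1$. As a consequence, Chebyshev's inequality shows that $\phi_{\U}(z)$ is $O(1/\sqrt{N})$ with high probability under $\cU_k$, which (for $k=o(\log N)$) is asymptotically smaller than the $\frac{1}{2}\cdot 2^{-k}$ threshold defining NO-instances, confirming that $\cU_k$ produces NO-instances with probability $1-o(1)$.
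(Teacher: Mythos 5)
Your proof is correct and follows essentially the same computation as the paper: you expand the square, use independence of the $\pm1$ coordinates to keep only the diagonal terms, and telescope via orthogonality of the rows of $\U$, which is exactly what the paper's one-line invocation of Parseval's identity packages. The only cosmetic difference is that the paper names Parseval rather than writing out the double sum.
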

\begin{proof} Since $\phi_U$ is multilinear we can apply Parseval's identity to get
	\begin{align*}
	\var_{z\sim \cU_k} \left[\phi_{\U}(z^{(1)},\ldots, z^{(k)})\right] &=\E_{z\sim \cU_k} ,\left[\left(\frac{1}{N}\cdot \sum_{i_1, \ldots, i_k} z^{(1)}_{i_1} \cdot \U_{i_1,i_2} \cdot z^{(2)}_{i_2} \cdot  \U_{i_2,i_3}  \cdots \U_{i_{k-1},i_k} \cdot z^{(k)}_{i_k}\right)^2\right]\\&\qquad=\frac{1}{N^2} \sum_{i_1, \ldots, i_k}\U_{i_1,i_2}^2 \U_{i_2,i_3}^2 \cdots \U_{i_{k-1},i_k}^2 = 1/N.\qedhere
	\end{align*}
\end{proof}
Overall, we get that a vector $z$, drawn from the uniform distribution, satisfies $|\phi_{\U}(z^{(1)},\ldots, z^{(k)})| \le 2^{-(k+1)}$  with high probability (at least $1-4^{(k+1)}/N$) .

\section{$\cDUk$ is Pseudorandom for Randomized Decision Trees}
\label{sec:lower bound}
\subsection{Fourier Growth of Decision Trees}

We start by stating two bounds on the Fourier coefficients of decision trees. These bounds capture the fact that the Fourier spectrum of deterministic and randomized decision trees is ``sparse''. More precisely, we bound the sum of absolute values of coefficients of degree $\ell$, and since the sum of squares is at most $1$, this means that within the $\ell$-th level, the Fourier mass is concentrated on a small fraction of the coefficients.

\begin{theorem}[Level-$\ell$ Inequality for Decision Trees -- Version 1]\label{thm:first level-l inequality}
Let $f$ be a (deterministic) decision tree of depth $d$ over $m$ variables $x_1, \ldots, x_m$. Then,
$$\forall \ell \in \{0,1,\ldots, d\}:
\quad L_{1,\ell}(f)=\sum_{S \subseteq[m]:|S|=\ell} {|\hat{f}(S)|} \le \sqrt{O(d)^{\ell} \cdot O(\log m)^{\ell-1}}
$$\end{theorem}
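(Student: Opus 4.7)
The plan is to prove the theorem by induction on $\ell$, taking as the base case $\ell = 1$ the O'Donnell--Servedio bound $L_{1,1}(f) = O(\sqrt{d})$ (whose extension from Boolean-valued to $[-1,1]$-valued trees is routine via the leaf-path decomposition $f = \sum_v b_v P_v$, where $P_v(x) = \prod_{j}\frac{1 + \sigma_{j,v} x_{i_{j,v}}}{2}$). For the inductive step, my starting point is the discrete-derivative identity
\[
\ell \cdot L_{1,\ell}(f) \;=\; \sum_{i=1}^{m} L_{1,\ell-1}(D_i f),
\qquad
D_i f(x) := \tfrac{1}{2}\bigl(f(x^{i \to 1}) - f(x^{i \to -1})\bigr),
\]
which is immediate from $\widehat{D_i f}(T) = \hat f(T \cup \{i\})$ for $i \notin T$: each level-$\ell$ coefficient of $f$ is counted by exactly $\ell$ of the summands. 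Each $D_i f$ is $[-1,1]$-valued and is computable by a decision tree of depth at most $d$, so the inductive hypothesis applies to each summand.

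Applied naively this loses a factor of $m$. To circumvent this I would strengthen the inductive claim to an $L_2$-weighted form: for every depth-$d$ decision tree $g : \{\pm 1\}^m \to [-1,1]$,
\[
L_{1,\ell}(g) \;\leq\; \|g\|_2 \cdot \sqrt{O(d)^{\ell} \cdot O(\log m)^{\ell-1}},
\]
which at $\ell = 1$ matches O'Donnell--Servedio when $\|g\|_2 = 1$. The strengthening is useful because of the total-influence bound for decision trees, $\sum_i \|D_i f\|_2^2 = \sum_i \mathrm{Inf}_i(f) \leq d$. Plugging the strengthened hypothesis in and using Cauchy--Schwarz,
\[
\sum_i L_{1,\ell-1}(D_i f) \;\leq\; \sqrt{O(d)^{\ell - 1} \cdot O(\log m)^{\ell-2}} \cdot \sum_i \|D_i f\|_2,
\]
so the inductive step reduces to bounding $\sum_i \|D_i f\|_2$ without incurring a $\sqrt{m}$ factor.

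The main obstacle is precisely this last bound. A plain Cauchy--Schwarz gives $\sum_i \|D_i f\|_2 \leq \sqrt{m d}$, which is too weak. To recover the claimed $\sqrt{O(d) \log m}$ factor per level, I would either (a) dyadically partition the coordinates according to the magnitude of $\|D_i f\|_2$ and estimate each dyadic class using $\sum_i \|D_i f\|_2^2 \leq d$, picking up $\log m$ from the number of non-trivial dyadic buckets; or (b) use a random restriction keeping each coordinate alive with probability $p \sim 1/\log m$ and relate $L_{1,\ell}(f)$ to the expectation of a level-$(\ell-1)$ quantity for $f|_\rho$, with the amplification factor $1/p$ contributing the $\sqrt{\log m}$. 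Calibrating either approach so that the exponents of $d$ and $\log m$ emerge precisely as in the theorem, while preserving the base case constant, is the delicate quantitative step on which I expect to spend most of the effort.
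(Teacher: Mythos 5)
Your reduction $\ell\, L_{1,\ell}(f)=\sum_i L_{1,\ell-1}(D_i f)$ is correct, and the idea of strengthening the induction hypothesis to make it propagate is the right instinct, but the $L_2$-weighted form you chose is quantitatively too weak to close the induction, and the last step you flag as "the delicate quantitative step" is in fact unattainable. The issue is that $\sum_i \|D_i f\|_2$ can genuinely be $\Theta(\sqrt{m})$ for a depth-$d$ decision tree with $d=\Theta(\log m)$, so no dyadic or random-restriction trick can push it down to $\mathrm{polylog}(m)$. Concretely, take $f=\mathsf{ADD}_d$, the address function on $m=2^d+d$ variables computed in depth $d+1$. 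Each of the $2^d$ array bits $y_j$ has $D_{y_j}f=\mathbf{1}\{\mathrm{idx}(x)=j\}$, so $\|D_{y_j}f\|_2=2^{-d/2}$ and $\sum_j \|D_{y_j}f\|_2=2^{d/2}\approx\sqrt{m}$, whereas you need this sum to be $O(\sqrt{d\log m})\approx\log m$. The source of the slack is in the inductive invocation itself: your hypothesis bounds $L_{1,1}(D_{y_j}f)$ by $\|D_{y_j}f\|_2\cdot O(\sqrt d)=O(\sqrt d\,2^{-d/2})$, but the true value is $d\,2^{-d}$, exponentially smaller. The correct answer $L_{1,2}(\mathsf{ADD}_d)=d$ (stated in Section~\ref{sec:examples}) is recovered only if the per-derivative bound scales roughly linearly in $p_i:=\Pr[D_i f\neq 0]=\mathrm{Inf}_i(f)$ rather than in $\sqrt{p_i}$, because then $\sum_i p_i\le d$ does the additive work that $\sum_i\sqrt{p_i}$ cannot.

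This is precisely what the paper's argument exploits, and it is a genuinely different route from yours: instead of the discrete-derivative identity it uses a node-based decomposition $\hat f(S)=\sum_{j\in S}\sum_{v:\next(v)=j}\hat{B_v}(S\setminus\{j\})\hat{A_v}(\{j\})$ (Lemma~\ref{lemma:decomposition}), and the strengthened induction hypothesis is $L_{1,\ell}(g)\le\sqrt{c^\ell\binom{d}{\ell}}\cdot p\cdot\prod_{i<\ell}\sqrt{\log(4n^i/p)}$ with $p=\Pr[g=1]$ --- i.e., near-linear in $p$ with a $\sqrt{\log(1/p)}$ correction, not proportional to $\|g\|_2=\sqrt p$. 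The level-$1$ base case (Theorem~\ref{thm:base}) is proved in this sharp $p\sqrt{\ln(e/p)}$ form, which your "routine extension" to $[-1,1]$-valued trees would lose; that refinement is load-bearing. The decomposition is then organized across layers of the tree, with a relabeling trick (Corollary~\ref{cor:5.2}) to control $\sum_v p_v\,|\hat{A_v}(\{\next(v)\})|$, and the per-layer probabilities $p_v$ play the additive role that your $\|D_i f\|_2$ cannot. If you want to keep your derivative-based decomposition, you would at minimum need to replace the $\|g\|_2$ weight by a $p(g)\sqrt{\ln(e/p(g))}$-type weight for $\{-1,0,1\}$-valued $g$, verify the base case in that sharpened form, and check that the $D_if$'s meet the hypotheses; but as proposed, with $\|g\|_2$ weighting, the inductive step fails on the address function.
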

The above inequality is tight for small values of $\ell$. In particular, it gives a $O(\sqrt{d})$ bound on the first level -- a result that was previously obtained by \cite{ODonnellServedio:07,BTW15} and is known to be tight (see Section~\ref{sec:examples} for examples demonstrating its tightness). For higher values of $\ell$ though, the inequality gets sloppier, and for $\ell \ge \Omega(\sqrt{d/\log n})$ a much simpler argument gives  better bounds.

\begin{claim}[Level-$\ell$ Inequality for Decision Trees - Version 2]\label{claim:second level-l inequality}
Let $f$ be a (deterministic) decision tree of depth $d$ over $m$ variables $x_1, \ldots, x_m$. Then,
$$\forall \ell \in \{0,1,\ldots, d\}:
\quad L_{1,\ell}(f)=\sum_{S \subseteq[m]:|S|=\ell} {|\hat{f}(S)|} \le \binom{d}{\ell}
$$
\end{claim}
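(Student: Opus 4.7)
The plan is to use the standard leaf decomposition of a decision tree together with a clean triangle-inequality argument. Write $f(x) = \sum_{v} \mathrm{val}(v) \cdot \mathbf{1}[x \text{ reaches leaf } v]$, where the sum ranges over the leaves $v$ of the tree, $\mathrm{val}(v) \in \pmone$ is the label of the leaf, and $d_v \le d$ is the depth of $v$ with path $P_v = \{i_1, \ldots, i_{d_v}\}$ of queried variables with prescribed answers $\epsilon_1^v, \ldots, \epsilon_{d_v}^v \in \pmone$.

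The key identity is the Fourier expansion of the indicator that $x$ reaches $v$. Using $\mathbf{1}[x_{i_j} = \epsilon_j^v] = \frac{1 + \epsilon_j^v x_{i_j}}{2}$, one obtains
\begin{equation*}
\mathbf{1}[x \text{ reaches } v] \;=\; 2^{-d_v} \sum_{S \subseteq P_v} \Big(\prod_{i_j \in S} \epsilon_j^v\Big) \prod_{i \in S} x_i.
\end{equation*}
Substituting into the leaf decomposition and reading off coefficients gives
\begin{equation*}
\hat{f}(S) \;=\; \sum_{v \,:\, S \subseteq P_v} \mathrm{val}(v) \cdot 2^{-d_v} \cdot \prod_{i_j \in S} \epsilon_j^v,
\end{equation*}
which is a sum of $\pmone$-signed terms of magnitude $2^{-d_v}$.

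Now apply the triangle inequality and swap the order of summation:
\begin{equation*}
\sum_{|S|=\ell} |\hat{f}(S)| \;\le\; \sum_{|S|=\ell} \sum_{v \,:\, S \subseteq P_v} 2^{-d_v} \;=\; \sum_v 2^{-d_v} \cdot \#\{S \subseteq P_v : |S|=\ell\} \;=\; \sum_v 2^{-d_v} \binom{d_v}{\ell}.
\end{equation*}
Since $\binom{d_v}{\ell} \le \binom{d}{\ell}$ for all $d_v \le d$ (with the convention that $\binom{d_v}{\ell} = 0$ for $\ell > d_v$), and since $\sum_v 2^{-d_v} = 1$ because the leaves partition $\pmone^m$ with leaf $v$ receiving a $2^{-d_v}$ fraction of inputs, the right-hand side is bounded by $\binom{d}{\ell}$, completing the argument. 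There is essentially no obstacle here: the whole proof is a two-line bookkeeping once one writes down the leaf-indicator Fourier expansion; the only thing to be mindful of is using $|\mathrm{val}(v)| \le 1$ and the fact that the cancellations in $\hat{f}(S)$ can only help, so discarding them via triangle inequality is lossy but still sufficient for this bound.
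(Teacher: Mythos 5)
Your proof is correct and is essentially the same as the paper's: both express $\hat{f}(S)$ as a signed sum over leaves weighted by $2^{-d_v}$ (the paper writes this as $\E_\lambda[f(\lambda)\prod_{i\in S}(v_\lambda)_i]$ with $v_\lambda\in\{-1,0,1\}^n$, which unpacks to exactly your formula), then apply the triangle inequality and observe that each leaf contributes to at most $\binom{d_v}{\ell}\le\binom{d}{\ell}$ sets of size $\ell$, with the $2^{-d_v}$ weights summing to $1$. The only differences are notational.
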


We defer the proofs of Theorem~\ref{thm:first level-l inequality} and Claim~\ref{claim:second level-l inequality} to Section~\ref{sec:Fourier Growth Decision Trees}. We get the following corollary.

\begin{corollary}
	Let $F$ be a {randomized} decision tree of depth $d$ over $m$ variables $x_1, \ldots, x_m$. Then,
\begin{equation}\label{eq:first level-l inequality}
\forall \ell \in \{0,1,\ldots, d\}:
\quad L_{1,\ell}(F)=\sum_{S \subseteq[m]:|S|=\ell} {|\hat{F}(S)|} \le \sqrt{O(d)^{\ell} \cdot O(\log m)^{\ell-1}}
\end{equation}
and
\begin{equation}\label{eq:second level-l inequality}
\forall \ell \in \{0,1,\ldots, d\}:
\quad L_{1,\ell}(F)=\sum_{S \subseteq[m]:|S|=\ell} {|\hat{F}(S)|} \le \binom{d}{\ell}
\end{equation}
\end{corollary}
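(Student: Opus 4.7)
The plan is to reduce the randomized case to the deterministic case by the standard observation that a randomized decision tree is a convex combination of deterministic decision trees. Specifically, a randomized decision tree $F$ of depth $d$ is defined by fixing a distribution $\mu$ over deterministic decision trees $\{f_r\}_r$, each of depth at most $d$, and setting $F(x) = \E_{r \sim \mu}[f_r(x)]$ for every input $x \in \pmone^m$. Since $F$ agrees with the function $x \mapsto \E_{r \sim \mu}[f_r(x)]$ on every Boolean input, and the Fourier transform is unique and linear, we get for every $S \subseteq [m]$,
\[
\hat{F}(S) = \E_{r \sim \mu}\bigl[\hat{f_r}(S)\bigr].
\]

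Next, I would fix a level $\ell \in \{0, 1, \ldots, d\}$ and bound $L_{1,\ell}(F)$ by pulling the expectation outside the sum and applying the triangle inequality:
\[
L_{1,\ell}(F) = \sum_{S : |S| = \ell} \bigl|\E_{r \sim \mu}[\hat{f_r}(S)]\bigr| \le \E_{r \sim \mu} \sum_{S : |S| = \ell} |\hat{f_r}(S)| = \E_{r \sim \mu}\bigl[L_{1,\ell}(f_r)\bigr].
\]
Now I invoke Theorem~\ref{thm:first level-l inequality} on each deterministic tree $f_r$ (of depth at most $d$ over $m$ variables) to get $L_{1,\ell}(f_r) \le \sqrt{O(d)^{\ell} \cdot O(\log m)^{\ell-1}}$, and similarly Claim~\ref{claim:second level-l inequality} to get $L_{1,\ell}(f_r) \le \binom{d}{\ell}$. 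Both of these bounds are deterministic and uniform in $r$, so taking expectation over $r \sim \mu$ preserves them, yielding Eq.~\eqref{eq:first level-l inequality} and Eq.~\eqref{eq:second level-l inequality}, respectively.

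There is no real obstacle here: the only subtlety worth stating carefully is that the depth of $F$ being $d$ means there is some randomized protocol whose query path has length at most $d$ on every input and every random string, which is exactly the condition needed to express $F$ as a distribution over depth-$d$ deterministic trees. Once that is noted, the corollary follows immediately from the linearity of the Fourier transform, the triangle inequality, and the deterministic Fourier-growth bounds of Theorem~\ref{thm:first level-l inequality} and Claim~\ref{claim:second level-l inequality}.
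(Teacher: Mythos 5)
Your proposal is correct and is exactly the paper's argument, just spelled out: the paper compresses the linearity-of-Fourier plus triangle-inequality step into the one-line observation that $L_{1,\ell}(\cdot)$ is convex and a randomized tree is a convex combination of deterministic trees.
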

\begin{proof}
A randomized decision tree is a convex combination of deterministic decision trees. Since $L_{1,\ell}(\cdot)$ is convex, the bounds follow.
\end{proof}

We conjecture that the right bounds are better for higher levels:
\begin{conjecture}[Conjectured Level-$\ell$ Inequality for Decision Trees]
	Let $f$ be a (deterministic/randomized) decision tree of depth $d$ over $m$ variables $x_1, \ldots, x_m$. Then,
$$\forall \ell \in \{0,1,\ldots, d\}:
\quad L_{1,\ell}(f)=\sum_{S \subseteq[m]:|S|=\ell} {|\hat{f}(S)|} \le \sqrt{{d\choose \ell} \cdot O(\log m)^{\ell-1}}
$$
\end{conjecture}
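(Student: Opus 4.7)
The plan is to refine the leaf-decomposition approach underlying Claim~\ref{claim:second level-l inequality}. Writing $f=\sum_v f_v\cdot I_v$ as a sum over leaves of the tree, each path indicator $I_v$ expands as $2^{-d_v}\sum_{T\subseteq\pi_v}\chi_{T,v}\prod_{i\in T}x_i$, so
\[
\hat f(S) \;=\; \sum_{v\,:\,S\subseteq\pi_v} f_v\cdot 2^{-d_v}\cdot\chi_{S,v},
\]
where $\pi_v$ is the set of variables queried along the root-to-$v$ path and $\chi_{S,v}\in\{\pm1\}$ is the product of answers to queries in $S$. A naive triangle inequality combined with $\sum_v 2^{-d_v}\binom{d_v}{\ell}\le\binom{d}{\ell}$ recovers Claim~\ref{claim:second level-l inequality}; the conjectured bound is essentially the square root of this, so any successful argument must extract a systematic cancellation of signs $\chi_{S,v}$ across the leaves that contribute to a common $\hat f(S)$.

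As a sanity check, the conjecture is tight on juntas: a Boolean $d$-junta has at most $\binom{d}{\ell}$ nonzero level-$\ell$ Fourier coefficients, so Cauchy--Schwarz against Parseval gives $L_{1,\ell}(f)\le\sqrt{\binom{d}{\ell}}$ with no $\log m$ loss. The task for general depth-$d$ trees is to recover this junta-like behaviour up to a $(\log m)^{\ell-1}$ slack. A first attempt would be a $p$-random restriction with $p=\Theta(\ell/d)$: using that $\E_\rho[\hat{f_\rho}(S)\mid S\subseteq\mathrm{alive}]=\hat f(S)$, Jensen's inequality gives $\E_\rho[L_{1,\ell}(f_\rho)]\ge p^{\ell}L_{1,\ell}(f)$, so one could try to bound $L_{1,\ell}(f_\rho)$ in the restricted regime where $\mathrm{depth}(f_\rho)$ is concentrated around $\ell$. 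Unfortunately, plugging Theorem~\ref{thm:first level-l inequality} back in for $f_\rho$ only recovers (essentially) the existing bound --- the $p^{-\ell}$ loss exactly cancels the gain from the smaller depth.

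A more promising route is to revisit the proof of Theorem~\ref{thm:first level-l inequality} itself and eliminate an apparent over-counting. That proof proceeds by iteratively reducing level $\ell$ to level $\ell-1$, paying a factor of $O(\sqrt{d\log m})$ per level; consequently each subset $S$ of size $\ell$ is effectively counted $\ell!$ times, once per ordering of its elements. Replacing the single-coordinate recursion with a symmetric multi-coordinate step that treats the $\ell$ elements of $S$ simultaneously --- for example via Cauchy--Schwarz across orderings of $S$, or via a tensor-power argument on the Fourier spectrum --- should in principle yield a $\sqrt{\ell!}$ improvement, turning $d^{\ell}$ into $\binom{d}{\ell}$ while preserving the $(\log m)^{\ell-1}$ factor.

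The hard part will be that the level-$1$ inequality of O'Donnell--Servedio is inherently a coordinate-wise statement, bounding the contribution of single Fourier coefficients one at a time. Extending it to operate symmetrically on $\ell$-subsets while respecting the \emph{adaptive} structure of a decision tree seems to require a genuinely new Fourier-analytic primitive: hypercontractivity alone does not produce the $\binom{d}{\ell}$ saving, and the standard random-restriction and entropy toolbox does not naturally yield bounds scaling with the binomial rather than polynomial expressions of $d$. Identifying such a primitive --- one that captures the unorderedness of level-$\ell$ subsets of a depth-$d$ tree while accounting for possibly adversarial correlations between the signs $\chi_{S,v}$ at different leaves --- appears to be the heart of the conjecture.
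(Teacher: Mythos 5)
This statement is an open \emph{conjecture} in the paper: Tal explicitly labels it as such (both as Conjecture~\ref{conj} in the introduction and again in Section~\ref{sec:lower bound}), and no proof is given. The paper proves only the weaker bounds of Theorem~\ref{thm:first level-l inequality} (roughly $\sqrt{d^\ell\cdot O(\log m)^{\ell-1}}$, which carries an extra $\sqrt{\ell!}$-type factor beyond the conjectured $\sqrt{\binom{d}{\ell}\cdot O(\log m)^{\ell-1}}$) and the elementary $\binom{d}{\ell}$ bound of Claim~\ref{claim:second level-l inequality}. So there is no ``paper proof'' for you to match, and your write-up correctly does not claim to close the gap; it is an honest account of why obvious approaches fall short.

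Your analysis is sound where it is concrete. The leaf decomposition $\hat f(S)=\sum_{v:\,S\subseteq\pi_v}f_v\,2^{-d_v}\chi_{S,v}$ is correct, and the triangle-inequality computation recovering $\binom{d}{\ell}$ (via $\sum_v 2^{-d_v}\binom{d_v}{\ell}\le\binom{d}{\ell}$) matches the paper's Claim~\ref{claim:d choose l}. The junta sanity check agrees with the paper's footnote. The conditional identity $\E_\rho[\hat{f_\rho}(S)\mid S\subseteq\mathrm{alive}]=\hat f(S)$ and the Jensen step giving $\E_\rho[L_{1,\ell}(f_\rho)]\ge p^\ell L_{1,\ell}(f)$ are both correct. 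Your diagnosis that the proven Theorem~\ref{thm:main-decision-trees} loses roughly a $\sqrt{\ell!}$ factor because the recursion pays a fresh $O(\log m)$ per level (so the product $\prod_{i=0}^{\ell-1}\sqrt{\log(en^i/p)}$ accumulates $(\ell-1)!$ inside the square root) is exactly where the gap to the conjecture lives. One small quibble: the random-restriction bookkeeping in your second paragraph is slightly optimistic — with $p=\Theta(\ell/d)$ the $p^{-\ell}$ transfer loss actually overwhelms the $p^{\ell/2}$ gain from $\sqrt{\binom{pd}{\ell}}$, so that route gives something strictly worse than the existing bound, not merely a wash — but your conclusion that restriction alone cannot close the gap is right. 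In short: there is a genuine gap, namely that you have no proof, which is expected since the paper leaves this open; your discussion of why the known techniques do not suffice is essentially correct and aligned with what the paper's own proof strategy can and cannot deliver.
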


\subsection{Moment Bounds on $\cDUk$}

\paragraph{Good Orthogonal Matrices.}
We define a pseudorandomness property of orthogonal matrices, from which we will deduce moment bounds on the distribution $\cDUk$.
\begin{definition}[Good Orthogonal Matrices]\label{def:good}
Let $\U$ be an $N$-by-$N$ orthogonal matrix.
	We say that $U$ is {\bf good} if for all $k, \ell \in [N]$, any $k$-by-$\ell$ sub-matrix $W$ of $U$ satisfies $\|W\|\le \sqrt{\frac{100 (k+\ell)\ln N}{N}}$.
\end{definition}
It is not difficult to see that the Hadamard matrix is not good. For example, the Hadamard matrix has a $\sqrt{N}\times \sqrt{N}$ sub-matrix $W$, all whose entries equal $+1/\sqrt{N}$, and thus the norm of $W$ equals $1 \gg \sqrt{\frac{100 (\sqrt{N}+\sqrt{N})\ln N}{N}}$.
On the other hand, we prove that most orthogonal matrices are good.

\begin{lemma}[Most Orthogonal Matrices are Good]\label{lemma:most matrices are good}
Let $U$ be a random orthogonal $N$-by-$N$ matrix. Then, with high probability over the choice of $\U$, $\U$ is good.
\end{lemma}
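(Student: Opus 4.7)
The plan is a two-stage union bound: first an $\eps$-net over unit vectors to turn operator-norm control into bilinear-form concentration, and then a union bound over the (few) nontrivial choices of $k$, $\ell$, and the row/column subsets indexing the submatrix.

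\textbf{Reduction to bilinear forms.} Fix $k,\ell\in[N]$, a row set $R\subseteq[N]$ with $|R|=k$, and a column set $C\subseteq[N]$ with $|C|=\ell$; let $W$ be the corresponding submatrix of $U$. Observe that $\|W\|=\sup_{u\in S^{k-1},\,v\in S^{\ell-1}}u^{\T}Wv = \sup \tilde{u}^{\T}U\tilde{v}$, where $\tilde{u},\tilde{v}\in S^{N-1}$ are the zero-padded extensions of $u,v$ to $\R^N$. A standard discretization argument (using a $\tfrac14$-net $\NN_k\subseteq S^{k-1}$ with $|\NN_k|\le 9^k$, and similarly $\NN_\ell$) yields $\|W\|\le 2\max_{(u,v)\in \NN_k\times \NN_\ell}|\tilde{u}^{\T}U\tilde{v}|$.

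\textbf{Concentration of a single bilinear form.} For any fixed unit vectors $\tilde{u},\tilde{v}\in\R^N$, Haar invariance of $U$ implies that $U\tilde{v}$ is uniform on $S^{N-1}$, hence $\tilde{u}^{\T}U\tilde{v}$ has the same distribution as the first coordinate $\xi_1$ of a uniformly random point on $S^{N-1}$. The standard sub-Gaussian concentration on the sphere (which follows from the explicit density $\propto(1-x^2)^{(N-3)/2}$ and the bound $1-x^2\le e^{-x^2}$) gives
$$\pr[\,|\tilde{u}^{\T}U\tilde{v}|\ge s\,]\le 2\exp\!\left(-\tfrac{(N-1)s^{2}}{2}\right)$$
for $s\in[0,1]$. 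Combined with the net bound, for any fixed $R,C$ of sizes $k,\ell$,
$$\pr\!\left[\,\|W\|\ge 2s\,\right]\le 2\cdot 9^{k+\ell}\exp\!\left(-\tfrac{(N-1)s^{2}}{2}\right).$$

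\textbf{Choosing the threshold and union bounding over submatrices.} First observe that since $W$ is a submatrix of an orthogonal matrix, $\|W\|\le 1$ deterministically, so the desired bound $\|W\|\le\sqrt{100(k+\ell)\ln N/N}$ is trivial whenever $k+\ell\ge N/(100\ln N)$. Assume $k+\ell<N/(100\ln N)$. Set $s=\tfrac12\sqrt{100(k+\ell)\ln N/N}$, so that $2s$ is exactly the target threshold. Plugging into the tail bound, the exponent becomes roughly $-12(k+\ell)\ln N$, yielding failure probability at most $2\cdot 9^{k+\ell}\cdot N^{-12(k+\ell)}$ for a fixed choice of $R,C$. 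Summing over the $\binom{N}{k}\binom{N}{\ell}\le N^{k+\ell}$ choices of $(R,C)$ gives failure at most $2\cdot(9\cdot N^{-10})^{k+\ell}$, and summing the geometric series over $k,\ell\ge 1$ bounds the total failure probability by $O(N^{-19})=o(1)$.

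\textbf{Main obstacle.} The only nonobvious input is the sub-Gaussian tail for the first coordinate of uniform measure on $S^{N-1}$; this is classical but must be quoted or verified carefully so that the exponent is $\Theta(N)s^{2}$ rather than a worse dependence, since the $9^{k+\ell}$ factor from the net and the $N^{k+\ell}$ factor from the choice of $(R,C)$ force the exponent to grow at least like $(k+\ell)\ln N$. With the right concentration constant in hand, the counting and choice of $s$ above give room to absorb all the union-bound factors with constant $100$ to spare.
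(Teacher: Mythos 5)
Your proof is correct and follows essentially the same two-stage strategy as the paper's: a $\tfrac14$-net on each side turns operator-norm control into bilinear-form concentration (the paper's Claim~\ref{claim:fixed S,T}), a sub-Gaussian tail for a single fixed bilinear form supplies the base estimate (the paper's Claim~\ref{claim:U11}), and a union bound over submatrix sizes and index sets closes the argument. The only cosmetic difference is that you derive the spherical sub-Gaussian tail from the Beta density $\propto(1-x^2)^{(N-3)/2}$, whereas the paper writes the first coordinate as $Z_1'/\|Z'\|_2$ and combines a Gaussian tail with a $\chi^2$ lower-tail bound; note that the bound $1-x^2\le e^{-x^2}$ actually yields exponent proportional to $(N-3)$ rather than $(N-1)$ as you wrote, but this is immaterial to the union bound and to the constant $100$ in Definition~\ref{def:good}.
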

Furthermore, we show that whenever $U$ is good, we get moment bounds on the corresponding distribution $\cDUk$, defined with respect to $U$.

\begin{lemma}[Moment Bounds for Good Matrices]\label{lemma:good matrices imply moment bounds}
Suppose that $\U$ is a good orthogonal matrix and $\cDUk$ is defined with respect to $\U$.
Then, there exists a universal constant $c$, such that for any $\emptyset \neq S\subseteq [kN]$,	
	$$|\hat{\cDUk}(S)| \le \left(\frac{c \cdot |S| \cdot \log N}{N}\right)^{|S|\cdot (1-1/k)/2},$$
	and for any non-empty set $S$ of size less than $k$, we have $\hat{\cDUk}(S)=0$.
\end{lemma}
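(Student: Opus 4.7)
The proof rests on a clean factorization of $\hat{\cDUk}(S)$ that exploits the independence of $X^{(1)}, \ldots, X^{(k-1)}$ in the definition of $G_k$. Partition $S$ as $S = S_1 \sqcup \cdots \sqcup S_k$ with $S_i \subseteq [N]$ collecting the coordinates of $Z^{(i)}$. From the definition of $G_k$ one has $\sgn(Z^{(1)}_j) = \sgn(X^{(1)}_j)$, $\sgn(Z^{(k)}_j) = \sgn((U^{\T} X^{(k-1)})_j)$, and $\sgn(Z^{(i)}_j) = \sgn((U^{\T} X^{(i-1)})_j)\,\sgn(X^{(i)}_j)$ for $2 \le i \le k-1$. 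Grouping the factors of $\prod_{s\in S}\sgn(Z_s)$ according to which independent $X^{(r)}$ they depend on and taking expectations yields
\[\hat{\cDUk}(S) \;=\; \prod_{r=1}^{k-1} A_r,\qquad A_r \;:=\; \E_{X \sim \NN(0, I_N)}\!\Big[\prod_{j \in S_r}\sgn(X_j)\,\prod_{j \in S_{r+1}}\sgn((U^{\T} X)_j)\Big].\]

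\textbf{Vanishing for $|S|<k$.} If $0 < |S| < k$, pigeonhole forces some $S_i$ to be empty while some $S_j$ is nonempty, so walking along the sequence $S_1, \ldots, S_k$ there is a boundary index $r \in \{1, \ldots, k-1\}$ with exactly one of $S_r, S_{r+1}$ empty. The corresponding factor $A_r$ then reduces to $\E[\prod_{j \in T} \sgn(\xi_j)]$, where $T$ is the nonempty side and $\xi$ is one of $X$ or $U^{\T} X$; both are distributed as $\NN(0, I_N)$ since $U$ is orthogonal, so their entries are i.i.d.\ standard Gaussians, their signs are i.i.d.\ mean-zero Rademacher, and hence $A_r = 0$. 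This forces $\hat{\cDUk}(S) = 0$.

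\textbf{Main bound and obstacle.} For $|S| \ge k$, the pair $(X_{S_r}, (U^{\T} X)_{S_{r+1}})$ is jointly centered Gaussian with identity diagonal blocks and cross-covariance given by the submatrix $W_r := U_{S_r, S_{r+1}}$, which by the goodness assumption satisfies $\|W_r\| \le \sqrt{100(|S_r|+|S_{r+1}|)\log N / N}$. Expanding each $\sgn$ via its Hermite series (supported on odd levels, with a nonvanishing first-level coefficient) and evaluating the resulting Gaussian expectation by Wick's theorem, every contraction between a factor in $X_{S_r}$ and a factor in $(U^{\T} X)_{S_{r+1}}$ contributes an entry of $W_r$, while self-contractions within one side are controlled by the fact that components of a standard Gaussian vector are uncorrelated. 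Summing the contractions and bounding them by an appropriate power of $\|W_r\|$ would give $|A_r| \le \big(O((|S_r|+|S_{r+1}|)\log N / N)\big)^{\beta_r}$ for a suitable $\beta_r$; multiplying across $r$ and using $\sum_r (|S_r|+|S_{r+1}|) \le 2|S|$ together with a convexity check on the partition should deliver the advertised exponent $|S|(1-1/k)/2$. The hard part is extracting a sufficiently tight bound on each $|A_r|$: the Hermite coefficients of $\sgn$ decay only polynomially, so one must control contributions from arbitrarily high Hermite levels (including repeated contractions between the same pair of Gaussians) and show that they do not overwhelm the permanent-like leading term. A further subtlety is the unbalanced case $|S_r| \neq |S_{r+1}|$ (forced to have the same parity by the $X \mapsto -X$ symmetry), where no first-level perfect matching exists and the excess sign factors must be absorbed into contractions involving higher-odd-order Hermite polynomials; balancing the resulting combinatorial bookkeeping against the spectral bound on $W_r$ is where most of the technical work lies.
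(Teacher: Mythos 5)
Your factorization $\hat{\cDUk}(S) = \prod_{r=1}^{k-1} A_r$ with $A_r = \E\big[\prod_{j\in S_r}\sgn(X_j)\prod_{j\in S_{r+1}}\sgn((U^{\T}X)_j)\big]$ matches the paper exactly, and your vanishing argument for $|S|<k$ (finding a boundary $r$ where exactly one of $S_r,S_{r+1}$ is empty and observing that the signs on the nonempty side are i.i.d.\ Rademacher) is correct and a bit more elementary than the paper's, which invokes the odd-function/parity observation that $\tilde{\U}(S,T)=0$ when $|S|+|T|$ is odd.

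However, the main bound has two genuine gaps. First and most seriously, the core analytic estimate --- that $|A_r| \le (O(\|U_{S_r,S_{r+1}}\|))^{\max\{|S_r|,|S_{r+1}|\}}$ --- is not actually proven. You propose a Hermite-expansion-plus-Wick approach and then honestly flag that controlling the polynomially decaying Hermite tails, the self-contractions, and the unbalanced $|S_r|\neq|S_{r+1}|$ case is ``where most of the technical work lies.'' That work is the heart of the lemma. The paper takes a different route: it writes $A_r=\tilde{\U}(S_r,S_{r+1})$ as a Lebesgue integral with respect to the joint Gaussian, splits the inverse covariance into a block structure and Taylor-expands $e^{Q(a)+L(a)}$ in the sign variables $a\in\{-1,1\}^{S_r}$, then exploits the cancellation of all terms of degree $<\max\{|S_r|,|S_{r+1}|\}$ in $a$ after summing over signs, and finally controls the surviving terms with chi-squared moment bounds on $\|\tilde X\|,\|\tilde Y\|$. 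The cancellation-by-sign-summation is precisely what pins the exponent to $\max\{|S_r|,|S_{r+1}|\}$; your Hermite sketch would ultimately need to recover an equivalent cancellation and tail estimate, and as written it does not.

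Second, even granting the bound on each $A_r$, the combinatorics you invoke are not the ones needed. You cite $\sum_r(|S_r|+|S_{r+1}|)\le 2|S|$ and ``a convexity check,'' but since the base $c|S|\log N/N$ is $<1$ you need a \emph{lower} bound on the total exponent, not an upper bound. The correct ingredient (the paper's Lemma~\ref{lemma:all same}) is $\sum_{r=1}^{k-1}\max\{|S_r|,|S_{r+1}|\}\ge |S|\cdot(k-1)/k$, proved by dropping the minimum term from the telescoping sum. Without this inequality, the exponent $|S|(1-1/k)/2$ does not follow from multiplying the per-factor bounds.
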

We defer the proofs of both lemmata to Section~\ref{sec:moment bounds}.
\subsection{Pseudorandomness of $\cDUk$}
\begin{theorem}\label{thm:main}
Suppose that $\U$ is a good orthogonal matrix and $\cDUk$ is defined with respect to $\U$. 
Let $F$ be a {randomized} decision tree of depth $d$ over $kN$ variables.
Suppose that $d = o(N^{2(k-1)/(3k-1)}/\log(kN))$.
Then, 
$$
\E[F(\cU_k) - F(\cDUk)] \le 
\sqrt\frac{O(d\cdot \log(kN))^{(3k-1)/2}}{N^{k-1}}
$$
\end{theorem}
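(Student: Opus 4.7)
The plan is to execute the Fourier-analytic strategy outlined in the introduction. Expanding $F$ in the Fourier basis and using $\E_{z\sim\cU_k}[F(z)]=\hat F(\emptyset)$, the advantage equals $\bigl|\sum_{S\ne\emptyset}\hat F(S)\,\hat{\cDUk}(S)\bigr|$. By Lemma~\ref{lemma:good matrices imply moment bounds}, $\hat{\cDUk}(S)=0$ for every non-empty $S$ of size strictly less than $k$, so the triangle inequality reduces the task to bounding
$$\sum_{\ell=k}^{kN} L_{1,\ell}(F)\cdot\max_{|S|=\ell}\bigl|\hat{\cDUk}(S)\bigr| \;\le\; \sum_{\ell=k}^{kN} L_{1,\ell}(F)\cdot\Bigl(\tfrac{c\ell\log N}{N}\Bigr)^{\ell(k-1)/(2k)},$$
using the moment bound from the same lemma. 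The remaining task is to select the sharper $L_{1,\ell}(F)$ estimate at each $\ell$ and sum.

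For each $\ell$ I take whichever of the two available bounds is smaller: Theorem~\ref{thm:first level-l inequality} for $k\le\ell\le\ell^\star$ and Claim~\ref{claim:second level-l inequality} for $\ell>\ell^\star$, with crossover $\ell^\star\approx\sqrt{d/\log(kN)}$. In the small-$\ell$ regime the per-level contribution can be written (up to a $1/\sqrt{\log(kN)}$ prefactor) as $[\lambda(\ell)]^{\ell/2}$, where
$$\lambda(\ell) \;=\; O(d\log(kN))\cdot\bigl(O(\ell\log N)/N\bigr)^{(k-1)/k}.$$
The hypothesis $d=o(N^{2(k-1)/(3k-1)}/\log(kN))$ is tuned precisely so that $\lambda(\ell)\le 1/4$ uniformly for $k\le\ell\le\ell^\star$, giving geometric decay of the summands and a partial sum dominated by the $\ell=k$ term. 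In the large-$\ell$ regime, the combinatorial factor $\binom{d}{\ell}\le(ed/\ell)^\ell$ from Claim~\ref{claim:second level-l inequality}, together with the shrinking moment factor, produces an estimate that decays super-geometrically; the tail is therefore dominated by its initial term near $\ell^\star$. Combining the two regimes and absorbing polylogarithmic factors into the $O(\cdot)$ inside the exponent produces the quantitative bound $\sqrt{O(d\log(kN))^{(3k-1)/2}/N^{k-1}}$; the exponent $(3k-1)/2$ emerges from balancing the $d^\ell$ growth of the first $L_{1,\ell}$ bound against the $(\ell/N)^{\ell(k-1)/k}$ decay of the moment bound at the worst index.

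The main obstacle is the bookkeeping needed to splice the two $L_{1,\ell}$ bounds. Individually neither works across the full range: Theorem~\ref{thm:first level-l inequality} becomes wasteful once $\ell\gtrsim\sqrt d$, while Claim~\ref{claim:second level-l inequality} is far too crude for small $\ell$. The constraint on $d$ in the hypothesis is exactly what is needed so that $\lambda(\ell)$ stays below $1$ throughout the small-$\ell$ regime, and one has to carefully track how the polylog$(kN)$ factors hidden in the various big-$O$'s propagate through both the level-$\ell$ inequalities and the moment bounds. Once these manipulations are carried out and the two ranges are combined, the geometric series estimate and the claimed exponent fall out of routine calculation.
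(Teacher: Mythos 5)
Your proposal is correct and follows essentially the same route as the paper: expand in Fourier basis, use the moment bounds from Lemma~\ref{lemma:good matrices imply moment bounds}, split at $\ell^\star\approx\sqrt{d/\log(kN)}$, apply Theorem~\ref{thm:first level-l inequality} on the low-degree range and Claim~\ref{claim:second level-l inequality} on the high-degree range, and sum the resulting (super-)geometric series. The identification of the hypothesis on $d$ as exactly the condition making $\lambda(\ell)<1$ on the low range, and the observation that the $\ell=k$ term dominates, are precisely the mechanism in the paper's proof.
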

\begin{proof}
We have
\begin{align*}
|\E[F(\cU_k) - F(\cDUk)]|
&= \big|\sum_{S\neq \emptyset} {\hat{F}(S) \cdot \hat{\cDUk}(S)}\big|\\
&\le \sum_{\ell=k}^{kN} \sum_{|S|=\ell}{|\hat{F}(S) \cdot \hat{\cDUk}(S)|} \\
&\le \sum_{\ell=k}^{kN} \left(\max_{|S|=\ell} |\hat{\cDUk}(S)|\right) \cdot \sum_{|S|=\ell}{|\hat{F}(S)|}\\
&\le \sum_{\ell=k}^{kN}	\left(\frac{c \cdot \ell \cdot \ln N}{N}\right)^{\ell\cdot (1-1/k)/2} \cdot  L_{1,\ell}(F)
\end{align*}
Now we break the right hand side above to two sub-sums:
\begin{enumerate}
	\item for $\ell \le \sqrt{d/\log(kN)}$ we will use the bounds on $L_{1,\ell}(F)$ from Eq.~\eqref{eq:first level-l inequality}.
	\item for $\ell > \sqrt{d/\log(kN)}$ we will use the bounds on $L_{1,\ell}(F)$ from Eq.~\eqref{eq:second level-l inequality}.
\end{enumerate}
That is, we bound the lower order terms by
\begin{align*}
&\sum_{\ell=k}^{\sqrt{d/\log(kN)}}	\left(\frac{c \cdot \ell \cdot \ln N}{N}\right)^{\ell\cdot (1-1/k)/2} \cdot  L_{1,\ell}(F)	\\
&\le \sum_{\ell=k}^{\sqrt{d/\log(kN)}}	\left(\frac{c \cdot \ell \cdot \ln N}{N}\right)^{\ell\cdot (1-1/k)/2} \cdot  O(d \cdot \log(kN))^{\ell/2}\\
&\le \sum_{\ell=k}^{\sqrt{d/\log(kN)}}	\left(\frac{c \cdot \sqrt{d/\log(kN)} \cdot \ln N}{N}\right)^{\ell\cdot (1-1/k)/2} \cdot  O(d \cdot \log(kN) )^{\ell/2}\\
&\le \sum_{\ell=k}^{\sqrt{d/\log(kN)}} O\left(\frac{(d\cdot \log(kN))^{1+(1-1/k)/2}}{N^{1-1/k}}\right)^{\ell/2} \\
&\le O\left(\frac{(d\cdot \log(kN))^{1+(1-1/k)/2}}{N^{1-1/k}}\right)^{k/2} = \sqrt{\frac{O(d\cdot \log(kN))^{(3k-1)/2}}{N^{k-1}}}
\end{align*}
where in the last inequality, the assumption that $d = o(N^{2(k-1)/(3k-1)}/\log(kN))$ is used to deduce that this is a decreasing geometric progression.
We bound the higher order terms by
\begin{align*}
&\sum_{\ell > \sqrt{d/\log(kN)}}	\left(\frac{c \cdot \ell \cdot \ln N}{N}\right)^{\ell\cdot (1-1/k)/2} \cdot  L_{1,\ell}(F)	\\
&\le \sum_{\ell > \sqrt{d/\log(kN)}}	\left(\frac{c \cdot \ell \cdot \ln N}{N}\right)^{\ell\cdot (1-1/k)/2} \cdot  \binom{d}{\ell} \\
&\le \sum_{\ell > \sqrt{d/\log(kN)}}\left(\left(\frac{c \cdot \ell \cdot \ln N}{N}\right)^{(1-1/k)/2} \cdot \frac{e\cdot d}{\ell}\right)^{\ell} \\
&\le \sum_{\ell > \sqrt{d/\log(kN)}}	O\left(\left(\frac{\log N}{N}\right)^{(1-1/k)/2} \cdot \frac{d}{\ell^{1-(1-1/k)/2}}\right)^{\ell} \\
&\le \sum_{\ell > \sqrt{d/\log(kN)}}	O\left(\left(\frac{\log N}{N}\right)^{(1-1/k)/2} \cdot \frac{d}{\sqrt{d/\log(kN)}^{1-(1-1/k)/2}}\right)^{\ell} \\
&\le \sum_{\ell > \sqrt{d/\log(kN)}}	O\left(\frac{(d\cdot \log(kN))^{1+(1-1/k)/2}}{N^{1-1/k}}\right)^{\ell/2} \\
&\le O\left(\frac{(d\cdot \log(kN))^{1+(1-1/k)/2}}{N^{1-1/k}}\right)^{k/2}  
= \sqrt{\frac{O(d\cdot \log(kN))^{(3k-1)/2}}{N^{k-1}}}\;.\qedhere
\end{align*}
\end{proof}
Similarly, in Appendix~\ref{app:B}, we show that assuming Conjecture~\ref{conj},  for any good $\U$, the distribution $\cDUk$ is pseudorandom against any depth $N^{1-1/k}/\polylog(N)$ decision tree.

\subsection{Shallow Randomized Decision Tree Cannot Solve the $k$-fold Rorrelation Problem}

We prove the following lower bound on the depth of randomized decision trees solving the $k$-fold Rorrelation Problem.

\begin{theorem}\label{thm:lb}
Let $\U$ be a good orthogonal $N$-by-$N$ matrix.
Let $k\ge 2$ be such that $16 \cdot 8^k\le N$. 
Suppose that $F$ is a randomized decision tree of depth $d$ solving the $k$-fold Rorrelation problem with success probability at least $2/3$.
Then, $d\ge \Omega(N^{2(k-1)/(3k-1)}/(k\log(kN)))$.
\end{theorem}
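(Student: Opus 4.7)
I would argue by contradiction, combining Theorem~\ref{thm:main} with a standard amplification argument. Assume that $F$ is a randomized decision tree of depth $d$ solving the $k$-fold Rorrelation problem with success probability at least $2/3$.

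Since Theorem~\ref{thm:main} yields only an $o(2^{-k})$-type pseudorandomness bound, I would first amplify $F$ so that its error is well below $2^{-k}$. Running $m = O(k)$ independent copies of $F$ and taking their majority vote produces, by a Chernoff bound, a randomized decision tree $F'$ of depth $d' = O(dk)$ whose error on every promise-satisfying input is at most $2^{-k}/8$.

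Next, I would show $\E_{\cDUk}[F'] - \E_{\cU_k}[F'] \ge \Omega(2^{-k})$. Using $\phi_{\U} \le 1$ and the lower bound $\E_{\cDUk}[\phi_{\U}] \ge (2/\pi)^{k-1} \ge 2\cdot 2^{-k}$ from Section~\ref{sec:cD_k} (valid for $k \ge 1$), the reverse Markov inequality gives $\Pr_{\cDUk}[\mathrm{YES}] \ge 2^{-k}$; correctness of $F'$ on YES-inputs then yields $\E_{\cDUk}[F'] \ge 2^{-k}(1 - 2^{-k}/8)$. Symmetrically, the variance identity $\var_{\cU_k}[\phi_{\U}] = 1/N$ from Claim~\ref{claim:concentration under uniform} combined with Chebyshev's inequality gives $\Pr_{\cU_k}[\text{not NO}] \le 4^{k+1}/N \le 2^{-k}/4$, where the last inequality uses the hypothesis $16\cdot 8^k \le N$; correctness of $F'$ on NO-inputs then yields $\E_{\cU_k}[F'] \le 2^{-k}/8 + 2^{-k}/4$. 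Subtracting produces an advantage of at least $2^{-k}/2$.

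Finally, applying Theorem~\ref{thm:main} to the depth-$d'$ randomized decision tree $F'$ bounds this same advantage by $\sqrt{O(d'\log(kN))^{(3k-1)/2}/N^{k-1}}$. Demanding that this not exceed $2^{-k}/4$, squaring, and raising to the power $2/(3k-1)$ --- noting that $4^{-2k/(3k-1)} = \Theta(1)$ because $2k/(3k-1) \le 1$ for $k \ge 1$ --- forces $d'\log(kN) = O(N^{2(k-1)/(3k-1)})$. Substituting $d' = O(dk)$ yields $d = \Omega(N^{2(k-1)/(3k-1)}/(k\log(kN)))$, the claimed bound. The main technical care is bookkeeping of constants: verifying that the $O(k)$ amplification overhead is absorbed precisely into the $1/k$ factor in the denominator of the conclusion, and that the hypothesis $16\cdot 8^k \le N$ is sharp enough to make $\Pr_{\cU_k}[\text{not NO}] \ll 2^{-k}$, so that the $\Omega(2^{-k})$ advantage genuinely survives.
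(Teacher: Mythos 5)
Your proposal is correct and follows essentially the same route as the paper's own proof: amplify $F$ over $O(k)$ repetitions to push the error below $\Theta(2^{-k})$, bound $\Pr_{\cDUk}[\mathrm{YES}]\ge 2^{-k}$ via Markov from the expected Rorrelation bound, bound $\Pr_{\cU_k}[\mathrm{not\ NO}]$ via Chebyshev using $16\cdot 8^k\le N$, and contradict Theorem~\ref{thm:main} with the resulting $\Omega(2^{-k})$ distinguishing advantage. The paper simply packages the distinguishing step as Claim~\ref{claim:hard-dist} (and averages to a deterministic tree there), but your inlined argument has the same technical content and the same constants.
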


Towards proving Theorem~\ref{thm:lb}, we show that $\frac{1}{2}\cU_k + \frac{1}{2}\cDUk$ is a somewhat hard-distribution for any depth-$d$ randomized decision trying to solve the Rorrelation problem, as long as $d = o(N^{2(k-1)/(3k-1)}/\log(kN))$.

\begin{claim}\label{claim:hard-dist} Assume that $16 \cdot 8^k \le N$, and $\U$ is a good orthogonal $N$-by-$N$ matrix.
	Let $F$ be a randomized decision tree of depth 
	$d = o(N^{2(k-1)/(3k-1)}/\log(kN))$.
	Then,
\begin{equation}\label{eq:hard}
\Pr[\text{$z$ is legal input to Rorrelation, and $F$ misclassifies $z$}] \ge \frac{1}{8} \cdot 2^{-k}
\end{equation}
	where the probability is taken over the randomness of $z\sim \frac{1}{2}\cU_k + \frac{1}{2}\cDUk$ and the internal randomness of $F$.
\end{claim}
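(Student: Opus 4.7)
The plan is a compact linearity-of-expectation accounting that combines three ingredients already in place: under $\cU_k$ the Rorrelation concentrates at $0$ so the input is almost surely a legal NO-instance; under $\cDUk$ the Rorrelation exceeds $2^{-k}$ with probability at least $2^{-k}$, so the input is a legal YES-instance with that probability; and by Theorem~\ref{thm:main}, any randomized decision tree $F$ of the allowed depth has nearly the same acceptance probability on $\cU_k$ and on $\cDUk$. Since a correct algorithm would need to reject under $\cU_k$ but accept on the YES-instances produced by $\cDUk$, a single acceptance probability (up to a tiny error) cannot serve both distributions at once.

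I would first record the two distributional facts. Writing $f(z)=\E_F[F(z)]\in[0,1]$ for the acceptance probability of $F$ on $z$, and setting $\alpha_{\cU}:=\E_{\cU_k}[f]$ and $\alpha_{\cD}:=\E_{\cDUk}[f]$: from Claims~\ref{claim:expectation under uniform}--\ref{claim:concentration under uniform} together with Chebyshev,
$$\Pr_{z\sim\cU_k}\bigl[\,|\phi_\U(z)|>\tfrac12\cdot 2^{-k}\,\bigr]\ \le\ \frac{4^{k+1}}{N}\ \le\ 2^{-(k+2)},$$
where the last inequality uses the hypothesis $16\cdot 8^k\le N$; hence $z\sim\cU_k$ is a legal NO-instance with probability at least $1-2^{-(k+2)}$. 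For the other side, the expectation bound in Section~\ref{sec:cD_k} gives $\E_{\cDUk}[\phi_\U(z)]\ge(2/\pi)^{k-1}\ge 2\cdot 2^{-k}$ (the second inequality uses $4/\pi>1$), and since $\phi_\U(z)\le 1$ on Boolean inputs, a reverse-Markov split $\E[\phi_\U]\le \Pr[\phi_\U\ge 2^{-k}] + 2^{-k}$ yields $\Pr_{z\sim\cDUk}[\phi_\U(z)\ge 2^{-k}]\ge 2^{-k}$.

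For the combine, I would bound the two sources of error separately: every accepting run of $F$ on a legal NO-instance under $\cU_k$ is a misclassification, and every rejecting run on a legal YES-instance under $\cDUk$ is a misclassification, giving
$$\Pr_{z\sim\cU_k,\,F}\bigl[F(z)=1\text{ and $z$ is NO}\bigr]\ \ge\ \alpha_{\cU}-2^{-(k+2)},\qquad \Pr_{z\sim\cDUk,\,F}\bigl[F(z)=0\text{ and $z$ is YES}\bigr]\ \ge\ 2^{-k}-\alpha_{\cD}.$$
Averaging these two lower bounds and applying Theorem~\ref{thm:main} in the form $\alpha_{\cD}-\alpha_{\cU}\le\eta$, with $\eta=\sqrt{O(d\log(kN))^{(3k-1)/2}/N^{k-1}}$, yields
$$\Pr_{z,F}\bigl[z\text{ is legal and $F$ misclassifies}\bigr]\ \ge\ \tfrac12\bigl(2^{-k}-2^{-(k+2)}-\eta\bigr).$$
The hypothesis $d=o(N^{2(k-1)/(3k-1)}/\log(kN))$ drives $\eta$ below $2^{-(k+1)}$ for sufficiently large $N$ (the implicit constant in the asymptotic can be chosen small enough in terms of $k$), and the right-hand side then becomes at least $\tfrac18\cdot 2^{-k}$.

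No genuine obstacle remains: Theorem~\ref{thm:main} and the Rorrelation computations of Section~\ref{sec:cD_k} are the heavy machinery, and what is left is a linearity-of-expectation bookkeeping. The only delicate point is the quantitative tracking of the pseudorandomness error $\eta$ against $2^{-k}$, which is precisely why the claim is stated with the comfortable constant $\tfrac18$ rather than anything tighter.
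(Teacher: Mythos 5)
Your proof is correct and follows essentially the same route as the paper's: both rest on the concentration of $\phi_\U$ under $\cU_k$ (Claim~\ref{claim:concentration under uniform}), the expectation bound for $\cDUk$ from Section~\ref{sec:cD_k} plus a reverse-Markov step, and the pseudorandomness error bound of Theorem~\ref{thm:main}, with the same numeric bookkeeping against $2^{-k}$. The only cosmetic differences are that you argue directly rather than by contradiction and you skip the (unneeded) averaging down to a deterministic tree, since Theorem~\ref{thm:main} already applies to randomized trees.
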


Before proving Claim~\ref{claim:hard-dist}, we show that it implies Theorem~\ref{thm:lb}.
\begin{proof}[Proof of Thm.~\ref{thm:lb}]
	Suppose that $F$ is a depth $d$ randomized decision tree with success probability at least $2/3$. 
	Then, one can amplify the success probability to at least $1-\frac{1}{10} \cdot 2^{-k}$, by running $F$ sequently $\Theta(k)$ many times and taking the majority vote.
	Thus, we get a randomized decision tree $F'$ of depth $d' = \Theta(d\cdot k)$ such that
	\begin{itemize}
			\item On any YES instance, $F'$ accepts with probability at least $1-\frac{1}{10} \cdot 2^{-k}$.
			\item On any NO instance, $F'$ accepts with probability at most $\frac{1}{10} \cdot 2^{-k}$.
	\end{itemize}
	In particular, Eq.~\eqref{eq:hard} does not hold for $F'$, which by Claim~\ref{claim:hard-dist} implies $d' \ge \Omega( N^{2(k-1)/(3k-1)}/\log(kN))$. Recalling that $d'= \Theta(d\cdot k)$ completes the proof.
\end{proof}

\begin{proof}[Proof of Claim~\ref{claim:hard-dist}]
Assume by contradiction that this is not the case. Then, there exists a randomized decision tree $F$, with
$$\Pr[\text{$z$ is legal input to Rorrelation, and $F$ misclassifies $z$}] < \frac{1}{8}\cdot 2^{-k}.$$
By averaging, there exists a deterministic decision tree $f$ with 
$$
\Pr_{z\sim \frac{1}{2}\cU_k + \frac{1}{2}\cDUk}[\text{$z$ is legal input to Rorrelation, and $f$ misclassifies $z$}] < \frac{1}{8}\cdot 2^{-k}.
$$
In particular, this means that on the uniform distribution 
$$
\Pr_{z\sim \cU_k}[\text{$z$ is legal input to Rorrelation, and $f$ misclassifies $z$}] \le \frac{1}{4}\cdot 2^{-k}
$$
and on the distribution $\cDUk$ we have 
$$
\Pr_{z\sim \cDUk}[\text{$z$ is legal input to Rorrelation, and $f$ misclassifies $z$}] \le \frac{1}{4}\cdot 2^{-k}.
$$
We will show that $f$ distinguishes between $\cDUk$ and $\cU_k$ which will be a contradiction to Theorem~\ref{thm:main}.

For $z\sim \cU_k$ we have that with  probability at least $1-4^{(k+1)}/N$, $|\phi_{\U}(z)| \le  2^{-(k+1)}$. This is a consequence of the concentration inequality we got in Claim~\ref{claim:concentration under uniform} stating that $\E_{z\sim\cU_k} [\phi_{\U}(z)^2] = 1/N$.
Thus, with probability at least $1-4^{(k+1)}/N$ we have that $z$ is a NO instance to the Rorrelation problem, and by the assumption on $f$ with probability at least $1-4^{(k+1)}/N - \frac{1}{4}\cdot 2^{-k}$ it answers NO. That is,
\begin{equation}\label{eq:upper bound acceptance \U_k}
	\E_{z\sim \U_k}[f(z)] \le \frac{4^{k+1}}{N} + \frac{1}{4} \cdot 2^{-k}.
\end{equation}

For $z\sim \cDUk$ we have that $\E_{z\sim \cDUk}[\phi_{\U}(z)] \ge (2/\pi)^{k-1} > 2^{-(k-1)}$.
Since $|\phi_{\U}(z)| \le 1$ for all binary vectors, this means that, for $z\sim \cDUk$, with probability at least $2^{-k}$, we have $\phi_{\U}(z) \ge 2^{-k}$ (as otherwise the expectation would be less than $2^{-(k-1)}$).
Put differently, when sampling from $\cDUk$ with probability at least $2^{-k}$ we get a YES instance for Rorrelation. By that $f$ errs on at most $\frac{1}{4} \cdot 2^{-k}$ of the probability mass of $\cDUk$, it means that 
\begin{equation}\label{eq:lower bound acceptance D_k}
\E_{z\sim\cDUk}[f(z)] \ge 2^{-k} - \frac{1}{4} \cdot 2^{-k}.
\end{equation}
Combining Equations~\eqref{eq:upper bound acceptance \U_k} and \eqref{eq:lower bound acceptance D_k}, we get that 
\begin{equation}\label{eq:distinguishability lower bound}
\E[f(\cDUk)] - \E[f(\cU_k)] \ge \frac{1}{2}\cdot 2^{-k} -  \frac{4^{k+1}}{N} \ge \frac{1}{4}\cdot 2^{-k},
\end{equation}
where in the last inequality we used the assumption $N \ge 16 \cdot 8^k$.
On the other hand, Theorem~\ref{thm:main} shows that 
\begin{equation}\label{eq:distinguishability upper bound}
\E[f(\cDUk)] - \E[f(\cU_k)] \le \sqrt\frac{O(d\cdot \log(kN))^{(3k-1)/2}}{N^{k-1}} \le o(2^{-k}).
\end{equation}
where in the last inequality we used the assumption that $d = o(N^{2(k-1)/3(k-1)}/\log(kN))$.
This yields a contradiction, completing the proof.
\end{proof}

\section{Bounding the Moments of the Distribution $\cDUk$}
\label{sec:moment bounds}

Recall the definition of a good orthogonal matrix from Section~\ref{sec:lower bound}.
In this section, we will prove that most uniform matrices are good (Lemma~\ref{lemma:most matrices are good}), and that for any good uniform matrices, the corresponding distribution $\cDUk$ has bounded moments (Lemma~\ref{lemma:good matrices imply moment bounds}).

We start with Lemma~\ref{lemma:good matrices imply moment bounds}.
Let $U$ be an $N$-by-$N$ orthogonal matrix (as in the definition of $k$-fold Rorrelation).
For $S,T \subseteq [N]$,  we denote by
$\tilde{\U}(S,T) \triangleq \E_{X}\left[\sgn\left(\prod_{i\in S}X_i \prod_{j\in T}(U^{\T}X)_j\right)\right]$ where $X\sim \NN(0,1)^N$ is a standard $N$-dimensional multi-variate Gaussian random vector.

We state our main technical lemma, that connects the property of being good (namely small norms of sub-matrices of $U$), with bounds on $|\tilde{\U}(S,T)|$.
\begin{lemma}\label{lemma:main technical}
Let $S,T \subseteq [N]$.
Then:
\begin{itemize}
\item For any orthogonal matrix $\U$, if $|S|+|T|$ is odd, then $\tilde{\U}(S,T) =0$.
\item For any orthogonal matrix $\U$, if $|S|+|T|$ is even, then 
$$
|\tilde{\U}(S,T)|\le 
(50 \cdot \|\U_{S,T}\|)^{\max\{|S|,|T|\}},
$$
where $\U_{S,T}$ denotes the sub-matrix of $\U$ with rows in $S$ and columns in $T$.
\end{itemize}
\end{lemma}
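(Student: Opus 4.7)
The odd case follows from a direct symmetry argument: under $X\mapsto -X$, both $\sgn(X_i)$ and $\sgn((\U^{\T}X)_j)$ flip sign, so the full product is multiplied by $(-1)^{|S|+|T|}=-1$; since $X\stackrel{d}{=}-X$, this forces $\tilde{\U}(S,T)=-\tilde{\U}(S,T)=0$.

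For the even case, my plan is a Hermite--Wick expansion. By the orthogonal change of variables $Y=\U^{\T}X$ (again a standard Gaussian), the roles of $S$ and $T$ swap and $\U$ is replaced by $\U^{\T}$; since $\|\U_{S,T}\|=\|(\U^{\T})_{T,S}\|$, assume WLOG $|S|\ge|T|$ and set $W:=\U_{S,T}$. Expand each sign via the orthonormal Hermite basis $\sgn(g)=\sum_{k\text{ odd}} s_k h_k(g)$; since the sign of a product equals the product of signs a.s.,
\[
\tilde{\U}(S,T) \;=\; \sum_{\substack{(k_i)_{i\in S},\,(l_j)_{j\in T}\\ \text{all odd}}} \prod_{i\in S} s_{k_i}\prod_{j\in T} s_{l_j}\cdot \E\!\left[\prod_{i\in S}h_{k_i}(X_i)\prod_{j\in T}h_{l_j}((\U^{\T}X)_j)\right].
\]
The vector $(X_S,(\U^{\T}X)_T)$ is jointly Gaussian with identity diagonal blocks and cross-covariance $W$. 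By the generating identity $\E[e^{t^{\T}X+u^{\T}\U^{\T}X}]=\exp(\tfrac12\|t\|_2^2+\tfrac12\|u\|_2^2+t^{\T}Wu)$, the Gaussian expectation evaluates to $\sqrt{\prod_i k_i!\prod_j l_j!}\cdot\sum_M\prod_{i,j}W_{i,j}^{m_{i,j}}/m_{i,j}!$, where $M=(m_{i,j})$ ranges over nonnegative integer $S\times T$ matrices with row sums $k_i$ and column sums $l_j$. Intra-$S$ and intra-$T$ covariances vanish, so only cross-pairings survive, forcing the common degree $d:=\sum_i k_i=\sum_j l_j$ to be $\ge\max(|S|,|T|)$ and $|S|\equiv|T|\pmod 2$ (consistent with the even-parity hypothesis).

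It remains to bound the resulting sum (reparametrized by $M\ge 0$ with all row/column sums odd) by $(50\|W\|)^{|S|}$. My plan is to group contributions by total degree $d=\sum m_{i,j}$. Each factor of $|W_{i,j}|$ can be absorbed into a spectral-norm estimate via the basic inequality $\sum_{i,j}a_i|W_{i,j}|b_j\le\|W\|\|a\|_2\|b\|_2$; iterating this across the $d$ half-edges on each side produces a factor of $\|W\|^d$ per fixed-degree contribution. The factorials $\sqrt{\prod k_i!}$, $\sqrt{\prod l_j!}$, the Wick denominators $\prod m_{i,j}!$, and the explicit asymptotic $|s_k|=O(k^{-3/4})$ for odd $k$ are then balanced using Stirling to show that each degree-$d$ contribution is at most $C^d\|W\|^d$ for a universal constant $C$.

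The main obstacle is isolating exactly the exponent $|S|=\max(|S|,|T|)$, rather than $d$ or $|S|+|T|$. The crux is that the odd row-sum constraint forces each $k_i\ge 1$, so the minimum admissible degree is $d=|S|$; the $|S|$ ``baseline'' contractions are unavoidable and account for $|S|$ factors of $\|W\|$. Each excess degree beyond this baseline costs another factor of $\|W\|$, paid for by the decay of the Hermite coefficients and the combinatorial normalization; summing over $d\ge|S|$ collapses to a convergent geometric series in $\|W\|\le 1$. The resulting bound is of the form $C^{|S|}\|W\|^{|S|}$, and the accumulated universal constant $C=50$ suffices with ample slack once one tracks the losses from Cauchy--Schwarz, the Hermite bounds, and the geometric tail.
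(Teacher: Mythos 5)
The odd case matches the paper's symmetry argument exactly. For the even case, you take a genuinely different route: a Hermite--Wick expansion with the correct structural observation that odd row and column sums of the contraction matrix $M$ force the total degree to be at least $\max\{|S|,|T|\}$ (this plays the role of the paper's observation that $\sum_a a^S\cdot(\text{poly of degree}<\max\{|S|,|T|\}\text{ in }a)=0$). Your generating-function identity and the resulting Wick formula are also correct.

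However, the crucial bounding step contains a false inequality. You write $\sum_{i,j}a_i|W_{i,j}|b_j\le\|W\|\,\|a\|_2\|b\|_2$, but this is wrong: the left side is controlled by $\||W|\|$ (the operator norm of the \emph{entrywise} absolute value), not $\|W\|$. These can differ dramatically; e.g.\ for the $2\times 2$ normalized Hadamard $W$ one has $\|W\|=1$ but $\||W|\|=\sqrt 2$, and for an $\ell\times\ell$ submatrix of a random orthogonal matrix the ratio $\||W|\|/\|W\|$ can be as large as $\sqrt{\ell/\log N}$. This is not a cosmetic issue: the entire reason the lemma (and the paper's Rorrelation lower bound) needs a \emph{random} orthogonal $U$ rather than the Hadamard matrix is precisely that $\|W\|$ stays small while $\||W|\|$ does not. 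Once you take entrywise absolute values of $W$ inside the Wick sum, you have forfeited the very property that makes the bound achievable.

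The paper's argument is engineered to never take entrywise absolute values of $W$: it bounds $|X_S^{\T}A'X_S|\le\|A'\|\,\|X_S\|_2^2$ and $|X_S^{\T}CY_T|\le\|C\|\,\|X_S\|_2\|Y_T\|_2$ as quadratic/bilinear forms, with $\|A'\|\le\|W\|^2\|B\|$ and $\|C\|\le\|W\|\|B\|$, and only afterwards bounds Gaussian moments of $\|X_S\|_2,\|Y_T\|_2$. To repair your argument you need an analogue that preserves the $\ell^2$ structure; e.g.\ estimate the coefficient extraction $\bigl[\prod t_i^{k_i}u_j^{l_j}\bigr]e^{t^{\T}Wu}$ via Cauchy's bound on polydiscs (where $|e^{t^{\T}Wu}|\le e^{\|W\|\|t\|_2\|u\|_2}$, using $\|W\|$ directly), or use a complex-Gaussian permanent representation. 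As written, the proposal does not close this gap, and its claimed geometric-series conclusion does not follow.
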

We defer the proof of Lemma~\ref{lemma:main technical} to the next subsection.
From Lemma~\ref{lemma:main technical}, and the definition of good matrices, the following corollary is immediate.
\begin{corollary}
\label{cor:main}
Suppose that $\U$ is a good orthogonal matrix and $\cDUk$ is defined with respect to $\U$.
Then, there exists a universal constant $c>0$ such that
	\begin{equation}\label{eq:good}
		|\tilde{\U}(S,T)| \le  \left(\frac{c (|S|+|T|)\ln N}{N}\right)^{\max\{|S|,|T|\}/2}.
	\end{equation}
	for all $S, T\subseteq [N]$.
\end{corollary}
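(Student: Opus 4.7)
The plan is to derive the corollary as a direct substitution, combining the dichotomy in Lemma~\ref{lemma:main technical} with the quantitative norm bound built into the definition of a good matrix (Def.~\ref{def:good}). There is essentially no new work beyond bookkeeping a universal constant, so the write-up will split on the parity of $|S|+|T|$.

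First, I would dispose of the odd case immediately: by the first bullet of Lemma~\ref{lemma:main technical}, $\tilde{\U}(S,T)=0$ whenever $|S|+|T|$ is odd, so the asserted inequality holds trivially (the right-hand side is nonnegative). This covers half the sets for free and is why I only need to analyze the even case.

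Next, for the even case, Lemma~\ref{lemma:main technical} gives
$$
|\tilde{\U}(S,T)| \;\le\; \bigl(50\cdot \|\U_{S,T}\|\bigr)^{\max\{|S|,|T|\}}.
$$
Since $\U_{S,T}$ is an $|S|$-by-$|T|$ sub-matrix of the good matrix $\U$, the definition yields
$$
\|\U_{S,T}\| \;\le\; \sqrt{\tfrac{100(|S|+|T|)\ln N}{N}}.
$$
Substituting and absorbing the constants gives
$$
|\tilde{\U}(S,T)| \;\le\; \left(50\sqrt{\tfrac{100(|S|+|T|)\ln N}{N}}\right)^{\max\{|S|,|T|\}} \;=\; \left(\tfrac{c(|S|+|T|)\ln N}{N}\right)^{\max\{|S|,|T|\}/2},
$$
for $c = 50^2\cdot 100 = 250000$ (any sufficiently large universal constant works).

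There is no real obstacle here: the nontrivial content has already been packaged into Lemma~\ref{lemma:main technical} (which supplies the bound in terms of $\|\U_{S,T}\|$ and the vanishing in the odd case) and Def.~\ref{def:good} (which converts the sub-matrix norm into a concrete $\tilde{O}(1/\sqrt{N})$-type quantity). The only minor care needed is to check the $\emptyset$-case (where the bound is $1$ and matches $\max\{|S|,|T|\}=0$) and to confirm the arithmetic that $(50\sqrt{x})^m = (2500 x)^{m/2}$ so that a single absorbing constant $c$ suffices.
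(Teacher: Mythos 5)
Your proof is correct and matches the paper's approach exactly; the paper dismisses the corollary as ``immediate'' from Lemma~\ref{lemma:main technical} together with Definition~\ref{def:good}, and you have simply spelled out the straightforward substitution and constant-absorption, with the right parity split and the observation that $(50\sqrt{x})^m=(2500x)^{m/2}$ gives a concrete $c$.
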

Furthermore, from corollary~\ref{cor:main}, we derive bounds on the moments of $\cDUk$ for good matrices.

\begin{lemmaNoNum}
[Lemma~\ref{lemma:good matrices imply moment bounds}, restated]
Suppose that $\U$ is a good orthogonal matrix and $\cDUk$ is defined with respect to $\U$.
Then, there exists a universal constant $c$, such that for any $\emptyset \neq S\subseteq [kN]$,	
	$$|\hat{\cDUk}(S)| \le \left(\frac{c \cdot |S| \cdot \log N}{N}\right)^{|S|\cdot (1-1/k)/2},$$
	and for any non-empty set $S$ of size less than $k$, we have $\hat{\cDUk}(S)=0$.
\end{lemmaNoNum}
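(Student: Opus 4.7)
My plan is to exploit the mutual independence of $X^{(1)}, \ldots, X^{(k-1)}$ to factor $\hat{\cDUk}(S)$ as a product of $k-1$ two-block moments of the form $\tilde{\U}(S_i, S_{i+1})$, apply Corollary~\ref{cor:main} to each factor, and then close out via a combinatorial lower bound on the resulting exponent. Concretely, I will split $[kN]$ into $k$ length-$N$ blocks and write $S = S_1 \cup \cdots \cup S_k$ with $s_i := |S_i|$. Unfolding the definition of $\cDUk$ gives $\sgn(Z^{(1)}_j) = \sgn(X^{(1)}_j)$, $\sgn(Z^{(i)}_j) = \sgn(X^{(i)}_j)\sgn(Y^{(i-1)}_j)$ for $2 \le i \le k-1$, and $\sgn(Z^{(k)}_j) = \sgn(Y^{(k-1)}_j)$. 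Collecting the factors that depend on each $X^{(i)}$ (recalling $Y^{(i)} = U^{\T} X^{(i)}$) and using the independence of the $X^{(i)}$'s yields
\[
\hat{\cDUk}(S) \;=\; \prod_{i=1}^{k-1} \tilde{\U}(S_i, S_{i+1}).
\]

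For the vanishing claim, I will first observe that by Lemma~\ref{lemma:main technical} each factor is $0$ unless $s_i + s_{i+1}$ is even, so nonvanishing forces $s_1, \ldots, s_k$ to share one parity. If they are all odd then $|S| \ge k$; otherwise they are all even, and if some $s_i = 0$ sits adjacent to some $s_{i\pm 1} > 0$, the corresponding factor is $\tilde{\U}(T,\emptyset)$ or $\tilde{\U}(\emptyset, T)$ for a nonempty $T$ and vanishes by Gaussian sign symmetry (flipping any single coordinate of $X$, or of $Y = U^{\T}X$, which has the same distribution as $X$). The only alternative is that every $s_i \ge 2$, in which case $|S| \ge 2k$. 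Either way $\hat{\cDUk}(S) = 0$ whenever $0 < |S| < k$.

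For the quantitative bound I will apply Corollary~\ref{cor:main} factor by factor and use $s_i + s_{i+1} \le |S|$ to get
\[
|\hat{\cDUk}(S)| \;\le\; \prod_{i=1}^{k-1} \left(\frac{c(s_i+s_{i+1})\ln N}{N}\right)^{\max(s_i,s_{i+1})/2} \;\le\; \left(\frac{c|S|\ln N}{N}\right)^{\frac{1}{2}\sum_{i=1}^{k-1}\max(s_i,s_{i+1})},
\]
where I may assume the base is at most $1$ (otherwise the target bound is $\ge 1$ and trivial). Then I will establish the combinatorial inequality
\[
\sum_{i=1}^{k-1}\max(s_i,s_{i+1}) \;\ge\; |S| - \min_i s_i \;\ge\; |S|\,(1-1/k)
\]
by picking $j$ to minimize $s_j$, lower-bounding $\max(s_i, s_{i+1}) \ge s_i$ for $i < j$ and $\max(s_i, s_{i+1}) \ge s_{i+1}$ for $i \ge j$, telescoping, and finally noting $\min_i s_i \le |S|/k$ by averaging.

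The only real subtlety is this last combinatorial step. The naive estimate $\max(a,b) \ge (a+b)/2$ produces only $\sum_i \max(s_i, s_{i+1}) \ge |S| - (s_1+s_k)/2$, which is insufficient when the endpoint blocks $S_1, S_k$ are heavy, and would fail to give the target exponent $|S|(1 - 1/k)$. The sharper bound $|S| - \min_i s_i$, obtained by telescoping around the minimizer, is exactly what realizes the $(k-1)/k$ saving; once this is in hand, everything else follows directly from the factorization and the already-established moment bounds on $\tilde{\U}(\cdot,\cdot)$.
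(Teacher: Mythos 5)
Your proposal is correct and follows essentially the same route as the paper: split $S$ into the $k$ length-$N$ blocks, use independence of $X^{(1)},\ldots,X^{(k-1)}$ to factor $\hat{\cDUk}(S)=\prod_{i=1}^{k-1}\tilde{\U}(S_i,S_{i+1})$, apply Corollary~\ref{cor:main} factor by factor, and close with the combinatorial bound $\sum_{i}\max(s_i,s_{i+1})\ge |S|(1-1/k)$, which is precisely the paper's Lemma~\ref{lemma:all same} and your telescoping argument reproduces its proof. You are in fact slightly more careful than the written proof in two places: you explicitly handle the all-even case of the vanishing claim by noting that a zero block adjacent to a nonzero block kills the corresponding factor (the paper's ``in particular'' is terser there), and you explicitly note that one may assume $c|S|\log N/N\le 1$ before decreasing the exponent, which the paper leaves implicit. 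Neither changes the substance.
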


\begin{proof}
Recall that $\cDUk$ is a distribution over $k$ blocks of $N$ variables each. We write $S = S_1 \cup S_2 \cup \ldots \cup S_k$ where $S_i$ denotes the intersection of $S$ with the $i$-th block of variables.	
\begin{align*}
\hat{\cDUk}(S) &= 
\E\left[\sgn\left(\prod_{i_1 \in S_1} X^{(1)}_{i_1} \prod_{i_2 \in S_2} X^{(2)}_{i_2} Y^{(1)}_{i_2} \cdots  \prod_{i_{k-1} \in S_{k-1}} X^{(k-1)}_{i_{k-1}} Y^{(k-2)}_{i_{k-2}}\prod_{i_k \in S_k} Y^{(k-1)}_{i_k}\right)\right]
\\ &= \E\left[\sgn\left(\prod_{i_1 \in S_1} X^{(1)}_{i_1} \prod_{i_2 \in S_2} Y^{(1)}_{i_2}\right)\right] \cdots  \;\E\left[\sgn\left(\prod_{i_{k-1} \in S_{k-1}} X^{(k-1)}_{i_{k-1}}\prod_{i_k \in S_k} Y^{(k-1)}_{i_k}\right)\right]\\
&= \tilde{\U}(S_1, S_2) \cdot \tilde{\U}(S_2, S_3) \cdots \tilde{\U}(S_{k-1}, S_k).
\end{align*}
So we see that $\hat{\cDUk}(S)\neq 0$ only if all $|S_i|$ have the same parity. In particular, it equals $0$ for any set $S$ of size between $1$ and $k-1$. In the case where all the $|S_i|$'s have the same parity, 
\begin{align*}
|\hat{\cDUk}(S)|&\le \prod_{i=1}^{k-1} \left(\frac{c \cdot (|S_i|+|S_{i+1}|) \cdot \log N}{N}\right)^{\frac{1}{2} \cdot \max\{|S_i|,|S_{i+1}|\}} \\
&\le \left(\frac{c \cdot |S|  \cdot \log N}{N}\right)^{\frac{1}{2} \cdot(\max\{|S_1|,|S_2|\}+\max\{|S_2|,|S_3|\}+\ldots + \max\{|S_{k-1}|,|S_k|\})}\\
&\le \left(\frac{c \cdot |S|  \cdot \log N}{N}\right)^{\frac{1}{2}  \cdot |S|\cdot (k-1)/k}
\end{align*}
where the last inequality is justified in the following lemma (Lemma~\ref{lemma:all same}).
\end{proof}

\begin{lemma}\label{lemma:all same}
Let $a_1, a_2, \ldots, a_k\in \R$.
Then, $$\max(a_1,a_2)+\max(a_2,a_3) + \ldots + \max(a_{k-1},a_k) \ge (a_1 + a_2 + \ldots + a_k) \cdot \frac{k-1}{k}.$$
\end{lemma}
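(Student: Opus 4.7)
The plan is to bound each maximum from below by a convex combination of its two arguments, and then choose the weights so that the sum telescopes to give coefficient exactly $(k-1)/k$ on every $a_i$. The key observation is that for any $\lambda\in[0,1]$ and any real $a,b$, one has $\max(a,b)\ge \lambda a+(1-\lambda) b$, since the maximum dominates each argument and hence every convex combination. Crucially this does not require the $a_i$ to be nonnegative, which matches the generality in the statement.

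Concretely, for each $j\in\{1,\ldots,k-1\}$ I would pick a weight $p_j\in[0,1]$ and write $\max(a_j,a_{j+1})\ge p_j a_j+(1-p_j)a_{j+1}$. Summing over $j$, the coefficient of $a_1$ is $p_1$, the coefficient of $a_k$ is $1-p_{k-1}$, and for $2\le i\le k-1$ the coefficient of $a_i$ is $p_i+(1-p_{i-1})$. To make every coefficient equal $(k-1)/k$, I would set $p_1=(k-1)/k$ and require $p_i = p_{i-1}-1/k$, which forces $p_j=(k-j)/k$. This choice automatically lies in $[0,1]$ and gives $1-p_{k-1}=(k-1)/k$, as needed.

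Putting it together, the chain of inequalities becomes
\begin{align*}
\sum_{j=1}^{k-1}\max(a_j,a_{j+1})
&\ge \sum_{j=1}^{k-1}\Bigl(\tfrac{k-j}{k}\, a_j+\tfrac{j}{k}\, a_{j+1}\Bigr) \\
&= \tfrac{k-1}{k}(a_1+a_2+\cdots+a_k),
\end{align*}
where the last equality is the coefficient calculation above. This completes the proof.

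There is no real obstacle here: the only nontrivial step is guessing the arithmetic progression of weights $p_j=(k-j)/k$, which one discovers by writing down the linear system that demands uniform coefficients across all $a_i$. Everything else is verification.
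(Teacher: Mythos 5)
Your proof is correct, and it takes a genuinely different route from the paper's. The paper proves the lemma by choosing an index $i$ minimizing $a_i$, observing that $\max(a_j,a_{j+1})\ge a_j$ for $j\le i-1$ and $\max(a_j,a_{j+1})\ge a_{j+1}$ for $j\ge i$, so that the left side dominates $\sum_j a_j - a_i$, and then using $a_i\le \frac{1}{k}\sum_j a_j$ (the minimum is at most the average). Your argument instead lower-bounds each $\max(a_j,a_{j+1})$ by a convex combination $\frac{k-j}{k}a_j+\frac{j}{k}a_{j+1}$ with weights chosen so that the coefficients of every $a_i$ telescope to exactly $\frac{k-1}{k}$, with no case analysis on which element is smallest. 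The two proofs are about equally short; the paper's is a pointwise combinatorial argument, while yours is a single linear-algebraic identity verified by summation, which some may find cleaner since it avoids picking a distinguished index and instead exhibits explicit weights. Both correctly handle negative $a_i$ without extra care.
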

\begin{proof}
	Let $i\in [k]$ be an index such that $a_i = \min(a_1, \ldots, a_k)$.
	We claim that
	\begin{equation}\label{eq:1}
	\max(a_1,a_2)+\max(a_2,a_3) + \ldots + \max(a_{k-1},a_k) \ge a_1 + a_2 + \ldots + a_{i-1} + a_{i+1} + \ldots + a_k.\end{equation}
	This is true since for every $j\le i-1$ we have $\max(a_j, a_{j+1}) \ge a_j$ and for every $j\in \{i, \ldots, k-1\}$ we have $\max(a_j,a_{j+1}) \ge a_{j+1}$. Combining these inequalities together gives Eq.~\eqref{eq:1}.
	Finally, observe that if $s \triangleq a_1 + \ldots + a_k$, then $a_i \le s/k$ and thus
	$$\max(a_1,a_2)+\max(a_2,a_3) + \ldots + \max(a_{k-1},a_k) \ge s - s/k,$$
	which completes the proof.
\end{proof}

\subsection{Proof of Lemma~\ref{lemma:main technical}}
\begin{proof}
We denote by $Y_i = (U^{\T} X)_i$ for $i \in [N]$.
The multivariate Gaussian distribution $(X_1, \ldots, X_N, Y_1, \ldots Y_N)$ is symmetric around $\vec{0}$.
Thus, when $|S|+|T|$ is odd, the product  $(\prod_{i\in S} X_i \prod_{j\in T} Y_j)$ is an odd function, and hence the expectation of its sign equals $0$.

For the rest of the proof, we assume that $|S|+|T| = 2\ell$ is even.
We assume without loss of generality that $|S|\ge |T|$, and denote by $\ell' = \max\{|S|,|T|\} = |S|$. 
We also assume without loss of generality that $\|U_{S,T}\|\le 1/50$ since otherwise the claim is trivial.
We compute  $\tilde{\U}(S,T)$ as a Lebesgue integral.
For brevity, we denote by $dX_S = \prod_{i\in S} dX_i$ and by $dY_T = \prod_{i\in T} dY_j$. 
We also denote by $X_S$ the vector $(X_i)_{i\in S}$ and  by $Y_T$ the vector $(Y_i)_{j\in T}$ and by $X^S$ and $Y^T$ the products
$\prod_{i \in S} X_i$ and $\prod_{i \in T} Y_j$ respectively.

The covariance matrix of the Gaussian random variables $\{X_i\}_{i \in S}$ and $\{Y_j\}_{j \in T}$ is given by 
$\Sigma = \left(\begin{matrix}
	I_{S} & \U_{S,T}\\
	\U_{S,T}^{\T}& I_T\end{matrix}\right).$
Due to the assumption $\|\U_{S,T}\|\le 1/50$ we get that $\Sigma$ is non-singular. Moreover, since $\|U_{S,T}\|\le 1/50$, all eigenvalues of $\Sigma$ are in the range $[0.98,1.02]$, and we get that the determinant of $\Sigma$, denoted $|\Sigma|$, is in the interval $[0.98^{2\ell}, 1.02^{2\ell}]$.

Next, we write $\Sigma^{-1}$ in terms of $U_{S,T}$.
By the matrix inversion formula of block matrices
$$
\Sigma^{-1} = 
\left(\begin{matrix}
(I_{S} -\U_{S,T}\U_{S,T}^{\T})^{-1} 
& -\U_{S,T} \cdot (I_T - \U_{S,T}^{\T}\U_{S,T})^{-1}
\\
-\U_{S,T}^{\T} \cdot (I_{S} -\U_{S,T}\U_{S,T}^{\T})^{-1}
& (I_T - \U_{S,T}^{\T}\U_{S,T})^{-1}
\end{matrix}
\right)
\triangleq 
\left(\begin{matrix}
A & C	\\
C^{\T} & B
\end{matrix}
\right).
$$
By Woodbury matrix identity 
\begin{align}\label{eq:A'}
A = I_S+U_{S,T}\left(I_T-U_{S,T}^{\T} U_{S,T}\right)^{-1}U_{S,T}^{\T} = I_S+U_{S,T}\cdot B\cdot  U_{S,T}^{\T} \triangleq I_S + A'	\\
%\end{equation}
%and
%\begin{equation}
\label{eq:B'}
B = I_T+U_{S,T}^{\T} \left(I_S-U_{S,T} U_{S,T}^{\T}\right)^{-1}U_{S,T} = I_T+U_{S,T}^{\T}\cdot A\cdot  U_{S,T} \triangleq I_T + B'
\end{align}
where $A,B,A',B'$ are symmetric PSD matrices.
Furthermore, observe that \begin{equation}\label{eq:C}C = - U_{S,T} B\end{equation} and that $\|B\| \le 1.01$. We are ready to start analyzing $\tilde{\U}(S,T)$ as a Lebesgue integral:
%%%%%%%%%%%%%%%
\begin{align*}
&\left|\tilde{\U}(S,T)\right| =\left|\E_{X,Y}
\left[\sgn\left(X^S Y^T\right)\right]\right|	 \\
&=  \frac{1}{\sqrt{|\Sigma|\cdot (2\pi)^{2\ell}}}\cdot \left|
\int_{(-\infty,\infty)^{2\ell}}
{\sgn(X^S Y^T) \cdot 
e^{\frac{-\|X_S\|_2^2 -\|Y_T\|_2^2}{2}} 
\cdot 
e^{-(X_S^{\T} A' X_S/2 
\;+\;Y_T^{\T} B' Y_T/2 
\;+\;X_S^{\T} C Y_T)}
\;dX_S \;dY_T
}\right|\end{align*}
%%%%%%%%%%%%%%%
Now,
%%%%%%%%%%%%%%%
\begin{align}
&\left|\int_{(-\infty,\infty)^{2\ell}}
{\sgn(X^S Y^T) \cdot 
e^{\frac{-\|X_S\|_2^2 -\|Y_T\|_2^2}{2}} 
\cdot 
e^{-(X_S^{\T} A' X_S/2 
\;+\;Y_T^{\T} B' Y_T/2 
\;+\;X_S^{\T} C Y_T)}
\;dX_S \;dY_T
}\right|
\nonumber\\
%%%%%
&= 
\left|\sum_{\substack{a \in \{-1,1\}^{S},\\ b\in \{-1,1\}^{T}}}a^{S}\cdot b^{T} \int\limits_{[0,\infty)^{2\ell}}{ 
e^{\frac{-\|X_S\|_2^2 -\|Y_T\|_2^2}{2}} 
\cdot 
e^{-((a.X)_S^{\T} A' (a.X)_S/2 
\;+\;(b.Y)_T^{\T} B' (b.Y)_T/2 
\;+\;(a.X)_S^{\T} C (b.Y)_T)}
\;dX_S \;dY_T
}\right|\nonumber\\
%%%%%
&\le  \int\limits_{[0,\infty)^{2\ell}}{ e^{\frac{-\|X_S\|_2^2 - \|Y_T\|_2^2}{2}}
\cdot \left|
\sum_{
\substack{a \in \{-1,1\}^{S},\\ b\in \{-1,1\}^{T}}} a^S \cdot b^T\cdot 
e^{-((a.X)_S^{\T} A' (a.X)_S/2 
\;+\;(b.Y)_T^{\T} B' (b.Y)_T/2 
\;+\;(a.X)_S^{\T} C (b.Y)_T)}
\right|
dX_S \;dY_T
}\nonumber\\
%%%%%
&\le  \int\limits_{[0,\infty)^{2\ell}}{
e^{\frac{-\|X_S\|_2^2 - \|Y_T\|_2^2}{2}}
\sum_{b\in \{-1,1\}^{T}}  e^{\frac{-(b.Y)_T^{\T} B' (b.Y)_T}{2}} \cdot 
\left|\sum_{a\in \{-1,1\}^{S}}
a^S\cdot
e^{-((a.X)_S^{\T} A' (a.X)_S/2 
\;+\;(a.X)_S^{\T} C (b.Y)_T)}
\right|
dX_S \;dY_T
}\nonumber
\\
%%%%%
&\le
\int\limits_{[0,\infty)^{2\ell}}{
e^{\frac{-\|X_S\|_2^2 - \|Y_T\|_2^2}{2}}
\sum_{b\in \{-1,1\}^{T}} 
\left|\sum_{a\in \{-1,1\}^{S}}
a^S\cdot
e^{-((a.X)_S^{\T} A' (a.X)_S/2 
\;+\;(a.X)_S^{\T} C (b.Y)_T)}
\right|
dX_S \;dY_T} \label{eq:integral}
\end{align}
%%%%%%%%%%%%%%%
where in the last inequality we used the fact that $B'$ is PSD.
Fix $b \in \{-1,1\}^{T}$, $Y_T$ and $X_S$. 
We analyze the internal sum 
%%%%%%%%%%%%%%%
$$
\sum_{a \in \{-1,1\}^{S}}a^S
\cdot e^{-((a.X)_S^{\T} A' (a.X)_S/2 
\;+\;(a.X)_S^{\T} C (b.Y)_T)}
$$
using Taylor expansion. 	
For each $a\in\{-1,1\}^S$, we develop the Taylor series of  
$$
e^{Q(a) + L(a)}\quad\text{where}\qquad
Q(a)=-((a.X)_S)^{\T} A' (a.X)_S/2,  \qquad 
L(a)=-(a.X)_S^{\T} C (b.Y)_T
$$
and obtain 
$$
\sum_{a \in \{-1,1\}^{S}}a^S
\cdot e^{Q(a) +L(a)} = 
\sum_{a \in \{-1,1\}^{S}}a^S
\cdot \sum_{i=0}^{\infty} \frac{(Q(a)+L(a))^i}{i!} =
\sum_{a \in \{-1,1\}^{S}}a^S
\cdot \sum_{0 \le i_1,i_2} \binom{i_1+i_2}{i_1}  \frac{Q(a)^{i_1} L(a)^{i_2}}{(i_1+i_2)!}
$$
Observe that $Q(a)^{i_1} \cdot L(a)^{i_2}$ is a polynomial of degree $2i_1+i_2$ in the variables $a$. Summing over all $a\in \{-1,1\}^{S}$,  the terms corresponding to $(i_1,i_2)$ with $2i_1+i_2 <\ell'$ cancel out, and  we are left  only with the terms corresponding to $i_1, i_2$ such that $2i_1+i_2\ge \ell'$. Thus, we get 	
\begin{align*}
\left|\sum_{a \in \{-1,1\}^{S}}
a^S \cdot
e^{-((a.X)_S^{\T} A' (a.X)_S/2 
\;+\;(a.X)_S^{\T} C (b.Y)_T)}\right| &\le 
\sum_{a \in \{-1,1\}^{S}}\sum_{\substack{i_1,i_2:\\2i_1+i_2\ge \ell'}}
\binom{i_1+i_2}{i_1} \frac{|Q(a)^{i_1}\cdot L(a)^{i_2}|}{(i_1+i_2)!} 
\end{align*}
%%%%%%%%%%%%%
Plugging this bound in Eq.~\eqref{eq:integral} gives
%%%%%%%%%%%%%
\begin{align*}
&|\tilde{\U}(S,T)| \le 	
\frac{1}{\sqrt{|\Sigma|(2\pi)^{2\ell}}} \int\limits_{[0,\infty)^{2\ell}}{ e^{\frac{-\|X_S\|_2^2 - \|Y_T\|_2^2}{2}} 
\cdot \sum_{a,b} \sum_{\substack{i_1,i_2:\\2i_1+i_2\ge \ell'}} 
\binom{i_1+i_2}{i_1}  \frac{|Q(a)^{i_1}\cdot L(a)^{i_2}|}{(i_1+i_2)!} 
\;dX_S \;dY_T}
\\
%%%%%%
&=\frac{1}{\sqrt{|\Sigma|(2\pi)^{2\ell}}} \int\limits_{(-\infty,\infty)^{2\ell}}{e^{\frac{-\|X_S\|_2^2 - \|Y_T\|_2^2}{2}} \cdot 
\sum_{\substack{i_1,i_2:\\2i_1+i_2\ge \ell'}} 
\binom{i_1+i_2}{i_1}
\cdot
\frac{|(X_S^{\T} A' X_S/2)^{i_1}\cdot (X_S^{\T} C Y_T)^{i_2}|}{(i_1+i_2)!}
\;dX_S \;dY_T}
\\
%%%%%%
&= \frac{1}{\sqrt{|\Sigma|}} \sum_{\substack{i_1,i_2:\\2i_1+i_2\ge \ell'}}
\binom{i_1+i_2}{i_1} 
\int\limits_{(-\infty,\infty)^{2\ell}}{
\frac{1}{\sqrt{(2\pi)^{2\ell}}} \cdot e^{\frac{-\|X_S\|_2^2 - \|Y_T\|_2^2}{2}} 
\cdot \frac{|(X_S^{\T} A' X_S)^{i_1} \cdot (X_S^{\T} C Y_T)^{i_2}|}{2^{i_1}\cdot (i_1+i_2)!}
\; dX_S\; dY_T}\end{align*}
where in the last equality we used Fubini's theorem.
Observe that in the right hand side, each internal integral is with respect to $2\ell$ {\em independent} standard Gaussians.
To avoid confusion, we denote these $2\ell$ standard Gaussians by $\{\tilde{X}_i\}_{i\in S}$ and $\{\tilde{Y}_j\}_{j\in T}$.
We get that
\begin{align*}
|\tilde{U}_{S,T}|&\le \frac{1}{\sqrt{|\Sigma|}}  \sum_{i_1,i_2: 2i_1+i_2\ge \ell'}
\binom{i_1+i_2}{i_1}
\cdot \E_{\tilde{X},\tilde{Y}}\left[ \frac{|(\tilde{X}_S^{\T} A' \tilde{X}_S)^{i_1}\cdot (\tilde{X}_S^{\T} C \tilde{Y}_T)^{i_2}|}{2^{i_1}\cdot (i_1+i_2)!}
\right]\\
%%%%%%
&\le
\frac{1}{\sqrt{|\Sigma|}}
\sum_{i_1,i_2: 2i_1+i_2\ge \ell'}\binom{i_1+i_2}{i_1} \cdot 
\frac{\|A'\|^{i_1} \cdot \|C\|^{i_2} \cdot \E\left[\|\tilde{X}_S\|^{2i_1+i_2} \|\tilde{Y}_T\|^{i_2}\right]}{2^{i_1} \cdot (i_1+i_2)!} \\
%%%%%%
&\le
\frac{1}{\sqrt{|\Sigma|}}
\sum_{i_1,i_2: 2i_1+i_2\ge \ell'}\binom{i_1+i_2}{i_1} \cdot
\frac{\|A'\|^{i_1} \cdot \|C\|^{i_2} \cdot \sqrt{\E[\|\tilde{X}_S\|^{4i_1+2i_2}] \E[\|\tilde{Y}_T\|^{2i_2}]}}{2^{i_1}\cdot (i_1+i_2)!}
\end{align*}
%%%%%%%%%%%%%%%%
Now, $\|\tilde{X}_S\|_2^{2}$ is  $\chi$-squared random variable with $|S|$ degrees of freedom. 
It is known that its $2i_1+i_2$ moment equals $|S|\cdot (|S|+2) \cdots (|S|+4i_1+2i_2-2) \le (|S|+2i_1+i_2)^{2i_1+i_2} \le (4i_1+2i_2)^{2i_1 + i_2}$.
Similarly,  $\|\tilde{Y}_T\|_2^{2}$ is  $\chi$-squared random variable with $|T|$ degrees of freedom. It is known that its $i_2$ moment equals 
$
|T|\cdot (|T|+2) \cdots (|T|+2i_2-2) \le (|T|+i_2)^{i_2}
\le (4i_1+2i_2)^{i_2}
$.
Overall, we get
\[|\tilde{\U}(S,T)| \le 
\frac{1}{\sqrt{|\Sigma|}}
\sum_{i_1,i_2:2i_1+i_2\ge \ell'} 
\binom{i_1+i_2}{i_1} \cdot
\frac{\|A'\|^{i_1} \cdot \|C\|^{i_2}}{2^{i_1}}\cdot  \left(\frac{4i_1+2i_2}{(i_1+i_2)/e}\right)^{i_1+i_2}
.\]
Recall that  $\|A'\| \le \|U_{S,T}\|^{2} \cdot \|B\|$
and $\|C\|\le \|U_{S,T}\|\cdot \|B\|$ (by Equations~\eqref{eq:A'} and \eqref{eq:C}) where $\|B\|\le 1.01$.
This gives 
\begin{align*}
|\tilde{\U}(S,T)|
 &\le 0.98^{-\ell} \cdot \sum_{i_1,i_2: 2i_1+i_2\ge \ell'}\binom{i_1+i_2}{i_1} \cdot \frac{\|U_{S,T}\|^{2i_1+i_2} \cdot (1.01)^{i_1+i_2} }{2^{i_1}} \cdot (4e)^{i_1+i_2} \\
&\le 1.03^{\ell} \cdot\sum_{i_1,i_2: 2i_1+i_2\ge \ell'}  \binom{2i_1+i_2}{i_1} \cdot \|U_{S,T}\|^{2i_1+i_2} \cdot \frac{(4.04e)^{2i_1+i_2}}{(8.08e)^{i_1}} \\
&= 1.03^{\ell}\cdot \sum_{d=\ell'}^{\infty} (4.04e)^d \cdot \|U_{S,T}\|^{d}  \sum_{i_1\le d/2} \binom{d}{i_1} \cdot \frac{1}{(8.08e)^{i_1}} 
\\&\le 1.03^{\ell} \cdot\sum_{d=\ell'}^{\infty} (4.04e)^d \cdot \|U_{S,T}\|^{d} \cdot (1+\tfrac{1}{(8.08e)})^{d} \\
&\le 1.03^{\ell}\cdot 2\cdot (12 \|U_{S,T}\|)^{\ell'} \le (50\|U_{S,T}\|)^{\ell'}.\qedhere
\end{align*}
\end{proof}

\subsection{Most Orthogonal Matrices are Good}

We obtain Lemma~\ref{lemma:most matrices are good} by applying a union bound over all possible sub-matrices of a random orthogonal matrix $\U$.
We start by showing that for a fixed subset of rows $S\subseteq [N]$ and a fixed set of columns $T\subseteq [N]$, the norm of $\U_{S,T}$ has sub-Gaussian tails.
\begin{claim}\label{claim:fixed S,T}
Let $\U$ be a random orthogonal $N$-by-$N$ matrix. Let $S, T \subseteq [N]$ be some fixed sets. Then, for all $t>0$,
\[\Pr_U[\|\U_{S,T}\|\ge 2t/\sqrt{N}]\le 2\cdot 9^{|S|+|T|} \cdot e^{-t^2/8}.\]
\end{claim}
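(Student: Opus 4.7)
The plan is a standard $\varepsilon$-net argument, reducing the operator norm $\|\U_{S,T}\|$ to a union bound over inner products against a single random orthogonal vector.

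\textbf{Step 1: Discretize via nets.} Choose $\frac{1}{4}$-nets $\NN_S \subseteq S^{|S|-1}$ and $\NN_T \subseteq S^{|T|-1}$ of the unit spheres. A standard volume bound gives $|\NN_S| \le 9^{|S|}$ and $|\NN_T| \le 9^{|T|}$. A routine argument (approximating the maximizing unit vectors $x^*, y^*$ by nearest net points $x,y$ and bounding the error by $2 \cdot \tfrac{1}{4} \cdot \|\U_{S,T}\|$) shows
\[
  \|\U_{S,T}\| \;\le\; 2 \cdot \max_{x \in \NN_S,\, y \in \NN_T}\, |\langle x, \U_{S,T}\, y\rangle|.
\]
Hence it suffices, by a union bound over $9^{|S|+|T|}$ pairs, to prove that for \emph{fixed} unit vectors $x \in \R^{|S|}$, $y \in \R^{|T|}$,
\[
  \Pr_U\bigl[\,|\langle x, \U_{S,T}\, y\rangle| \ge t/\sqrt{N}\,\bigr] \;\le\; 2\,e^{-t^2/8}.
\]

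\textbf{Step 2: Reduce to a spherical cap probability.} Extend $x$ and $y$ by zeros to unit vectors $\tilde{x}, \tilde{y} \in \R^N$ supported on coordinates $S$ and $T$. Then $\langle x, \U_{S,T}\, y\rangle = \tilde{x}^\T \U \tilde{y}$. Since $\U$ is Haar-distributed on the orthogonal group and $\tilde{y}$ is a fixed unit vector, $\U \tilde{y}$ is uniformly distributed on $S^{N-1}$. So the scalar $\tilde{x}^\T \U \tilde{y}$ is distributed as $\langle \tilde{x}, v\rangle$ where $v$ is uniform on $S^{N-1}$, which, after rotating so that $\tilde{x} = e_1$, is just the first coordinate of a uniform point on $S^{N-1}$.

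\textbf{Step 3: Tail bound for the first coordinate.} Represent $v = g/\|g\|$ for $g \sim \NN(0, I_N)$. For $s \in [0,1]$, it is a standard estimate (e.g.\ via the spherical cap volume bound $\Pr[|v_1| \ge s] \le 2(1-s^2)^{(N-1)/2}$, or by combining a Gaussian tail bound on $|g_1|$ with a $\chi^2$ concentration bound on $\|g\|^2$) that
\[
  \Pr\bigl[\,|v_1| \ge s\,\bigr] \;\le\; 2\,e^{-N s^2/2}.
\]
Setting $s = t/\sqrt{N}$ yields $\Pr[|\langle \tilde{x}, v\rangle| \ge t/\sqrt{N}] \le 2\,e^{-t^2/2} \le 2\,e^{-t^2/8}$. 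For $t > \sqrt{N}$, the event is empty since $\|\U_{S,T}\| \le \|\U\| = 1 < 2t/\sqrt{N}$, so the claim is vacuous in that range.

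\textbf{Step 4: Union bound.} Combining Steps 1--3,
\[
  \Pr_U\bigl[\,\|\U_{S,T}\| \ge 2t/\sqrt{N}\,\bigr] \;\le\; 9^{|S|}\cdot 9^{|T|} \cdot 2\,e^{-t^2/8} \;=\; 2 \cdot 9^{|S|+|T|}\, e^{-t^2/8}.
\]
The only mildly delicate point is Step~3: ensuring the sub-Gaussian tail for $v_1$ with clean constants. All other pieces—constructing $\tfrac{1}{4}$-nets of size $\le 9^{\dim}$, the approximation factor of $2$ in passing from the maximum over the net to the operator norm, and the invariance of Haar measure used in Step~2—are entirely standard.
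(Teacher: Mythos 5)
Your proof is correct and follows essentially the same route as the paper: a $\tfrac14$-net argument reducing $\|\U_{S,T}\|$ to a maximum of $9^{|S|+|T|}$ bilinear forms, then Haar invariance to reduce each such form to the first coordinate of a uniform point on $S^{N-1}$, then the Gaussian-normalization (or cap-volume) tail bound, handling $t>\sqrt N$ vacuously. The only cosmetic difference is that you state the sharper intermediate estimate $\Pr[|v_1|\ge s]\le 2e^{-Ns^2/2}$ (the paper's version, combining $\Pr[|g_1|\ge t/2]\le e^{-t^2/8}$ with a $\chi^2$ lower-tail bound, only yields $2e^{-t^2/8}$ for $t\le\sqrt N$), but since the claim only needs the $e^{-t^2/8}$ rate, any of the standard constants suffices and the step is sound.
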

\begin{proof}
Let $\eps = 1/4$. 
For brevity, denote by $\U' = \U_{S,T}$.
We take an $\eps$-net $\cal X$ over the unit sphere $\mathcal{S}^{|S|-1}$ and an $\eps$-net $\cal Y$ over the unit sphere $\mathcal{S}^{|T|-1}$.
A simple volume argument shows that there exist such $\eps$-nets with size at most $(1+2/\eps)^{|S|}$ and $(1+2/\eps)^{|T|}$ respectively 
(cf. \cite[Lemma~5.2]{Ver10}).
Another simple argument shows that 
$$\max_{x\in \mathcal{X}, y\in \mathcal{Y}} (x^{\T}   \U'   y) \ge (1-2\eps)\|\U'\| = \|\U'\|/2.$$
To see this note that $\|\U'\|$ is the maximal $(x')^{\T} \U' y'$ taken over all unit vectors $x'\in \R^{S}$ and $y'\in \R^T$.
Let $x'$ and $y'$ be vectors that attain this maximum.
Let $x$ be the closest vector to $x'$ in $\mathcal{X}$.
Let $y$ be the closest vector to $y'$ in $\mathcal{Y}$.
By the definition of $\eps$-nets, $\|x-x'\|\le \eps$ and $\|y-y'\|\le \eps$.
Thus, 
$$
x^{\T} U'y = (x')^{\T} U' y' + (x-x')^{\T} \U' y' + x^{\T} U'(y-y') \ge \|\U'\| - \|\U'\|\cdot \eps - \|\U'\|\cdot \eps\;.
$$

Now, for any fixed unit vectors $x \in \mathcal X$ and $y\in \mathcal Y$, the next claim shows that $\Pr[x^{\T} U y \ge t/\sqrt{N}] \le 2\cdot e^{-t^2/8}$. Note that $x^{\T} U y$ is the same as $x^{\T} U' y$.
Hence, by a simple union bound and Claim~\ref{claim:U11}, we have
$$\Pr[\exists x \in \mathcal{X}, y\in \mathcal{Y}: x^{\T} U'y \ge t/\sqrt{N}]\le |\mathcal{X}||\mathcal{Y}| \cdot 2 \cdot e^{-t^2/8} = 9^{|S|} \cdot 9^{|T|} \cdot 2\cdot  e^{-t^2/8}.$$
Overall, we get that \[\Pr[\|\U'\|\ge 2t/\sqrt{n}]\le  \Pr[\exists x \in \mathcal{X}, y\in \mathcal{Y}: x^{\T} U'y \ge t/\sqrt{N}] \le 9^{|S|} \cdot 9^{|T|} \cdot 2\cdot e^{-t^2/8}.\qedhere\]
\end{proof}

\begin{claim}\label{claim:U11}
	Let $\U$ be a random orthogonal $N$-by-$N$ matrix. Let $x\in \R^N$ and $y\in \R^N$ be any two fixed unit vectors. Then, for all $t>0$,
\[\Pr_U[x^{\T} \U y \ge t/\sqrt{N}]\le 2\cdot e^{-t^2/8}.\]
\end{claim}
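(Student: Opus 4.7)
The plan is to reduce the problem to bounding a single entry $U_{1,1}$ of the random orthogonal matrix, then represent that entry using a standard Gaussian vector, and finally split the resulting probability into a chi-squared norm event and a Gaussian tail event.

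First I would invoke the bi-invariance of the Haar measure on $O(N)$: for any fixed orthogonal $R_1, R_2 \in O(N)$, the matrix $R_1 U R_2$ is distributed as $U$. Given unit vectors $x, y \in \R^N$, extend each to an orthonormal basis to produce orthogonal matrices $R_1, R_2$ with $R_1 e_1 = x$ and $R_2 e_1 = y$. Then
\[
x^{\T} U y \;=\; e_1^{\T}\bigl(R_1^{\T} U R_2\bigr) e_1 \;\stackrel{d}{=}\; e_1^{\T} U e_1 \;=\; U_{1,1},
\]
so it suffices to prove $\Pr[U_{1,1} \ge t/\sqrt{N}] \le 2 e^{-t^2/8}$.

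Next I would use the standard representation of spherical uniforms via Gaussians: the first column of a Haar-random orthogonal matrix is uniform on $S^{N-1}$, and a uniform point on $S^{N-1}$ can be sampled as $Z/\|Z\|_2$ for $Z \sim \mathcal{N}(0, I_N)$. Hence $U_{1,1}$ has the same distribution as $Z_1/\|Z\|_2$, and I may instead bound $\Pr[Z_1/\|Z\|_2 \ge t/\sqrt{N}]$.

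For the probability bound itself I would split on the norm of $Z$. Writing $\cE = \{\|Z\|_2^2 \ge N/4\}$, on $\cE$ we have $\|Z\|_2 \ge \sqrt{N}/2$, so the event $Z_1/\|Z\|_2 \ge t/\sqrt{N}$ forces $Z_1 \ge t/2$; therefore
\[
\Pr\!\left[\frac{Z_1}{\|Z\|_2} \ge \frac{t}{\sqrt{N}}\right] \;\le\; \Pr[\cE^c] \,+\, \Pr[Z_1 \ge t/2].
\]
The Gaussian tail gives $\Pr[Z_1 \ge t/2] \le e^{-t^2/8}$, and the Laurent–Massart chi-squared bound $\Pr[\chi^2_N \le N - 2\sqrt{Nx}] \le e^{-x}$, applied with $2\sqrt{Nx} = 3N/4$, gives $\Pr[\cE^c] \le e^{-9N/64}$. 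Since the claim is vacuous for $t > \sqrt{N}$ (the bound $U_{1,1} \ge t/\sqrt{N}$ would exceed $1$), we may assume $t \le \sqrt{N}$, in which case $e^{-9N/64} \le e^{-9t^2/64} \le e^{-t^2/8}$. Adding the two bounds yields $2 e^{-t^2/8}$, as required.

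There is really no hard step here: the proof is a standard three-line application of Haar invariance plus a routine Gaussian/chi-squared tail split. The only point that needs a moment of care is choosing the norm threshold (I use $N/4$ rather than the more common $N/2$) so that the chi-squared exponent comfortably dominates $t^2/8$ for all $t \le \sqrt{N}$.
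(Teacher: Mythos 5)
Your proof is correct and follows essentially the same route as the paper's: reduce to the first coordinate of a Haar-random orthogonal matrix (you phrase this via bi-invariance, the paper via $Uy$ being uniform on the sphere plus rotational invariance of the inner product), represent it as $Z_1/\|Z\|_2$ with Gaussian $Z$, split at the same threshold $\|Z\|_2^2 \ge N/4$, and control the two tails separately. The only cosmetic difference is citing Laurent--Massart for the $\chi^2_N$ lower tail where the paper uses a direct Chernoff bound; both give more than enough room for $t\le\sqrt{N}$.
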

\begin{proof}
The main observation is that for a fixed vectors $x$ and  $y$, the distribution of $x^{\T} \U y$ is the same as the distribution of the first coordinate of a random vector on the $(N-1)$-dimensional sphere.
To see it, first note that $Z = \U y$ is a uniform vector on the $(N-1)$-dimensional sphere. 
Furthermore, the inner product of $Z$ with any fixed unit vector $x$ is the same no matter which vector $x$ is chosen. 
In particular, it is the same as the inner product with $e_1$, or in other words, it is distributed the same as $Z_1$.

To finish the proof we show that if $Z$ is a uniformly random vector	 on the $(N-1)$-dimensional sphere, then $$\Pr[Z_{1} \ge t/\sqrt{N}] \le 2e^{-t^2/8}.$$
Note that $Z$ can be sampled by taking $N$ independent Gaussian variables $(Z'_1, \ldots, Z'_N)$ with mean $0$ and variance $1$ and normalizing them.
	Thus $Z_1$ is distributed as $Z'_1/\sqrt{\sum_{i}(Z'_i)^2}$.
	Thus
	\begin{align*}
\Pr[Z_1 \ge t/\sqrt{N}] &\le \Pr\left[|Z'_1| \ge t/2 \;\; \vee \;\; \sum_{i=1}^N (Z'_i)^2 \le N/4\right] \\&\le  \Pr\left[|Z'_1|\ge t/2\right] + \Pr\left[\sum_{i=1}^N (Z'_i)^2 \le N/4\right].\end{align*}
The first summand $\Pr[|Z'_1|\ge t/2]$ is bounded by $e^{-t^2/8}$.
As for the second summand, the random variable $\left(\sum_{i=1}^N{(Z'_i)^2}\right)$ is distributed according to a $\chi^2$-distribution with parameter $N$, and 
	Chernoff bounds on such distributions show that $\Pr[\sum_{i=1}^N (Z'_i)^2 \le N/4] \le (e^{1/4}/4)^{N/2}\le e^{-N/2}$. 
	Overall, we get $\Pr[Z_1 \ge t/\sqrt{N}]\le e^{-t^2/8} + e^{-N/2}$.
	Now, if $t > \sqrt{N}$ then the probability $\Pr[Z_1 \ge t/\sqrt{N}]$ equals $0$, since a vector of norm $1$ cannot have a coordinate with value larger than $1$. So we only need to consider the case where $t \le \sqrt{N}$, in which case $e^{-N/2} \le e^{-t^2/2}$, which completes the proof.
\end{proof}

We are ready to prove Lemma~\ref{lemma:most matrices are good}, which we restate next.
\begin{lemmaNoNum}[Lemma~\ref{lemma:most matrices are good}, restated]
	With high probability over the choice of a random orthogonal $N$-by-$N$  matrix $\U$, we have
$$	\|\U_{S,T}\| < \left(\frac{100 (|S|+|T|)\ln N}{N}\right)^{1/2}$$
	for all non-empty $S, T\subseteq [N]$.
	In other words, with high probability $U$ is good (as in Def.~\ref{def:good}).
\end{lemmaNoNum}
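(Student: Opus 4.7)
The plan is to apply Claim~\ref{claim:fixed S,T} for each fixed pair $(S,T)$ and then union bound over all non-empty $S, T \subseteq [N]$. For a pair with $|S|=s$, $|T|=t'$, set $r = s+t'$ and choose the threshold parameter in Claim~\ref{claim:fixed S,T} to be $t = \tfrac{1}{2}\sqrt{100 r \ln N}= 5\sqrt{r\ln N}$, so that $2t/\sqrt{N}$ matches the desired bound $\sqrt{100 r \ln N / N}$. Then Claim~\ref{claim:fixed S,T} gives
\[
\Pr\!\Big[\|\U_{S,T}\|\ge \sqrt{100 r \ln N/N}\Big]\;\le\; 2 \cdot 9^{r}\cdot e^{-25 r \ln N/8}\;=\;2\cdot 9^{r} \cdot N^{-25r/8}.
\]

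Next I count pairs $(S,T)$ with $|S|+|T|=r$. By Vandermonde's identity, for each $r\in\{2,\ldots,2N\}$ the number of such pairs (with both sets non-empty) is at most $\sum_{s}\binom{N}{s}\binom{N}{r-s}=\binom{2N}{r}\le (2eN/r)^r$. Therefore a union bound over all non-empty $(S,T)$ yields
\[
\Pr\Big[\exists\, S,T\neq \emptyset :\|\U_{S,T}\|\ge \sqrt{100(|S|+|T|)\ln N/N}\Big]
\;\le\; \sum_{r=2}^{2N} \Big(\tfrac{2eN}{r}\Big)^{r} \cdot 2\cdot 9^{r}\cdot N^{-25 r /8}
\;=\; 2\sum_{r=2}^{2N}\Big(\tfrac{18e}{r}\Big)^{r}\cdot N^{-17r/8}.
\]
The dominant term is $r=2$, contributing $O(N^{-17/4})$, and the remaining terms form a rapidly decaying geometric-like tail (for $N$ larger than some absolute constant the ratio of consecutive terms is $o(1)$). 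Consequently the total probability is $O(N^{-17/4}) = o(1)$, so with high probability $\U$ satisfies the desired bound for every non-empty $S,T$, meaning $\U$ is good in the sense of Definition~\ref{def:good}.

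The only delicate point is checking that the combinatorial blow-up from the union bound is killed by the Gaussian-type tail in Claim~\ref{claim:fixed S,T}: at size $r$ we accumulate a factor roughly $(N\cdot 9)^{r}$ from counting pairs and from the $\eps$-net factor, but this is dwarfed by the factor $N^{-25r/8}$ coming from $e^{-t^2/8}$ with $t = 5\sqrt{r\ln N}$, because $25/8 > 1$ by a margin that beats $1 + \log_N 9 = 1+o(1)$. This gap is precisely why the constant $100$ inside the square root suffices; any constant somewhat larger than $8\ln 9 \approx 17.6$ would do. No further technical ingredient is needed beyond Claim~\ref{claim:fixed S,T}, Vandermonde's identity, and elementary tail estimates.
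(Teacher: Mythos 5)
Your proposal is correct and follows essentially the same approach as the paper: apply Claim~\ref{claim:fixed S,T} with $t = 5\sqrt{(|S|+|T|)\ln N}$ and union bound over all non-empty pairs $(S,T)$. The only cosmetic difference is that you group the union bound by $r=|S|+|T|$ and invoke Vandermonde's identity to count pairs, while the paper bounds each pair's probability by $N^{-2(|S|+|T|)}$ and factors the double sum as a square; both give the same $o(1)$ failure probability.
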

\begin{proof}
For any specific $S,T\subseteq [N]$, by Claim~\ref{claim:fixed S,T}, the probability	that 
$$\|\U_{S,T}\| \ge \left(\frac{100 (|S|+|T|)\ln N}{N}\right)^{1/2}$$
is at most $2\cdot 9^{|S|+|T|} \cdot e^{-(25 (|S|+|T|)\ln N)/8} \le N^{-2(|S|+|T|)}$. This allows a union bound over all non-empty sets as $\sum_{\emptyset \neq S\subseteq [N]} \sum_{\emptyset \neq T\subseteq [N]} N^{-2(|S|+|T|)}=O(1/N^2)$.
\end{proof}

%%%%%%%%%%%%%%%%%

\section{Fourier Coefficients of Decision Trees}\label{sec:Fourier Growth Decision Trees}

In this section, we treat Boolean function as functions mapping $\{-1,1\}^n$ to $\{0,1\}$, 
which will be a lot more convenient in the proofs.%
\footnote{Note that we can transform any function $f:\{-1,1\}^n \to \{-1,1\}$ to a function $f':\{-1,1\}^n \to \{0,1\}$ by taking $f'(x) = \frac{1-f(x)}{2}$. Thus, any bounds on $\sum_{S:|S|=\l} |\hat{f'}(S)|$ can be translated to bounds on $\sum_{S:|S|=\l} |\hat{f}(S)|$ with a multiplicative factor of $2$.}
We will prove the bounds on $\sum_{S:|S|=\ell} |\hat{f}(S)|$ by induction on $\l$. We start with a bound on the first level, i.e., a bound on $\sum_{S:|S|=1} {|\hat{f}(S)|}$, that was previously given in~\cite{ODonnellServedio:07,BTW15}. Our proof, however,  gets a tight dependency on the acceptance probability of the function (improving upon~\cite{BTW15}) that will later play a crucial role in the induction.

\begin{theorem}[Level-1 Inequality for Decision Trees]\label{thm:base}
	Let $T$ be a decision tree of depth $d$ computing a Boolean function $f:\{-1,1\}^n \to \{0,1\}$  with $p=\Pr[f(x)=1]$. Then,
	$$
	\sum_{i=1}^n |\hat{f}(\{i\})| \le O(\sqrt{d} \cdot p \cdot \sqrt{\ln(e/p)})\;.
	$$
\end{theorem}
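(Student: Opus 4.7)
My plan is to prove the bound by induction on the depth $d$ of the decision tree, tracking an absolute constant $C$ throughout. The base case $d=0$ is immediate: a depth-$0$ tree is a constant, so every first-level Fourier coefficient vanishes.

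For the inductive step, let $i_0$ be the variable queried at the root of a depth-$d$ tree, and let $f_+$ and $f_-$ be the functions computed by the two subtrees (both of depth at most $d-1$), with acceptance probabilities $p_+$ and $p_-$, so that $p = (p_+ + p_-)/2$. The standard root decomposition $f(x) = \tfrac{1+x_{i_0}}{2}f_+(x) + \tfrac{1-x_{i_0}}{2}f_-(x)$ gives $\hat{f}(\{i_0\}) = (p_+-p_-)/2$ and $\hat{f}(\{i\}) = (\hat{f_+}(\{i\}) + \hat{f_-}(\{i\}))/2$ for $i \ne i_0$. Writing $\delta := |p_+-p_-|/2$ and $g(q) := q\sqrt{\ln(e/q)}$, the triangle inequality together with the inductive hypothesis yields
\begin{equation*}
\sum_i |\hat{f}(\{i\})| \;\le\; \delta \;+\; \frac{C\sqrt{d-1}}{2}\bigl(g(p_+) + g(p_-)\bigr).
\end{equation*}
The remainder of the proof reduces to the purely analytic task of showing that this right-hand side is at most $C\sqrt{d}\cdot g(p)$.

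I would proceed by exploiting the strict concavity of $g$ on $(0,1]$. A direct computation gives $|g''(q)| \ge 1/(2q\sqrt{\ln(e/q)})$, so for $\delta \le p/2$ Taylor's theorem with remainder yields $g(p_+)+g(p_-) \le 2g(p) - c\delta^2/g(p)$ for an absolute constant $c>0$. Substituting into the bound, the inductive step reduces to showing that $\delta - cC\sqrt{d-1}\,\delta^2/(2g(p))$ can be absorbed into $C(\sqrt{d}-\sqrt{d-1})g(p) = \Theta(Cg(p)/\sqrt{d})$. Maximizing the left-hand side over $\delta$ gives $\Theta(g(p)/(C\sqrt{d-1}))$, which is at most the right-hand side provided $C$ is a sufficiently large absolute constant.

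The main obstacle is the complementary regime $\delta > p/2$, where the Taylor quadratic bound is too weak (one of $p_\pm$ is close to $0$, so the remainder is evaluated near a singularity of $g''$). Here I would instead use a direct estimate: since $g$ is monotone and $g(p_+)+g(p_-) \le 2p\sqrt{\ln(e/(2p))}$ in the extremal case $p_- = 0,\, p_+ = 2p$, the deficit is controlled by $\sqrt{\ln(e/p)} - \sqrt{\ln(e/(2p))} \asymp 1/\sqrt{\ln(e/p)}$. Combining this logarithmic gap with $\sqrt{d}-\sqrt{d-1} \asymp 1/\sqrt{d}$ via AM-GM suffices to absorb the linear $\delta$ term on the left. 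A uniform case split in $\delta$ around $p/2$ then yields the bound with a fixed universal constant $C$, completing the induction.
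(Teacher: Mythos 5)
Your approach is genuinely different from the paper's. The paper proves Theorem~\ref{thm:base} non-inductively: to each leaf $\lambda$ of a full depth-$d$ tree it associates the signed vector $v_\lambda$ of queried literals, observes (after negating inputs so that all $\hat f(\{i\})\ge 0$) that $\sum_i\hat f(\{i\})=\E_\lambda[f(\lambda)\, s_\lambda]$ where $s_\lambda=\sum_i(v_\lambda)_i$ is distributed over a random leaf as a sum of $d$ i.i.d.\ $\pm1$'s, and then bounds this truncated expectation via a Chernoff tail with the threshold $t=\sqrt{2d\ln(e/p)}$. You instead induct on depth via the root decomposition and exploit strict concavity of $g(q)=q\sqrt{\ln(e/q)}$ to absorb the new root term $\delta$. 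Both routes work; the paper's is shorter and avoids the analytic bookkeeping, while yours is more ``local'' and closer in spirit to the recursion the paper uses at higher levels.

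There is, however, a genuine error in the $\delta>p/2$ branch. You assert that the extremal case for $g(p_+)+g(p_-)$ with $p_++p_-=2p$ is $(p_-,p_+)=(0,2p)$ and that this yields the \emph{upper} bound $g(p_+)+g(p_-)\le g(2p)$. Concavity cuts the other way: for fixed sum, $g(p_+)+g(p_-)$ is \emph{maximized} at $p_+=p_-=p$ (value $2g(p)$) and \emph{minimized} at the endpoints (value $g(2p)$), so in fact $g(p_+)+g(p_-)\ge g(2p)$. The deficit $2g(p)-(g(p_+)+g(p_-))$ is thus increasing in $\delta$, and the case $\delta=p$ that you compute gives the \emph{largest} deficit in the regime, not a valid lower bound. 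The correct worst case constrained to $\delta\ge p/2$ is $\delta=p/2$, where the same Taylor estimate you already set up yields deficit $\gtrsim (p/2)^2\cdot\min_{[p/2,3p/2]}|g''|\asymp p/\sqrt{\ln(e/p)}$. This happens to be the same order as your misdirected computation, so the final inequality survives the fix, but the step as written is backwards. You should also note the base case $d=1$ separately (there $\delta\le p\le g(p)$ suffices directly, as the $\sqrt{d}/\sqrt{d-1}$ manipulations presuppose $d\ge 2$) and reduce to $p\le 1/2$ via $f\mapsto 1-f$ so that $3p/2\le 1$ keeps the Taylor interval inside the domain.
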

\begin{proof}
	Since we may negate input bits, without loss of generality all Fourier coefficients $\hat{f}(i)$ are non-negative and it suffices to bound the quantity $\sum_{i} \hat{f}(i)$.
	We may also assume without loss of generality that $T$ is a full binary tree. We can ensure this by querying additional variables in case we have a leaf at depth smaller than $d$.
	
For every leaf $\lambda$ in $T$, we associate a vector $v_{\lambda} \in \{-1,0,1\}^n$ such that $(v_{\lambda})_i = 1$ if the $i$-th bit was queried on the path to $v$ and equaled $1$,  $(v_{\lambda})_i = -1$ if the $i$-th bit was queried on the path to $v$ and equaled $-1$,   and $(v_{\lambda})_i = 0$  otherwise.

Denote by $s_\lambda = \sum_{i}(v_{\lambda})_i$ the sum of fixed variables in leaf $\lambda$.
The crucial observation is that if $\lambda$ is a random leaf, then $s_\lambda$ is distributed as the sum of $d$ independent uniformly random $\{-1,1\}$ variables. (This can be easily verified by induction on $d$.)
Thus, by Chernoff's bound we have $\Pr[s_{\lambda} \ge t] = e^{-t^2/2d}$. Now, note that 
$$\sum_{i=1}^{n} \hat{f}(\{i\}) = \sum_{i=1}^{n} \E_{x}[f(x)\cdot x_i] = \E_{x}\left[f(x)\cdot \sum_{i=1}^{n} x_i\right]  = \E_{\lambda}[f(\lambda) \cdot s_{\lambda}],$$
for a random leaf $\lambda$.
To finish the proof, we will use the tail bounds on $s_\lambda$ and the fact that $\Pr[f(\lambda)=1]=p$ to get a bound on $\E_{\lambda}[f(\lambda) \cdot s_{\lambda}]$.
Set $t = \sqrt{2 d \ln(e/p)}$.
We can rewrite $$
\E_\lambda[f(\lambda) \cdot s_\lambda] = \E[f(\lambda)\cdot t] + \E_{\lambda}[f(\lambda) \cdot (s_\lambda-t)] \le p\cdot t + \E_{\lambda}[\max(0,s_\lambda-t)],$$
and note that 
$$\E_{\lambda}[\max(0,s_\lambda-t)] \le \sum_{x=\lceil{t\rceil}}^{\infty} \Pr[s_{\lambda} \ge x]  \le \sum_{x=\lceil{t\rceil}}^{\infty} e^{-x^2/2d} \le O(p d/\lceil{t\rceil}) \le O(pt)\;.$$
Overall, we get 
$\sum_{i=1}^{n} \hat{f}(i) \le O(pt)$, which completes the proof as we assumed w.l.o.g. that all $\hat{f}(i)$ are non-negative.
\end{proof}

The rest of the proof will build on this base case by decomposing a decision tree into smaller pieces. The proof strategy follows the one laid in \cite{CHRT18} for constant-width read-once oblivious branching programs (ROBPs, in short), in the following aspects:
(i) it is carried by induction on $\l$, (ii) it relies on decomposition results,  (iii) we get bounds with respect to the probability of acceptance, which allows for the induction to be effective, and (iv) it uses the relabeling technique. 
Despite those similarities, there are key differences between the models of constant-width ROBPs and decision trees. 
For example, in the former  the total number of nodes is linear in $n$, whereas in the latter we think of the number of nodes as exponential in the depth $d$ (which we think of as larger than $\sqrt{n}$). 
It is thus surprising that one can import the techniques to our setting. 
In addition, while in ROBPs the variables are read in an oblivious order, in the case of decision tree the order is selected adaptively and may change between different paths of the same tree.
One technique that we could not manage to transfer from the ROBPs case is the ``bootstrapping'' lemma of Reingold, Steinke, and Vadhan \cite{ReingoldSV13}, that showed that bounds on the first $\Theta(\log n)$ levels implies similar bounds on all levels. We believe that this is the main reason we could not achieve the conjectured bounds for high levels, as our bounds for lower levels are nearly tight.

\paragraph{Notation:} For every vertex $v$ in the decision tree, we denote by $B_v(x)$ the indicator that the path determined by $x$  passes through $v$.
We denote by $A_v(x)$ the function computed by the subtree rooted at $v$.%
	\footnote{We use $B_v$ and $A _v$ as shorthand to ``before $v$'' and 	``after $v$''.}
	We denote by $\next(v)$ the index of the variable being read in the node $v$. In the case where $v$ is a leaf, we write $\next(v) = \bot$.
%	For any $j\in [n]$, we denote by $L_j$, the set of leaves such that the path reaching those leaves queries $x_j$. We denote by $L$ the set of all leaves.
			
	\begin{lemma}[Decomposition Lemma]\label{lemma:decomposition}
	Let $S \subseteq[n]$ be a non-empty subset.
Then,	\begin{equation}\label{eq:decomposition Fourier}
			\hat{f}(S) = \sum_{j\in S} \sum_{v:\next(v)=j} \hat{B_v}(S\setminus \{j\}) \cdot \hat{A_v}(\{j\})\;.
	\end{equation}
	\end{lemma}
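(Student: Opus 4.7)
The plan is to prove the identity by induction on the depth $d$ of the decision tree $T$ computing $f$. The base case $d=0$ is immediate: $f$ is constant, so $\hat{f}(S)=0$ for every non-empty $S$, while the right-hand side is an empty sum.

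For the inductive step, let $r$ denote the root of $T$ with $\next(r)=j_0$, and let $T_L, T_R$ be its left and right subtrees (corresponding to $x_{j_0}=-1$ and $+1$), computing $f_L$ and $f_R$. Assuming (without loss of generality) that each root-to-leaf path is read-once, neither $f_L$ nor $f_R$ depends on $x_{j_0}$, so $f(x)=\tfrac{1}{2}(f_L(x)+f_R(x))+\tfrac{1}{2}x_{j_0}(f_R(x)-f_L(x))$. A direct computation of $\hat{f}(S)=\E_x[f(x)\prod_{i\in S}x_i]$ then yields $\hat{f}(S)=\tfrac{1}{2}(\hat{f_L}(S)+\hat{f_R}(S))$ when $j_0\notin S$, and $\hat{f}(S)=\tfrac{1}{2}(\hat{f_R}(S')-\hat{f_L}(S'))$ when $j_0\in S$, where $S':=S\setminus\{j_0\}$.

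To expand the right-hand side of the lemma, partition the nodes of $T$ into $\{r\}$, the nodes of $T_L$, and the nodes of $T_R$. The root contributes $\hat{B_r}(S\setminus\{j_0\})\hat{A_r}(\{j_0\})=\one[S=\{j_0\}]\cdot\hat{f}(\{j_0\})$. For $v\in T_L$, reaching $v$ in $T$ forces $x_{j_0}=-1$, so $B_v^T(x)=\tfrac{1-x_{j_0}}{2}B_v^{T_L}(x)$ and $A_v^T=A_v^{T_L}$; since $B_v^{T_L}$ does not depend on $x_{j_0}$, this gives $\hat{B_v^T}(U)=\tfrac{1}{2}\hat{B_v^{T_L}}(U)$ if $j_0\notin U$ and $\hat{B_v^T}(U)=-\tfrac{1}{2}\hat{B_v^{T_L}}(U\setminus\{j_0\})$ if $j_0\in U$, with analogous (sign-positive) identities for $T_R$.

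Substituting these identities into the double sum and invoking the inductive hypothesis on $T_L$ and $T_R$ reproduces the closed-form expression for $\hat{f}(S)$: when $j_0\notin S$ every contribution carries a $+\tfrac{1}{2}$ factor and one recovers $\tfrac{1}{2}(\hat{f_L}(S)+\hat{f_R}(S))$; when $j_0\in S$ with $|S|\ge 2$, the $T_L$ sum carries a $-\tfrac{1}{2}$ factor (since $j_0\in S\setminus\{j\}$ for each surviving $j$) and totals $-\tfrac{1}{2}\hat{f_L}(S')$ by induction, the $T_R$ sum totals $+\tfrac{1}{2}\hat{f_R}(S')$, and together these give $\hat{f}(S)$; the case $S=\{j_0\}$ is handled solely by the root term. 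The one technical subtlety lies in this sign bookkeeping---the minus sign arising from $\tfrac{1-x_{j_0}}{2}$ is precisely what converts the sum over the two subtrees into the required difference $\tfrac{1}{2}(\hat{f_R}(S')-\hat{f_L}(S'))$.
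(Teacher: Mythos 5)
Your proof is correct, but it takes a genuinely different route from the paper. The paper's argument is a single ``frontier'' decomposition: for each $j\in S$, it collects the nodes $v$ with $\next(v)=j$ at which every variable of $S\setminus\{j\}$ has already been queried, observes that these sets form an antichain that (together with the ``bad'' leaves $L_\bot$) partitions all root-to-leaf paths, and writes $f = \sum_{\lambda\in L_\bot} B_\lambda f(\lambda) + \sum_j\sum_v B_v A_v$. The key step is then that $B_v$ and $A_v$ are supported on disjoint variable sets with $S\setminus\{j\}$ living entirely on the $B_v$ side and $j$ entirely on the $A_v$ side, which makes $\widehat{B_v A_v}(S)$ factor as $\hat{B_v}(S\setminus\{j\})\hat{A_v}(\{j\})$; the $L_\bot$ terms vanish. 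Your proof instead proceeds by induction on depth, peeling the root and carefully tracking how the $\tfrac{1\mp x_{j_0}}{2}$ factors propagate signs into $\hat{B_v}$ of the two subtrees. The paper's route is shorter and gives the identity in one shot; your inductive route is more elementary and makes the sign bookkeeping very explicit, at the cost of a three-way case split on whether $j_0\in S$ and $|S|=1$. Both approaches rely (yours explicitly, the paper's implicitly through the frontier and disjoint-support claims) on the tree being read-once along each path.

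One small point worth noting: in the case $j_0\in S$ your formula for the root's contribution, $\one[S=\{j_0\}]\cdot\hat f(\{j_0\})$, is correct, but the reason it is zero for $j_0\notin S$ is that the root simply never appears in the double sum (it only appears under $j=j_0$), not because $\hat{B_r}(S\setminus\{j_0\})=0$; this is implicit in your ``partition the nodes'' framing but is worth stating to avoid confusion.
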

	\begin{proof}
	For each $j\in [n]$, we denote by $V_j$ the set of vertices $v$ for which $\next(v)=j$ and all other variables in $\{x_i : i\in S\setminus\{j\}\}$ have been queried on the path reaching $v$.
	Observe that the sets $\{V_j\}_{j\in S}$ are pairwise disjoint and that the set of nodes $\bigcup_{j\in S} V_j$ forms a {\bf frontier} in the sense that no vertex in this set is a descendant of another.
	Denote by $L_{\bot}$ the set of leaves on which we have not queried all the variables in $S$. 
	Observe that any path in the tree must reach exactly one node in either $\bigcup_{j\in S} V_j$ or $L_{\bot}$.
	Thus, we may write 
	$$
	f(x) = \sum_{\lambda \in L_{\bot}} B_{\lambda}(x) \cdot f(\lambda)  + \sum_{j\in S}\sum_{v\in V_j} B_{v}(x) \cdot A_v(x)
	$$
	Taking the inner product of both sides with $\prod_{i\in S}x_i$, we get
	$$
	\hat{f}(S) = \sum_{\lambda \in L_{\bot}} \hat{B_{\lambda}}(S) \cdot f(\lambda)  + \sum_{j\in S}\sum_{v\in V_j} \hat{B_{v}}(S\setminus \{j\}) \cdot  \hat{A_v}(\{j\}).
	$$
	Since $\hat{B_{\lambda}}(S) = 0$ for $\lambda\in L_{\bot}$, the first sum can be omitted and we get
	$$
		\hat{f}(S) = \sum_{j\in S}\sum_{v\in V_j} \hat{B_{v}}(S\setminus \{j\}) \cdot  \hat{A_v}(\{j\}).
		$$
Fix $j$. Observe that for any $v$ such that $\next(v)=j$, either $v \in V_j$, or $B_v$ does not depend on some $x_i$ for $i\in S\setminus \{j\}$.
In the latter case $\hat{B_{v}}(S\setminus \{j\}) = 0$, which means that 
			$$\sum_{v: \next(v)=j} \hat{B_{v}}(S\setminus \{j\}) \cdot  \hat{A_v}(\{j\}) = \sum_{v\in V_j} \hat{B_{v}}(S\setminus \{j\}) \cdot  \hat{A_v}(\{j\})$$
			and proves the validity of Eq.\eqref{eq:decomposition Fourier}.\end{proof}

For any vertex $v$ in the tree, we denote by $p_v$ the probability of reaching $v$ under a uniformly chosen input. Alternatively, $p_v = \Pr_{x\in \{-1,1\}^n}[B_v(x)=1] = \hat{B_v}(\emptyset)$.
As a special case of Eq.~\eqref{eq:decomposition Fourier} we get 
\begin{equation}
\hat{f}(\{j\}) = \sum_{v:\next(v)=j} p_v \cdot \hat{A_v}(\{j\}).
\label{eq:Fourier Decomposition singleton}
\end{equation}
Thus, a bound on $\sum_{j} \hat{f}(\{j\})$ implies a bound on 
$\sum_{j=1}^{n}\sum_{v:\next(v)=j} p_v \cdot \hat{A_v}(\{j\}).$
However, in the sequel we will need a stronger version of the bound, where $\hat{A_v}(\{j\})$ are taken with absolute values.
We obtain such a bound as a corollary of Theorem~\ref{thm:base}.
\begin{corollary}[Refinement of Level-1 Inequality for Decision Trees]\label{cor:5.2}
$$
\sum_{j} \sum_{v: \next(v)=j} p_v \cdot |\hat{A_v}(j)| \le O( \sqrt{d} \cdot p \cdot \sqrt{\ln(e/p)})
$$
\end{corollary}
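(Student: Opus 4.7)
The plan is to apply Theorem~\ref{thm:base} to a modified version $\tilde f$ of $f$, obtained via a relabeling trick, chosen so that the singleton Fourier coefficient at every internal node becomes non-negative; then the absolute-value bars in the statement disappear for free. Concretely, given the decision tree $T$ computing $f$, I would build a new tree $\tilde T$ by the following rule: for every internal vertex $v$ with $j = \next(v)$, swap the two children of $v$ if and only if $\hat{A_v}(\{j\}) < 0$, where $A_v$ is the subtree function in the original tree $T$.

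Two invariants drive the proof. First, swapping the children at a node $v$ is equivalent to negating $x_{\next(v)}$ inside the subtree rooted at $v$; since the uniform distribution is invariant under coordinate flips, each swap preserves the overall acceptance probability, so $\Pr[\tilde f = 1] = p$. Second, and more importantly, for every node $u$ the conditional acceptance probability $q_u \triangleq \hat{A_u}(\emptyset)$ is preserved by any swaps performed strictly inside $u$'s subtree, by the same symmetry. Combined with the identity $\hat{A_v}(\{j\}) = \tfrac12(q_{v_R} - q_{v_L})$ for the right and left children of $v$, the swap rule at $v$ therefore produces $\hat{\tilde A_v}(\{j\}) = |\hat{A_v}(\{j\})| \ge 0$, independently of what the rule does at any descendant of $v$.

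Granting these two facts, the conclusion is immediate. Theorem~\ref{thm:base} applied to $\tilde f$, which still has depth $d$ and acceptance probability $p$, yields
$$
\sum_j |\hat{\tilde f}(\{j\})| \le O\bigl(\sqrt{d}\cdot p \cdot \sqrt{\ln(e/p)}\bigr).
$$
Expanding the left-hand side via Eq.~\eqref{eq:Fourier Decomposition singleton} gives $\hat{\tilde f}(\{j\}) = \sum_{v:\next(v)=j} p_v\cdot \hat{\tilde A_v}(\{j\}) = \sum_{v:\next(v)=j} p_v\cdot |\hat{A_v}(\{j\})|$, which is already a sum of non-negative terms, so the outer absolute value is redundant and summing over $j$ recovers exactly the left-hand side of the corollary.

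The one place where care is required is the decoupling claim: that the swap rule at $v$ can be chosen up front from the original tree $T$ and yet still produces $\hat{\tilde A_v}(\{j\}) = |\hat{A_v}(\{j\})|$ even after many swaps have been performed below $v$. Without the invariance of $q_u$ under descendant swaps, one would be forced to process nodes bottom-up with the rule itself depending on the outcomes of earlier swaps; the symmetry observation is exactly what removes this potential circularity and makes the whole argument a one-shot reduction to Theorem~\ref{thm:base}.
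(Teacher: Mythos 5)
Your proposal is correct and follows essentially the same relabeling argument as the paper's proof of Corollary~\ref{cor:5.2}: swap the two edges leaving $v$ exactly when $\hat{A_v}(\{\next(v)\})<0$, observe that $p_v$ and $p$ are preserved, and then apply Theorem~\ref{thm:base} to the relabeled tree. The only difference is that you spell out the decoupling claim (invariance of each $q_u$ under swaps in $u$'s subtree, together with $\hat{A_v}(\{j\})=\tfrac12(q_{v_R}-q_{v_L})$) that the paper asserts without elaboration; this is a helpful clarification of the same proof rather than a different route.
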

\begin{proof}
Given a decision tree $T$ of depth $d$ for $f$, we generate a new decision $T'$ of depth $d$ that computes another Boolean function $f':\{-1,1\}^n\to \{0,1\}$.  
We obtain $T'$ from $T$ by relabeling some of the edges.
For every non-leaf $v$, with $j = \next(v)$, if $\hat{A_v}(\{j\}) <0$, then we relabel the two edges going from $v$. That is, we swap the labels on the edges marked by $x_j=-1$ and $x_j=1$.
This procedure generates a decision tree $T'$, with the same graph structure, but with different edge labels, and in particular it would most likely compute a different function $f'\neq f$.
Nonetheless, the relabeling does not change the probability of reaching any vertex $v$, $p_v$, or the probability of acceptance. 
Denote by $A'_v$ and $B'_v$ the indicators associated with $v$ in the tree $T'$.
We have that $\Pr[B'_v(x)=1] = \Pr[B_v(x)=1] = p_v$. In addition, if $\next(v)=j$, then $\hat{A'_v}(\{j\}) = |\hat{A_v}(\{j\})|$. 
Thus, all the coefficients $\hat{A'_v}(\{j\})$ are positive and we  conclude that \begin{align*}
\sum_{j=1}^{n} \sum_{v:\next(v)=j} p_v\cdot  |\hat{A_v}(\{j\})| &= 
\sum_{j=1}^{n}\sum_{v:\next(v)=j}  p_v \cdot \hat{A'_v}(\{j\}) \\
&= \sum_{j=1}^{n} \hat{f'}(\{j\}) \tag{Eq.~\eqref{eq:Fourier Decomposition singleton} on $f'$}\\
&\le O( \sqrt{d} \cdot p \cdot \sqrt{\ln(e/p)}) 
\tag{Thm.~\ref{thm:base} on $f'$}.
\end{align*}
\end{proof}

\begin{corollary}[Further Refinement of Level-1 Inequality for Decision Trees]\label{cor:5.3}
Let $T$ be a decision tree of depth at most $d$ computing a Boolean function $f$.
Let $V_0,V_1, \ldots, V_{d}$ denote the set of vertices of depth $0,1, \ldots, d$ in $T$ respectively.
Let $0 \le d' <  d'' \le d$.
Then,
$$
\sum_{v \in (V_{d'} \cup V_{d'+1} \cup \ldots \cup V_{d''-1}) } p_v \cdot |\hat{A_v}(\{\next(v)\})| \le O( \sqrt{d''-d'} \cdot p \cdot \sqrt{\ln(e/p)})
$$
\end{corollary}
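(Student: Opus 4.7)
The plan is to decompose $T$ at the two depths $d'$ and $d''$ and to reduce to Corollary~\ref{cor:5.2} applied, separately below each depth-$d'$ vertex, to a tree of depth $d''-d'$. Concretely, for each $u\in V_{d'}$, define the truncation $T_u^{\star}$ of the subtree of $T$ rooted at $u$ by replacing every vertex $w$ at depth exactly $d''-d'$ below $u$ with a leaf carrying the constant label $\hat{A_w}(\emptyset)=\Pr_x[A_w(x)=1]\in [0,1]$, and keeping the original leaves of $T_u$ at depths less than $d''-d'$ unchanged. Then $T_u^{\star}$ is a decision tree of depth at most $d''-d'$ computing a $[0,1]$-valued function $g_u$ with $\E[g_u]=\Pr[A_u(x)=1]\triangleq p_u^{\star}$. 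Theorem~\ref{thm:base} (and hence Corollary~\ref{cor:5.2}) extends verbatim to $[0,1]$-valued functions, since the proof only uses $f(\lambda)\in[0,1]$ in the step $f(\lambda)\, s_\lambda \le f(\lambda)\,t + \max(0,s_\lambda-t)$.

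The key structural claim is that truncation preserves first-level Fourier coefficients at internal nodes: for every internal vertex $v$ of $T_u^{\star}$ with $j=\next(v)$,
$$\hat{A_v^{T_u^{\star}}}(\{j\})=\hat{A_v}(\{j\}),$$
where $A_v^{T_u^{\star}}$ denotes the $[0,1]$-valued function computed by the subtree of $T_u^{\star}$ rooted at $v$. Applying the identity $\hat{F}(\{j\})=\tfrac12(\E[F(x)\mid x_j=1]-\E[F(x)\mid x_j=-1])$ to both sides, this reduces to verifying $\E_x[A_{v^{\pm}}^{T_u^{\star}}(x)]=\E_x[A_{v^{\pm}}(x)]$ for both children $v^{+},v^{-}$ of $v$. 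This equality holds because at each truncated leaf $w$, replacing the subtree $A_w$ by the constant $\Pr[A_w=1]$ preserves the uniform-input expectation: $A_w$ depends only on coordinates outside the path to $w$, so marginalizing those coordinates (either via $A_w$ itself or via its constant mean) produces the same contribution.

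Applying the $[0,1]$-valued version of Corollary~\ref{cor:5.2} to $T_u^{\star}$ and invoking the key structural claim (and noting that the reaching probabilities $p_v^{T_u^{\star}}$ coincide with $p_v^{T_u}$ for every internal vertex $v$ of $T_u^{\star}$, since the branching structure is unchanged up to depth $d''-d'$), I would obtain, for each $u\in V_{d'}$,
$$\sum_{\substack{v\in T_u \\ 0 \le \mathrm{depth}_{T_u}(v)\le d''-d'-1}} p_v^{T_u}\cdot \bigl|\hat{A_v}(\{\next(v)\})\bigr| \le O\!\left(\sqrt{d''-d'}\cdot p_u^{\star}\sqrt{\ln(e/p_u^{\star})}\right).$$
Multiplying this by $p_u$ (using $p_v = p_u\cdot p_v^{T_u}$ for $v$ in $T_u$) and summing over $u\in V_{d'}$ turns the left-hand side into the target sum $\sum_{v\in V_{d'}\cup\cdots\cup V_{d''-1}} p_v\cdot |\hat{A_v}(\{\next(v)\})|$. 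Assuming (as in the proof of Theorem~\ref{thm:base}) that $T$ is a full binary tree of depth $d$, we have $\sum_{u\in V_{d'}} p_u=1$ and $\sum_{u\in V_{d'}} p_u\cdot p_u^{\star}=\Pr[f=1]=p$. Since $q\mapsto q\sqrt{\ln(e/q)}$ is concave on $(0,1]$, Jensen's inequality gives $\sum_u p_u\cdot p_u^{\star}\sqrt{\ln(e/p_u^{\star})}\le p\sqrt{\ln(e/p)}$, yielding the claimed bound.

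The main technical point to be careful about is the structural claim that truncation preserves the first-level Fourier coefficient at every internal vertex of $T_u^{\star}$; once this (together with the essentially immediate $[0,1]$-valued extension of Corollary~\ref{cor:5.2}) is in hand, the rest is a routine Jensen-style averaging over the depth-$d'$ frontier.
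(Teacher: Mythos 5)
Your proof is correct and follows essentially the same route as the paper: truncate the tree at depth $d''$, apply Corollary~\ref{cor:5.2} to each depth-$(d''-d')$ subtree rooted at a node in $V_{d'}$, and aggregate over $V_{d'}$ using concavity of $q\mapsto q\sqrt{\ln(e/q)}$. The only difference is cosmetic: the paper truncates by replacing each depth-$d''$ subtree with a randomized leaf (a coin flip with acceptance probability $\Pr[A_w=1]$) and invokes convexity to reduce to the deterministic case, whereas you truncate with a deterministic $[0,1]$-valued constant leaf and observe that Theorem~\ref{thm:base} (hence Corollary~\ref{cor:5.2}) extends verbatim to $[0,1]$-valued trees---two equivalent formulations, and your verification that the truncation preserves the first-level coefficients $\hat{A_v}(\{\next(v)\})$ at internal nodes is a point the paper leaves implicit.
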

\begin{proof}
We replace the tree $T$ with a probabilistic tree $T'$ of depth $d''$ with the same acceptance probability, and the same  coefficients $p_v$  and $\hat{A_v}(\{\next(v)\})$ for $v\in (V_{d'} \cup V_{d'+1} \cup \ldots \cup V_{d''-1})$.
To do so, whenever we reach a node $v'$ in layer $d''$, we flip a coin and accept with probability $\Pr[A_{v'}(x)=1]$. This yields a randomized decision tree with the same parameters mentioned above.
We note that since $p\cdot \sqrt{\ln(e/p)}$ is concave for $p\in[0,1]$, it suffices to prove the bound for any deterministic decision tree of depth $d''$.
Thus, for the rest of the proof let us assume that $d''=d$.

Now, for any fixed vertex $v$ in the $d'$-th layer, $V_{d'}$, we use Corollary~\ref{cor:5.2} to bound the contribution of all descendants of $v$.
Denote by $q_v =\Pr[A_v(x)=1]$.
We get
\begin{align*}
\sum_{v': v\mapsto v'}  p_{v'} \cdot |\hat{A_{v'}}(\{\next(v')\})| &=  p_v \cdot \sum_{v': v\mapsto v'} \Pr[v\mapsto v'] \cdot |\hat{A_{v'}}(\{\next(v)\})|\\
&= p_v \cdot O(\sqrt{d''-d'} \cdot q_v \cdot \sqrt{\ln(e/q_v)}).\tag{Corollary~\ref{cor:5.2} on the subtree rooted at $v$}
\end{align*}
Summing over all $v\in V_i$ we get
\begin{align*}
\sum_{v' \in (V_{d'} \cup V_{d'+1} \cup \ldots \cup V_{d''-1}) } p_{v'} \cdot |\hat{A_{v'}}(\{\next(v')\})|
&\le \sum_{v\in V_{d'}} p_v \cdot O(\sqrt{d''-d'} \cdot q_v \cdot \sqrt{\ln(e/q_v)})\\
& \le O(\sqrt{d''-d'} \cdot p \cdot \sqrt{\ln(e/p)})
\end{align*}
where the last inequality follows from the fact that $x\cdot \sqrt{\ln(e/x)}$ is concave in $[0,1]$ and the expectation of $q_v$ equals $p$.  This completes the proof.
\end{proof}

Next, we use Corollary~\ref{cor:5.3} to get a bound on all  levels of the Fourier transform.
\begin{theorem}[Level-$\l$ Inequality for Decision Trees]\label{thm:main-decision-trees}
	Let $T$ be a decision tree of depth $d$ computing a Boolean function $f:\{-1,1\}^n \to \{0,1\}$  with $p=\Pr[f(x)=1]$. Then, for $\l\le d$,
	$$
	\sum_{S:|S|=\l} |\hat{f}(S)| \le\sqrt{c^\l \cdot \binom{d}{\l}} \cdot p \cdot \prod_{i=0}^{\l-1}\sqrt{ \log(4n^{i}/p)}
	$$
	where $c$ is a universal constant. 
	\end{theorem}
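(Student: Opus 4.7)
The plan is to proceed by induction on $\l$, with the base case $\l=1$ being exactly Theorem~\ref{thm:base} (together with its refinement in Corollary~\ref{cor:5.2}). For the inductive step, the natural starting point is the Decomposition Lemma: writing $\hat{f}(S) = \sum_{j\in S}\sum_{v:\next(v)=j}\hat{B_v}(S\setminus\{j\})\hat{A_v}(\{j\})$, taking absolute values, and swapping the order of summation, I obtain $\sum_{|S|=\l}|\hat{f}(S)| \le \sum_v L_{1,\l-1}(B_v)\cdot |\hat{A_v}(\{\next(v)\})|$. Since $B_v$ is a conjunction of exactly $\text{depth}(v)$ literals, I can compute $L_{1,\l-1}(B_v) = \binom{\text{depth}(v)}{\l-1} p_v$ exactly, giving the central inequality
\[
\sum_{|S|=\l} |\hat{f}(S)| \;\le\; \sum_v \binom{\text{depth}(v)}{\l-1}\, p_v\, |\hat{A_v}(\{\next(v)\})|.
\]

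The next step is to bound the right-hand side using Corollary~\ref{cor:5.3}, which controls the level-$1$ contribution summed over any slab of layers. Two combinatorial identities make the $\sqrt{\binom{d}{\l}}$ factor appear naturally: $\sum_{v\in V_i} p_v = 1$ for each depth $i$, and the hockey-stick identity $\sum_{i=0}^{d-1}\binom{i}{\l-1} = \binom{d}{\l}$. The simplest attempt is Cauchy--Schwarz with weights $\sqrt{p_v\binom{\text{depth}(v)}{\l-1}}$, which yields $\sqrt{\binom{d}{\l}}\cdot\sqrt{\mathbb{E}_\lambda[\sum_i \binom{i}{\l-1}\Delta_i^2]}$, where $\Delta_i$ is the martingale increment of the acceptance probability along a random path. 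Alternatively, partitioning $[0,d)$ into slabs and applying Corollary~\ref{cor:5.3} on each slab (with the weight $\binom{\text{depth}(v)}{\l-1}$ pulled out by its maximum), combined with Abel summation, should yield a bound of the correct shape.

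The main obstacle is to match the full bound $\sqrt{c^\l\binom{d}{\l}}\cdot p\cdot\prod_{i=0}^{\l-1}\sqrt{\log(4n^i/p)}$ rather than a looser variant. The second Cauchy--Schwarz factor $\mathbb{E}[\sum_i\binom{i}{\l-1}\Delta_i^2]$ is only bounded naively by $\binom{d}{\l-1}\,p$ via the martingale identity $\sum_i\mathbb{E}[\Delta_i^2]\le p(1-p)$, which loses the crucial $p\cdot\log$ savings that Theorem~\ref{thm:base} provides at level $1$. To recover the sharp bound, one must instead exploit the tail-style argument behind Corollary~\ref{cor:5.3} directly on the weighted sum: one can relabel the children of each internal node so that all $\hat{A_v}(\{\next(v)\})$ become nonnegative (this preserves $p_v$ and $|\hat{A_v}(\{\next(v)\})|$), reducing the question to the level-$\l$ Fourier mass of a \emph{new} tree computing a different function with the same acceptance probability. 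The extra $\sqrt{\log(4n^{\l-1}/p)}$ factor at each inductive step then emerges when Corollary~\ref{cor:5.3} is applied with the correct "effective" acceptance probability, which accounts for the fact that the relabeled subtrees may have conditional acceptance probabilities as small as roughly $p/n^{\l-1}$. The log factors therefore accumulate multiplicatively over the $\l$ steps of the induction, using concavity of $q\mapsto q\sqrt{\log(e/q)}$ to handle the averaging over paths.
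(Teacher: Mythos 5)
Your proposal and the paper share the same decomposition lemma as a starting point, but your ``central inequality''
\[
\sum_{|S|=\l}|\hat{f}(S)|\;\le\;\sum_v \binom{\mathrm{depth}(v)}{\l-1}\,p_v\,|\hat{A_v}(\{\next(v)\})|
\]
is already too lossy to yield the target bound, and this is the gap. By taking absolute values inside for each individual path indicator $B_v$, you replace $\big|\sum_{|S_1|=\l-1}a_{S_1\cup\{j\}}\hat{B_v}(S_1)\big|$ by $L_{1,\l-1}(B_v)=\binom{\mathrm{depth}(v)}{\l-1}p_v$, a bound that is exactly $\binom{\mathrm{depth}(v)}{\l-1}$ times a probability, \emph{with no square root}. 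Even if you then apply Corollary~\ref{cor:5.3} optimally over depth slabs $[d_{i-1},d_i)$ with $d_i=di/\l$, you obtain roughly $\sum_i \binom{d_i}{\l-1}\sqrt{d_i-d_{i-1}}\cdot p\cdot\mathrm{polylog}\approx\binom{d}{\l-1}\sqrt{d/\l}\cdot p\cdot\mathrm{polylog}\approx\binom{d}{\l}\sqrt{\l/d}\cdot p\cdot\mathrm{polylog}$, which exceeds the desired $\sqrt{\binom{d}{\l}}\cdot p\cdot\mathrm{polylog}$ by about a factor $\sqrt{\binom{d}{\l}}$. You notice a version of this issue when you write that the second Cauchy--Schwarz factor is ``only bounded naively by $\binom{d}{\l-1}p$,'' but the proposed repair---relabeling children so that $\hat{A_v}(\{\next(v)\})\ge0$---does not address it: the cancellations you need live in the coefficients $\hat{B_v}(S_1)$ summed over $v$, not in the signs of $\hat{A_v}$, so the relabeling, which only acts on $\hat{A_v}$, leaves the lossy central inequality unchanged and the ``reduction to the level-$\l$ Fourier mass of a new tree'' is circular.

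The paper's actual proof avoids this by never bounding $\sum_{|S_1|=\l-1}a_{S_1\cup\{j\}}\hat{B_v}(S_1)$ for a single vertex. Instead it fixes signs $a_S$, writes $|\hat{A_v}(j)|=\sum_t 2^{-t}|\hat{A_v}(j)|_t$ in binary, and for each bit position $t$ and each coordinate $j$ aggregates the vertices in a slab into two $\{0,1\}$-valued functions $B^{\pm}_{j,t}=\sum_{v\in V^{\pm}_{j,t}}B_v$, each computable by a depth-$(d''-1)$ decision tree (the $B_v$ on a frontier have disjoint supports). The induction hypothesis is then applied to \emph{these aggregated functions}, giving $\sqrt{\binom{d''-1}{\l-1}}\cdot\Pr[B^{\pm}_{j,t}]\cdot\mathrm{polylog}$---crucially with a square root on the binomial. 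The accounting via Corollary~\ref{cor:5.3} then yields $\sum_i\sqrt{\binom{d_i-1}{\l-1}(d_i-d_{i-1})}=O(\sqrt{\binom{d}{\l}})$. Your approach lacks this aggregation-then-recurse step, which is what preserves the square-root structure; recovering it requires the binary decomposition idea (or an equivalent way to reduce to the inductive hypothesis applied to genuine decision-tree functions, not to per-path indicators).
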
	
\begin{proof}We wish to bound $\sum_{|S|=\l} |\hat{f}(S)|$. Note however, that this is equivalent to bounding  $\sum_{|S|=\l} \hat{f}(S) \cdot a_S$ for all $\{-1,1\}$ valued vectors of coefficients $\{a_S\}_{S:|S|=\l}$. 
	Fix such a vector of coefficients $\{a_S\}_{S:|S|=\l}$.
Using Eq.~\eqref{eq:decomposition Fourier}, we may write
\begin{align*}
	\sum_{S:|S|=\l} \hat{f}(S) \cdot a_S &= 
	\sum_{j}
	\sum_{v:\next(v)=j} \hat{A_v}(\{j\}) \cdot \sum_{S_1:|S_1|=\l-1} a_{S_1 \cup \{j\}} \cdot \hat{B_v}(S_1)
\end{align*}

Fix two depths $d'$ and $d''$ such that $0 \le d' < d''\le d$. We wish to bound the contribution of all nodes $v$ of depth between $d'$ and $d''$ to the above sum.
To do so, we note that the Fourier coefficients $|\hat{A_v}(\{j\})|$ are integer multiples of $2^{-d}$. For each $j\in[n]$ and $t\in \{1,\ldots,d\}$ we consider all vertices $v$ of depth between $d'$ and $d''$ with $\next(v)=j$, for which the $t$-th bit in binary representation of $|\hat{A_v}(j)|$ equals $1$.
	We take 
	$$
	V^{+}_{j, t} = \{v: depth(v)\in [d',d''), \next(v)=j, \hat{A_v}(j)>0, |\hat{A_v}(j)|_{t} = 1\},
	$$ 
	and
	$$
	V^{-}_{j, t} = \{v:  depth(v)\in [d',d''), \next(v)=j, \hat{A_v}(j)<0, |\hat{A_v}(j)|_{t} = 1\}.
	$$
	Let 
	$$B^{+}_{j,t}(x) = \sum_{v\in V^{+}_{j,t}}B_v(x)\qquad \text{and}\qquad B^{-}_{j,t}(x) = \sum_{v\in V^{-}_{j,t}}B_v(x).$$
	Note that both $B^{+}_{j,,t}$ and $B^{-}_{j,t}$ can be computed by decision trees of depth at most $d''-1$.
	Thus, by the induction hypothesis, we have that 
	$$\left|\sum_{S_1:|S_1|=\l-1} a_{S_1 \cup \{j\}} \cdot \hat{B^{+}_{j,t}}(S_1)\right| 
	\le \sqrt{c^{\l-1} \cdot \binom{d''-1}{\l-1}} \cdot 
	\Pr[B^+_{j,t}=1] \cdot \prod_{i=0}^{\l-2}\sqrt{ \ln(e\cdot n^{i}/\Pr[B^{+}_{j,t}])}.$$
	and similarly for $B^{-}_{j,t}$.
	Now,
		\begin{align*}
		&\sum_{j} \sum_{\substack{v:\next(v)=j,\\ depth(v)\in[d',d'')}} \hat{A_v}(j) \cdot 
		\sum_{|S_1|=\l-1} a_{S_1 \cup \{j\}} \cdot \hat{B_v}(S_1)\\
		&=\sum_{j} \sum_{t} 2^{-t} \cdot 	
		\sum_{S_1}  a_{S_1 \cup \{j\}} \cdot 
		\left(\sum_{v\in V^{+}_{j,t}} \hat{B_v}(S_1) 
		-  \sum_{v\in V^{-}_{j,t}} \hat{B_v}(S_1)\right)\\
		&=\sum_{j} \sum_{t} 2^{-t} \cdot 	
		\sum_{S_1}  a_{S_1 \cup \{j\}} \cdot 
		\left(\hat{B^+_{j,t}}(S_1) 
		-  \hat{B^-_{j,t}}(S_1) \right)\\
		&\le \sum_{j} \sum_{t} 2^{-t} \cdot \sqrt{c^{\l-1} \cdot \binom{d''-1}{\l-1}}\cdot \left(
		\begin{aligned} 
		\Pr[B^{+}_{j,t}]\cdot \prod_{i=0}^{\l-2}\sqrt{ \ln(e\cdot n^{i}/\Pr[B^{+}_{j,t}])} \;\; +\;\;\\
		\Pr[B^{-}_{j,t}] \cdot \prod_{i=0}^{\l-2}\sqrt{ \ln(e\cdot n^{i}/\Pr[B^{-}_{j,t}])} \end{aligned}
		\right)
	\end{align*}
Since $x\cdot \prod_{i=0}^{\l-2}\sqrt{ \ln(e \cdot n^{i}/x)}$ is monotone increasing for $x\in [0,1]$,	the contribution of $(j,t)$ with $\Pr[B^+_{j,t}]\le \frac{p}{n}$ is at most $\sqrt{c^{\l-1} \cdot \binom{d''-1}{\l-1}} \cdot p \cdot  \prod_{i=0}^{\l-2}\sqrt{ \ln( e\cdot n^{i+1}/p)}$.
	Similarly, the contribution of $(j,t)$ with $\Pr[B^-_{j,t}]\le \frac{p}{n}$ is at most $\sqrt{c^{\l-1} \cdot \binom{d''-1}{\l-1}} \cdot p \cdot  \prod_{i=0}^{\l-2}\sqrt{ \ln(e \cdot n^{i+1}/p)}$.
	For $(j,t)$ and $sg \in \{+,-\}$ with $\Pr[B^{sg}_{j,t}] \ge \frac{p}{n}$, we have that $\prod_{i=0}^{\l-2}\sqrt{ \ln(e \cdot n^{i}/\Pr[B^{sg}_{j,t}])} \le \prod_{i=0}^{\l-2}\sqrt{ \ln(e \cdot n^{i+1}/p)}$ so we can conclude that
	\begin{align*}
		\qquad&\sum_{j} \sum_{\substack{v:\next(v)=j,\\ depth(v)\in[d',d'')}} \hat{A_v}(j) \cdot 
		\sum_{|S_1|=(\l-1)} a_{S_1 \cup \{j\}} \cdot \hat{B_v}(S_1)\\
		&\le \sqrt{c^{\l-1} \cdot \binom{d''-1}{\l-1}} \cdot \prod_{i=0}^{\l-2}\sqrt{ \ln(e \cdot n^{i+1}/p)} \cdot \left(p + p  + \sum_{j}\sum_{t} 2^{-t}(\Pr[B^{+}_{j,t}=1]+\Pr[B^{-}_{j,t}=1])\right)\\
		&= \sqrt{c^{\l-1} \cdot \binom{d''-1}{\l-1}}\cdot \prod_{i=0}^{\l-2}\sqrt{ \ln(e \cdot n^{i+1}/p)}  \cdot \left(2p + \sum_{j}\sum_{\substack{v:\next(v)=j,\\ depth(v)\in[d',d'')}} |\hat{A_v}(j)| \cdot p_v\right)
		\\		&\le \sqrt{c^{\l-1} \cdot \binom{d''-1}{\l-1}}\cdot \prod_{i=0}^{\l-2}\sqrt{ \ln(e \cdot n^{i+1}/p)} \cdot \left(2p + c' \cdot \sqrt{d''-d'}\cdot p\cdot \sqrt{\ln(e/p)}\right)\tag{Corollary~\ref{cor:5.3}}\\
		&\le \sqrt{c^{\l-1} \cdot \binom{d''-1}{\l-1}}\cdot  \prod_{i=0}^{\l-1}\sqrt{ \ln(e \cdot n^{i}/p)} \cdot (c'+2) \cdot \sqrt{d''-d'}\cdot p\;.\end{align*}
To finish the proof we define the sequence of depths $d_0, d_1, d_2, \ldots d_\l$ by $d_i = \lfloor{(d\cdot i)/\l\rfloor}$ for $i=0,\ldots, \l$.
We have that 
\begin{align*}
\sum_{S:|S|=\l} \hat{f}(S)\cdot a_S &= \sum_{i=1}^{\l}
		 \sum_{j} \sum_{\substack{v:\next(v)=j,\\ depth(v)\in[d_{i-1},d_i)}} \hat{A_v}(j) \cdot 
		\sum_{|S_1|=(\l-1)} a_{S_1 \cup \{j\}} \cdot \hat{B_v}(S_1)\\
		&\le 
		\sum_{i=1}^{\l} \sqrt{c^{\l-1} \cdot \binom{d_i-1}{\l-1}} \prod_{i=0}^{\l-1}\sqrt{ \ln(e \cdot n^{i}/p)} \cdot (c'+2) \cdot \sqrt{d_i-d_{i-1}}\cdot p \\
		&\le  \sqrt{c^{\l-1}} \cdot \prod_{i=0}^{\l-1}\sqrt{ \ln(e \cdot n^{i}/p)} \cdot (c'+2) \cdot p \cdot \sum_{i=1}^{\l} \sqrt{\binom{d_i-1}{\l-1} \cdot (d_i-d_{i-1})}
\end{align*}
where the only term that depends on the sequence of depths is the sum $\sum_{i=1}^{\l} \sqrt{\binom{d_i-1}{\l-1} \cdot (d_i-d_{i-1})}$.
A small calculation shows that 
\begin{align*}\sum_{i=1}^{\l} \sqrt{\binom{d_i-1}{\l-1} \cdot (d_i-d_{i-1})} &\le 
\sum_{i=1}^{\l} 
\sqrt{\binom{d-1}{\l-1} \cdot \left(\frac{d_i}{d}\right)^{\l-1}\cdot  \left(1+\frac{d}{\l}\right)} \\&\le \sqrt{2\cdot \binom{d}{\l}}\cdot \sum_{i=1}^{\l}\sqrt{i/\l}^{\l-1} \le O\left(\sqrt{\binom{d}{\l}}\right),
\end{align*}
which completes the proof provided that $c$ is a big enough constant.
\end{proof}

\begin{corollary}
	For any decision tree $f$ of depth $d$ on $n$ variables, and every $\l\le d$, we have
	$$
	\sum_{S:|S|=\l} |\hat{f}(S)| \le \sqrt{d^{\l}\cdot  O(\log n)^{\l-1}} \;,
	$$
\end{corollary}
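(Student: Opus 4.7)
The plan is to derive this corollary directly from Theorem~\ref{thm:main-decision-trees}, which gives the parametric bound
\[
\sum_{S:|S|=\l} |\hat{f}(S)| \le \sqrt{c^{\l} \cdot \binom{d}{\l}} \cdot p \cdot \prod_{i=0}^{\l-1}\sqrt{\log(4n^{i}/p)},
\]
and then simplify the right-hand side uniformly in $p = \Pr[f(x)=1]$.

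First, I would apply $\binom{d}{\l}\le d^{\l}/\l!$ to convert the combinatorial factor into $d^{\l}$ at the cost of an $\l!$ in the denominator. The goal is to show that the remaining quantity $p^{2}\prod_{i=0}^{\l-1}\log(4n^{i}/p)$ is bounded by $\l!\cdot O(\log n)^{\l-1}$, so that the $\l!$'s cancel and we are left exactly with $d^{\l}\cdot O(\log n)^{\l-1}$ inside the square root.

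Second, I would isolate the $i=0$ term from the product. The factor $p\cdot\sqrt{\log(4/p)}$ is uniformly $O(1)$ for $p\in(0,1]$ (maximum at $p=1$, value $\sqrt{\log 4}$), which absorbs the first logarithm and one copy of $p$ into a constant. For the remaining indices $i\ge 1$, I would use $\log(4n^{i}/p)\le(i+1)\bigl(\log(4n)+\log(4/p)\bigr)$; the product of the $(i+1)$ factors gives $\l!/1=\l!$, and the remaining $\prod_{i=1}^{\l-1}\bigl(\log(4n)+\log(4/p)\bigr) \le \bigl(\log(4n)+\log(4/p)\bigr)^{\l-1}$ will be absorbed together with the leftover $p$. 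Specifically, writing $q=\log(4/p)$, the quantity $p\cdot(\log(4n)+q)^{(\l-1)/2}$ is $O(\log n)^{(\l-1)/2}$ in the main regime $q\le \log n$, while for $q>\log n$ the exponential decay $p=4e^{-q}$ dominates the polynomial growth $(q)^{(\l-1)/2}$, keeping the bound of the same order up to absolute constants.

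The main obstacle is this last point: making the tradeoff between the decay of $p$ and the growth of the $\log(4n^{i}/p)$ factors rigorous uniformly over all $p\in(0,1]$. I expect to handle it cleanly by distinguishing two ranges of $p$ (say $p\ge 1/n$ versus $p<1/n$) and in each applying the appropriate bound; the constants produced along the way, including the $c^{\l}$ from the theorem and the $e^{\l}$ that could come from Stirling, are all of the form $O(1)^{\l}$ and can be absorbed into the constant hidden in $O(\log n)^{\l-1}$ (assuming $n$ exceeds an absolute constant; otherwise the statement is trivial). Once these estimates are in place, the square root of $d^{\l}\cdot O(\log n)^{\l-1}$ drops out, completing the proof.
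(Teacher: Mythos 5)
Your plan takes a genuinely different route from the paper. The paper observes that $p \mapsto p\cdot\prod_{i=0}^{\l-1}\sqrt{\log(4n^i/p)}$ is \emph{monotone increasing} on $(0,1]$ (for the relevant parameter ranges), so the worst case is $p=1$; at $p=1$ the argument of each logarithm is $O(n^{\l})$, the $(i+1)$ bound yields $(\l-1)!$, and the $\binom{d}{\l}$ term absorbs it cleanly. You instead propose a direct maximization over $p$, trading off the decay of $p$ against the growth of the logarithms.

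There is a genuine gap in the tail estimate. After applying $\log(4n^i/p)\le(i+1)(\log(4n)+q)$ with $q=\log(4/p)$ and extracting the $\l!$, you are left to show that $p\cdot(\log(4n)+q)^{\l-1}$ (or its square-root version; there is also a notational inconsistency in your write-up between working with $p\prod\sqrt{\log}$ and its square $p^2\prod\log$, which should be cleaned up) is $O(\log n)^{\l-1}$ uniformly in $p$. The claim that ``exponential decay dominates polynomial growth'' for $q>\log n$ is false once $\l$ is allowed to grow: $(\log 4n + q)^{\l-1}$ is a monomial of \emph{degree $\l-1$}, and $\max_{q\ge 0}e^{-q}(\log 4n+q)^{\l-1}$ is attained at $q\approx \l-1-\log 4n$ with value $\approx 4n\,(\l-1)^{\l-1}e^{-(\l-1)}\approx 4n(\l-1)!/\sqrt{2\pi\l}$ by Stirling. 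Once $\l$ exceeds $\Theta(\log n)$ this quantity is not $(O(\log n))^{\l-1}$ for any fixed implied constant -- the required constant $C$ must grow with $\l$. Concretely, with $\log n$ held fixed and $\l\to\infty$, $(\l-1)!/(C\log n)^{\l-1}\to\infty$. So the lossy step $\log(4n^i/p)\le(i+1)(\log(4n)+q)$ overweights the contribution of $q$ to the higher-index factors (in truth $\log(4n^i/p)=i\log n + \log(4/p)$ is dominated by $i\log n$ once $i$ is large, not by $q$), and the worst-$p$ analysis then produces a $(\l-1)!$ that the bound cannot absorb. The corollary itself is still true in that range (a non-lossy treatment of $\prod_{i\ge 1}(i\log n + q)$ as a ratio of Gamma functions recovers it), and the paper's monotonicity lemma sidesteps the issue entirely by showing the small-$p$ regime is never the maximizer. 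Either switch to the paper's monotonicity argument, or replace the $(i+1)(\log(4n)+q)$ bound with a sharper estimate of the rising-factorial product before maximizing over $p$.
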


\begin{proof}
	Using Theorem~\ref{thm:main-decision-trees} and relying on the monotonicity of $x\cdot \prod_{i=0}^{\l-1}\sqrt{\ln(e\cdot n^i/x)}$ in $[0,1]$ we get 
	\begin{align*}\sum_{S:|S|=\l} |\hat{f}(S)| &\le \sqrt{c^{\l}\cdot \binom{d}{\l}}\cdot \prod_{i=0}^{\l-1}\sqrt{\ln(e\cdot n^i)}\\
	&\le \sqrt{O(d/\l)^\l \cdot (\l-1)! \cdot \ln^{\l-1} (e\cdot n)}\\
	&\le \sqrt{d^\l \cdot O(\log n)^{\l-1}}\;.\qedhere\end{align*}\end{proof}

For large $\l$ (namely $\l = \Omega(\sqrt{d/\log n})$) we achieve better bounds by a much simpler argument.
\begin{claim}\label{claim:d choose l}
	For any decision tree $f$ of depth $d$, and every $\l\le d$, we have
	$$
	\sum_{S:|S|=\l} |\hat{f}(S)| \le \binom{d}{\l}\;.
	$$
\end{claim}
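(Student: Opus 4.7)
\bigskip

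\noindent\textbf{Proof proposal for Claim~\ref{claim:d choose l}.}

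The plan is to exploit the standard ``leaf decomposition'' of a decision tree and bound each Fourier coefficient uniformly using the multiplicity of the queried variables. Write $f(x) = \sum_{\lambda} f(\lambda) \cdot B_\lambda(x)$, where the sum ranges over the leaves $\lambda$ of the tree and $B_\lambda$ is the indicator of the event that $x$ follows the path to $\lambda$. Let $Q_\lambda \subseteq [n]$ denote the set of variables queried on the path to $\lambda$, let $d_\lambda = |Q_\lambda| \le d$ be its depth, and let $\epsilon_i^\lambda \in \{-1,+1\}$ be the value the query to $x_i$ must take on this path. Then
$$
B_\lambda(x) = \prod_{i \in Q_\lambda} \frac{1 + \epsilon_i^\lambda x_i}{2} = 2^{-d_\lambda} \sum_{T \subseteq Q_\lambda} \Big(\prod_{i \in T} \epsilon_i^\lambda\Big) \prod_{i \in T} x_i,
$$
so $\hat{B_\lambda}(T) = \pm 2^{-d_\lambda}$ when $T \subseteq Q_\lambda$ and $\hat{B_\lambda}(T) = 0$ otherwise.

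The first step is to apply the triangle inequality to $\hat{f}(S) = \sum_\lambda f(\lambda) \cdot \hat{B_\lambda}(S)$. Since $|f(\lambda)| \le 1$ (the leaf values are in $\{0,1\}$, or $\{-1,1\}$ up to an affine rescaling that doubles the bound), we obtain
$$
|\hat{f}(S)| \le \sum_{\lambda\,:\, S \subseteq Q_\lambda} 2^{-d_\lambda}.
$$
The second step is to sum over all subsets $S$ of size $\ell$ and swap the order of summation:
$$
\sum_{S:|S|=\ell} |\hat{f}(S)| \le \sum_{\lambda} 2^{-d_\lambda} \cdot \big|\{S \subseteq Q_\lambda : |S| = \ell\}\big| = \sum_{\lambda} 2^{-d_\lambda} \binom{d_\lambda}{\ell}.
$$

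The third step is to bound this sum. Using $d_\lambda \le d$ gives $\binom{d_\lambda}{\ell} \le \binom{d}{\ell}$, and the classical identity $\sum_{\lambda} 2^{-d_\lambda} = 1$ (the probabilities of reaching the leaves under a uniform input sum to one) yields $\sum_{S:|S|=\ell} |\hat{f}(S)| \le \binom{d}{\ell}$, as desired. There is essentially no obstacle here; the only mild care needed is handling any leaves shallower than $d$ (either pad the tree out so that every leaf has depth exactly $d$, in which case $d_\lambda = d$ and there are $2^d$ leaves each contributing $2^{-d}\binom{d}{\ell}$, or use the identity above directly). This completes the outline.
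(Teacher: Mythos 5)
Your proof is correct and is essentially the same as the paper's: the paper phrases the argument as an expectation $\E_\lambda[f(\lambda)\prod_{i\in S}(v_\lambda)_i]$ over a uniformly-reached random leaf $\lambda$, which is precisely your weighted sum $\sum_\lambda 2^{-d_\lambda}(\cdots)$, and then bounds the number of size-$\ell$ subsets of queried variables by $\binom{d}{\ell}$, exactly as you do. (One tiny slip: if $f$ is $\{-1,1\}$-valued then $|f(\lambda)|=1$ already, so no rescaling or doubling is needed.)
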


\begin{proof}
As in the proof of Theorem~\ref{thm:base},
	for every leaf $\lambda$, we associate a vector $v_{\lambda} \in \{-1,0,1\}^n$. We have $(v_{\lambda})_i = 1$ if the $i$-th bit was queried on the path to $v$ and equaled $1$,  $(v_{\lambda})_i = -1$ if the $i$-th bit was queried on the path to $v$ and equaled $-1$,  and $(v_{\lambda})_i = 0$  otherwise.

For every fixed set $S\subseteq [n]$ the Fourier coefficient $\hat{f}(S)$ equals
$$
\hat{f}(S) = \E_{x}\left[f(x) \cdot \prod_{i\in S} x_i\right] = \E_{\lambda}\left[f(\lambda) \cdot \E_{x}\bigg[\prod_{i\in S}x_i \;|\;\text{$x$ leads to $\lambda$}\bigg]\right] = 
\E_{\lambda}\left[f(\lambda) \cdot \prod_{i\in S} (v_{\lambda})_i\right]
$$
Thus,
$$
\sum_{S:|S|=\l}{|\hat{f}(S)|} = \sum_{S:|S|=\l} \left|\E_{\lambda}\bigg[f(\lambda) \cdot \prod_{i\in S} (v_{\lambda})_i\bigg]\right| \le 
\sum_{S:|S|=\l}\E_{\lambda}\left[\bigg|\prod_{i\in S} (v_{\lambda})_i\bigg|\right]
=\E_{\lambda}\left[\sum_{S:|S|=\l} \bigg|\prod_{i\in S} (v_{\lambda})_i\bigg|\right]
$$
Since every leaf $\lambda$ has at most $d$ nonzero coordinates in $v_{\lambda}$, there are at most $\binom{d}{\l}$ of the subsets $S$ of size $\l$ with $\left|\prod_{i\in S} (v_{\lambda})_i\right|=1$.
This shows that
$\sum_{S:|S|=\l}{|\hat{f}(S)|} \le \binom{d}{\l}.$
\end{proof}
\subsection{Lower Bounds on the $L_{1,\l}$ of depth-$d$ decision trees}\label{sec:examples}
We present examples demonstrating the tightness of our bounds on the $L_{1,\ell}(\cdot)$ of depth-$d$ decision trees, for small $\ell$. In addition, our latter two examples show that one cannot extend  the bound $L_{1,\ell}(f) \le \sqrt{\binom{d}{\ell}}$ from the non-adaptive case (i.e., $d$-juntas) to the adaptive case (i.e., depth-$d$ decision trees). That is, we show  that one must incur a multiplicative factor of roughly $\sqrt{(\log n)^{\ell-1}}$ going from the non-adaptive to the adaptive case.
\begin{example}
	The Majority of $d$ variables can be computed by a depth $d$ decision tree. This function has $L_{1,\l}(\mathsf{MAJ}_d) = \frac{1}{\poly(\l)} \cdot \sqrt{\binom{d}{\l}}$ for odd $\l$ (cf.~\cite[Chapter~5.3]{OdonnellBook}).
\end{example}
\begin{example}
	The address function on $n = 2^{d}+d$ variables, denoted by $\mathsf{Add_{d}}$ can be computed by a depth $d+1$ decision tree.
	In the address function we divide the input to two parts: the first $d$ bits $(x_{1}, \ldots, x_{d})$, called the ``index'', and the latter $2^{d}$ bits $(y_1, \ldots, y_{2^d})$, called the ``array''. We treat  $x$  as representing an index between $1$ and $2^{d}$ that points to the array, and return the coordinate $y_x$.
	It is easy to see that $L_{1,\l}(\mathsf{Add}_{d})=\binom{d}{\l-1}$ exactly. 
	This may seem to rule out any significant improvement over the simple upper bound given in Claim~\ref{claim:d choose l}, namely, $\binom{d}{\ell}$. Note, however, that in the address function $d=\lfloor{\log n\rfloor}$, so in fact this example is consistent with an asymptotic behavior of $\sqrt{\binom{d}{\l}\cdot O(\log n)^{\l-2}}$.
\end{example}
\begin{example}
Let $D = 2d$. Take the address function $\mathsf{Add}_d$ and replace each variable in the array part with the Majority function on $d$ distinct new variables. Denote the resulting function by $f$.
Then, $f$ is a function on $d + 2^d\cdot d$ variables, which can be computed by a depth-$2d$ decision tree, and has $L_{1,\l}(f) \ge \Omega(\sqrt{d} \cdot \binom{d}{\l-1})$. Again, as in the previous example $d = \Theta(\log n)$ so the behavior is consistent with a $\sqrt{\binom{d}{\l}\cdot O(\log n)^{\l-1}}$
bound.
\end{example}
%We leave open the following question: are there examples of Boolean functions $f$ on $n$ variables with $L_{1,\ell}(f)$ at least $\sqrt{\binom{d}{\ell} \cdot \Omega(\log n)^{\ell-1}$, where the function $f$ has decision tree depth $d=\omega(\log n)$?

\section{Open Questions}
We would like to highlight several open questions that were mentioned throughout the manuscript. The first is stated as Conjecture~\ref{conj}, namely what are the tight bounds on the $L_{1,\ell}(\cdot)$ of shallow decision trees?
Our conjectured bounds would imply a $\tilde{\Omega}(N^{1-1/k})$ lower bound on the randomized query complexity of the $k$-fold Rorrelation problem, which would be tight due to the upper bounds in \cite{AA15}.

The second question asks whether one can use tools from stochastic calculus to analyze $\E[f(\cDUk)] - \E[f(\cU_k)]$. Such analysis could potentially rely only on level-$k$ bounds on the Fourier spectrum of $f$ (and its restrictions) as done in \cite{RT19} for $2$-fold Forrelation.

The third question asks whether one can exhibit an explicit family of orthogonal matrices $\{U_{N}\}_{N}$ for infinitely many input lengths $N$, such that (1) $U_N$ can be implemented by $\polylog(N)$ size quantum circuits and (2) $U_N$  are good orthogonal matrices as in Definition~\ref{def:good}.
Our current separation uses random orthogonal matrices, that are non-explicit, and cannot be implemented efficiently.

\subsection*{Acknowledgements}
I would like to thank 
Scott Aaronson, 
Rotem Arnon-Friedman, 
Adam Bouland,
Uma Girish,
Tarun Kathuria, 
Chinmay Nirkhe,
Prasad Raghavendra,
Ran Raz, 
Li-Yang Tan,
and 
Umesh Vazirani for very helpful discussions.

\newcommand{\etalchar}[1]{$^{#1}$}

\appendix
\section{Proof of Claim~\ref{claim:AA for Rorrelation}}\label{app:AA}
\begin{proof}
Suppose that $N$ is a power of two, and denote by $n = \log_2(N)$.
We follow the algorithm suggested by Aaronson and Ambainis \cite{AA15} but replace the Hadamard transform (in some of the places) with the orthogonal transform $\U$.

Let $H^{\otimes n}$ be the Hadamard transform on $\R^N$, 
let $\U$ be the orthogonal transform on $\R^N$ from the definition of $k$-fold Rorrelation. For $i\in [k]$ let $U_{z^{(i)}}$ be the query transformation that maps $\left\vert
j\right\rangle $ to $z^{(i)}_j  \cdot \left\vert j\right\rangle $ for all $j\in [N]$ (recall that $z^{(i)} \in \{-1,1\}^N$, thus this is a unitary transformation).

We start with the initial state $\left\vert 0\right\rangle^{\otimes n}$, in addition to a control qubit in the state $\left\vert +\right\rangle =\frac
{\left\vert 0\right\rangle +\left\vert 1\right\rangle }{\sqrt{2}}$. 
Then, conditioned on the control qubit being $\left\vert 0\right\rangle $, we apply
the following sequence of operations to the initial state:
\[
H^{\otimes n} \rightarrow U_{z^{(1)}}\rightarrow
\U^{\T} \rightarrow U_{z^{(2)}}\rightarrow
\cdots 
\U^{\T} \rightarrow U_{z^{(\left\lceil k/2\right\rceil)}}\rightarrow 
\U^{\T}
\]
Meanwhile, conditioned on the control qubit being $\left\vert 1\right\rangle
$, we apply the following sequence of operations:
\[
H^{\otimes n} \rightarrow U_{z^{(k)}}\rightarrow
\U \rightarrow U_{z^{(k-1)}}\rightarrow
\cdots 
\U \rightarrow U_{z^{(\left\lceil k/2\right\rceil+1)}}
\]
Finally, we measure the control qubit in the $\{\ket{+},\ket{-}\}$ basis, and
accept if and only if we find it in the state
$\ket{+} $.

It remains to show that the acceptance probability equals $\frac{1+\phi_{\U}(z^{(1)}, \ldots, z^{(k)})}{2}$, as we do next:
\begin{itemize}
	\item Conditioned on the control qubit being $\left\vert 0\right\rangle
$, the quantum state can be written in vector form as 
$$
\vec{a} = \U^{\T} \cdot U_{z^{(\left\lceil k/2\right\rceil)}} \cdot \U^{\T} \cdots \U^{\T} \cdot U_{z^{(2)}} \cdot \U^{\T} \cdot \vec{v}_{z^{(1)}}
$$
where $\vec{v}_{z^{(1)}}$ is the $N$-dimensional vector with $i$-th entry $\frac{1}{\sqrt{N}} \cdot z^{(1)}_i$.
\item Conditioned on the control qubit being $\left\vert 1\right\rangle
$, the quantum state can be written in vector form as 
$$
\vec{b} = U_{z^{(\left\lceil k/2\right\rceil+1)}} \cdot \U \cdots \U \cdot U_{z^{(k-1)}} \cdot \U \cdot \vec{v}_{z^{(k)}}
$$
where $\vec{v}_{z^{(k)}}$ is the $N$-dimensional vector with $i$-th entry $\frac{1}{\sqrt{N}} \cdot z^{(k)}_i$.
\end{itemize}
Overall, our combined quantum state is  
\begin{align*}
&\frac{1}{\sqrt{2}}\sum_{i=1}^{N}a_i \cdot \ket{i}\ket{0} + 
\frac{1}{\sqrt{2}}\sum_{i=1}^{N}b_i \cdot \ket{i}\ket{1} = 
\frac{1}{2}\sum_{i=1}^{N}(a_i+b_i) \cdot \ket{i}\ket{+} + 
\frac{1}{2}\sum_{i=1}^{N}(a_i-b_i) \cdot \ket{i}\ket{-}\end{align*}
Measuring the control bit in the $\{\ket{+},\ket{-}\}$ basis yields $\ket{+}$ with probability 
$$\frac{1}{4}\sum_{i=1}^{N} (a_i+b_i)^2 = \frac{1}{4}\left(\|a\|_2^2+\|b\|_2^2 + 2\langle{\vec{a}, \vec{b}\rangle} \right)$$
Observe that both $\vec{a}$ and $\vec{b}$ are unit vectors, since they are generated by applying orthogonal matrices to the unit vectors $\vec{v}_{z^{(1)}}$ and $\vec{v}_{z^{(k)}}$, correspondingly.
Furthermore, observe that 
\begin{align*}
\langle{\vec{a}, \vec{b}\rangle} 
&= \vec{a}^{\T} \cdot \vec{b}\\  &= 
\left(
\vec{v}_{z^{(1)}}^{\T}
\cdot  \U 
\cdot U_{z^{(2)}}
\cdot \U \cdots \U \cdot U_{z^{(\lceil{k/2\rceil})}} \cdot \U
\right)\cdot \left(U_{z^{(\left\lceil k/2\right\rceil+1)}} \cdot \U \cdots \U \cdot U_{z^{(k-1)}} \cdot \U \cdot \vec{v}_{z^{(k)}}\right)
 \\&= \frac{1}{N}\cdot \sum_{i_1, \ldots, i_k} z^{(1)}_{i_1} \cdot \U_{i_1,i_2} \cdot z^{(2)}_{i_2} \cdot  \U_{i_2,i_3}  \cdots \U_{i_{k-1},i_k} \cdot z^{(k)}_{i_k}\\
 &=\phi_{\U}(z^{(1)}, \ldots, z^{(k)}).
\end{align*}
Thus, overall we got that the algorithm's acceptance probability is 
\[
\frac{1+1+2 \langle{\vec{a}, \vec{b}\rangle} }{4}=\frac{1+\phi_{\U}(z^{(1)}, \ldots, z^{(k)})}{2}.\qedhere
\]\end{proof}

\section{Pseudorandomness of $\cDUk$ Based on Conjecture~\ref{conj}}
\label{app:B}
\begin{theorem}\label{thm:main-conj}
Assume that Conjecture~\ref{conj} holds.
Suppose that $\U$ is a good orthogonal matrix and $\cDUk$ is defined with respect to $\U$. 
Let $F$ be a {randomized} decision tree of depth $d$ over $kN$ variables.
	Then, $$\E[F(\cU_k) - F(\cDUk)] \le 	O\left(\frac{d \cdot  (\log kN)^{2-1/k}}{ N^{1-1/k}}\right)^{k/2}.$$
\end{theorem}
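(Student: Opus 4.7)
The plan is to mimic the Fourier-analytic scheme of Theorem \ref{thm:main}, but to replace the two-regime decision-tree bound (Equations \eqref{eq:first level-l inequality} and \eqref{eq:second level-l inequality}) with the single uniform bound $L_{1,\ell}(F) \le \sqrt{\binom{d}{\ell}\cdot O(\log kN)^{\ell-1}}$ provided by Conjecture \ref{conj}. The conjectural bound dominates both $\sqrt{O(d)^\ell \cdot O(\log kN)^{\ell-1}}$ and $\binom{d}{\ell}$ for every relevant $\ell$, so there is no need to split the sum into low- and high-degree regimes; a single clean estimate will produce the stronger quantitative guarantee.

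Concretely, I would start from the Fourier expansion
$$|\E[F(\cU_k)] - \E[F(\cDUk)]| \le \sum_{\ell=k}^{kN} \left(\max_{|S|=\ell} |\hat{\cDUk}(S)|\right) \cdot L_{1,\ell}(F),$$
where the sum begins at $\ell=k$ because $\hat{\cDUk}(S)=0$ for every non-empty $S$ with $|S|<k$, by Lemma \ref{lemma:good matrices imply moment bounds}. Substituting the moment bound $(c\ell\log N/N)^{\ell(1-1/k)/2}$ together with the conjectured Fourier bound, and using $\binom{d}{\ell} \le (ed/\ell)^\ell$, each summand collapses (up to absolute constants) to
$$\left[\left(\frac{c\ell\log N}{N}\right)^{1-1/k} \cdot \frac{O(d\log kN)}{\ell}\right]^{\ell/2} \;\le\; \left[\frac{O\bigl(d\cdot (\log kN)^{2-1/k}\bigr)}{\ell^{1/k}\cdot N^{1-1/k}}\right]^{\ell/2}.$$

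Setting $r := O(d\cdot(\log kN)^{2-1/k}/N^{1-1/k})$, the bracketed quantity is at most $r$ for every $\ell\ge k$ (since $\ell^{1/k}\ge 1$). If $r$ is not bounded away from $1$, the asserted bound is vacuous (as $|\E[F(\cU_k)]-\E[F(\cDUk)]|\le 1$ trivially); otherwise, the tail $\sum_{\ell\ge k} r^{\ell/2}$ is a decreasing geometric progression, dominated up to a constant by its first term $r^{k/2}$. This yields the claimed estimate. The main, and essentially only, calculation to verify is the geometric decay of the tail, which follows once one checks that the exponent of $\log kN$ ends up at $2-1/k$ and the exponent of $N$ at $1-1/k$ in the bracket. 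Everything else is a direct substitution into the framework already used in the proof of Theorem \ref{thm:main}.
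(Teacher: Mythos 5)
Your proposal is correct and follows essentially the same route as the paper's proof of Theorem~\ref{thm:main-conj}: Fourier expansion starting at $\ell=k$, substitute the moment bound and the conjectured $L_{1,\ell}$ bound, simplify via $\binom{d}{\ell}\le(ed/\ell)^\ell$, drop the harmless $\ell^{-1/k}$ factor, and bound the resulting geometric series by its first term under the WLOG that the ratio is small. (One cosmetic note: the conjectured bound is \emph{dominated by}, not ``dominates,'' the two earlier bounds, but this aside does not affect the argument.)
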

In other words, assuming the conjecture, for good $\U$, $\cDUk$ is pseudorandom against any depth $N^{1-1/k}/\polylog(kN)$ randomized decision tree. 

\begin{proof}
Without loss of generality $\frac{d \cdot  (\log kN)^{2-1/k}}{ N^{1-1/k}} \ll 1$ as otherwise the claim is trivial. We have
\begin{align*}
|\E[F(\cU_k) - F(\cDUk)]|&= 
|\sum_{S\neq \emptyset} {\hat{F}(S) \cdot \hat{\cDUk}(S)}| \\
&\le \sum_{\ell=k}^{kN} \sum_{|S|=\ell}{|\hat{F}(S) \cdot \hat{\cDUk}(S)|}\\		
&\le \sum_{\ell=k}^{kN} \left(\max_{|S|=\ell} |\hat{\cDUk}(S)|\right) \cdot \sum_{|S|=\ell}{|\hat{F}(S)|}\\
&\le \sum_{\ell=k}^{kN}	\left(\frac{c \cdot \ell \cdot \ln N}{N}\right)^{\ell\cdot (1-1/k)/2} \cdot \sqrt{\binom{d}{\ell}\cdot O(\log kN)^{\ell}} \\
&\le \sum_{\ell=k}^{kN}	O\left(\left(\frac{\ell \cdot \ln N}{N}\right)^{1-1/k} \cdot \frac{d\cdot \log (kN)}{\ell} \right)^{\ell/2}\\
&\le \sum_{\ell=k}^{kN}	O\left(\frac{d \cdot (\log kN)^{2-1/k}}{N^{1-1/k}}\right)^{\ell/2} \\
&\le O\left(\frac{d \cdot (\log kN)^{2-1/k}}{N^{1-1/k}}\right)^{k/2}\;.
\qedhere
\end{align*}
\end{proof}

Based on that, the lower bound on the randomized decision tree complexity of $k$-fold Rorrelation can be improved to $N^{1-1/k}/(k\cdot \polylog(N))$, by simply replacing Theorem~\ref{thm:main} with Theorem~\ref{thm:main-conj} in the proof of Claim~\ref{claim:hard-dist}.
\end{document}